\newif\ifreport\reporttrue
\documentclass[journal]{IEEEtran}

\usepackage{setspace}
\usepackage{cite}
\usepackage{graphicx}
\usepackage{caption}
\usepackage{subcaption}
\usepackage[cmex10]{amsmath}
\usepackage{mathtools}
\usepackage{amsthm}
\usepackage{amsfonts}
\usepackage{amssymb}
\usepackage{algorithm}
\usepackage{algorithmic}
\usepackage{bm,xstring,hyperref}
\usepackage{fixltx2e}
\usepackage{tikz}
\usetikzlibrary{decorations.pathreplacing}



\newcommand{\EE}{\mathbb{E}}

\usepackage{color}

\def\blue{\color{black}}

\newcommand{\ignore}[1]{}
\usepackage[singlelinecheck=false]{caption}

\newtheorem{lemma}{Lemma}

\newtheorem{theorem}{Theorem}
\newtheorem{corollary}{Corollary}
\theoremstyle{definition}




\begin{document}
\title{Optimal Sampling of Brownian Motion for Real-time Monitoring} 
\title{Beyond the Age-of-information: Causal Sampling and  Estimation of Brownian Motion} 

\title{From Age-of-information to MMSE: Real-time Sampling of Brownian Motion} 

\title{Age-of-Information and Tracking of Wiener Process over Channel with Random Delay}

\title{Remote Estimation of the Wiener Process over a Channel with Random Delay}




\IEEEoverridecommandlockouts
\author{Yin Sun, Yury Polyanskiy, and Elif Uysal-Biyikoglu\\
Dept. of ECE,  the Ohio State University, Columbus, OH\\
Dept. of EECS, Massachusetts Institute of Technology, Cambridge, MA\\
Dept. of EEE, Middle East Technical University, Ankara, Turkey\\
~\\
May 22, 2017

\thanks{The research was supported in part by the Center for Science of Information (CSoI),
an NSF Science and Technology Center, under grant agreement CCF-09-39370, by ONR grant N00014-17-1-2417, and by TUBITAK.}
}
\maketitle
\begin{abstract}

In this paper, we consider a problem of sampling a Wiener process, with samples forwarded to a remote estimator via a channel that consists of a queue with random delay. The estimator reconstructs a real-time estimate of the signal from causally received samples. Motivated by recent research on age-of-information, we study the optimal sampling strategy that minimizes the mean square estimation error subject to a sampling frequency constraint. We prove that the optimal sampling strategy is a threshold policy, and find the optimal threshold. This threshold is determined by the sampling frequency constraint and how much the Wiener process  varies during the channel delay. 
An interesting consequence is that even in the absence of the sampling frequency constraint, the optimal strategy is \emph{not} zero-wait sampling in which a new sample is taken once the previous sample is delivered; {\blue rather, it is optimal to wait for a non-zero amount of time after the previous sample is delivered, and then take the next sample.} Further, if the sampling times are independent of the observed Wiener process, the optimal sampling problem reduces to an age-of-information optimization problem that has been recently solved. Our comparisons show that the estimation error of the optimal sampling policy is much smaller than those of age-optimal sampling, zero-wait sampling, and  classic uniform sampling. 

\end{abstract}

\section{Introduction}

{\blue 
Consider a system with two terminals (see Fig.~\ref{fig_model}): An observer measuring a Wiener process $W_t$ and an estimator, whose goal is to provide the best-guess $\hat W_t$ for the current value of $W_t$. These two terminals are connected by a channel 
that transmits time-stamped samples of the form $(S_i, W_{S_i})$, where the sampling times $S_i$ satisfy $0\leq S_1 \leq S_2\leq\ldots$ The channel is modeled as a work-conserving FIFO queue with 
random \emph{i.i.d.}  delay $Y_i$, where $Y_i\geq 0$ is the channel delay (i.e., transmission time) of sample $i$.\footnote{
By ``work-conserving'', we meant that the channel is kept busy whenever there exist some generated samples that are not delivered to the estimator.
} The observer can choose the sampling times $S_i$ causally subject to an average sampling frequency constraint
\begin{align} 
\liminf_{n\to\infty} {1\over n} \mathbb{E}[S_n] \ge {1\over f_{\max}}, \nonumber
\end{align}
where $f_{\max}$ is the maximum allowed sampling frequency. }

Unless it arrives at an empty system, sample $i$ needs to wait  in the queue until its transmission starts. Let $G_i$ be the transmission starting time of sample $i$ such that $S_i \leq G_i$. 
The delivery time of sample $i$ is $D_i = G_i + Y_i$. The initial value $W_0= 0$ is known by the estimator for free, represented by $S_0 =D_0 = 0$. At time $t$, the estimator forms $\hat W_t$ using causally received samples with $D_i \le t$. By minimum mean square error (MMSE) estimation,  
\begin{align}\label{eq_esti}
\hat{W}_{t}  = & \mathbb{E}[{W}_t | W_{S_j}, D_j \leq t ] \nonumber\\
                = & W_{S_i}, ~\text{if}~t\in[D_i,D_{i+1}),~i=0,1,2,\ldots, 
\end{align}
as illustrated in Fig. \ref{fig_signal}.
We measure the quality of remote estimation via the MMSE:
\begin{align}
\limsup_{T\to \infty} {1\over T} \EE \left[\int_0^T (W_t - \hat W_t)^2 dt\right].\nonumber
\end{align}



\begin{figure}
\centering
\includegraphics[width=0.4\textwidth]{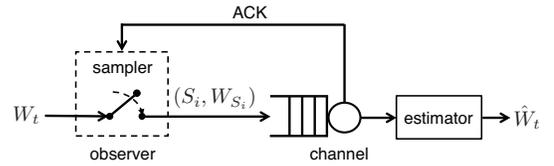}   
\caption{System model.}\vspace{-0.0cm}
\label{fig_model}
\end{figure}    

In this paper, we study  the optimal sampling strategy that achieves the fundamental tradeoff between $f_{\max}$ and MMSE.
The contributions of this paper are summarized as follows:
\ifreport
\begin{itemize}
\else
\begin{itemize}[leftmargin=*]
\fi
\item The optimal sampling problem for minimizing the MMSE subject to the sampling frequency constraint  is solved exactly. We prove that the optimal sampling strategy is a threshold policy, and find the optimal threshold. This threshold is determined by $f_{\max}$ and the amount of signal variation during the channel delay (i.e., random transmission time of a sample). Our threshold policy has an important difference from the previous threshold policies studied in, e.g., \cite{Astrom2002,Hajek2008,imer2010, Nuno2011,nayyar2013, wu2013,Basar2014,Chakravorty2015,GaoCDC2015, GaoACC2016}: {\blue In our model, each sample waiting in the queue for its transmission opportunity unnecessarily becomes stale. We have proven that it is suboptimal to take a new sample when the channel is busy. Consequently, the threshold should be \emph{disabled} whenever there is a packet in transmission.}


%



\item An unexpected consequence of our study is that even in the absence of the sampling frequency constraint (i.e., $f_{\max}=\infty$), the optimal strategy is \emph{not} zero-wait sampling in which a new sample is generated once the previous sample is delivered; {\blue rather, it is optimal to wait a positive amount of time after the previous sample is delivered, and then take the next sample.}

\item If the sampling times are independent of the observed Wiener  process, the optimal sampling problem reduces to an age-of-information optimization problem solved in \cite{SunInfocom2016,report_AgeOfInfo2016}. The asymptotics of the MMSE-optimal and  age-optimal sampling policies at  low/high channel delay or low/high sampling frequencies  are studied. 

\begin{figure}[t!]
    \centering
    \begin{subfigure}[t]{0.5\textwidth}
        \centering
        \includegraphics[width=0.6\textwidth]{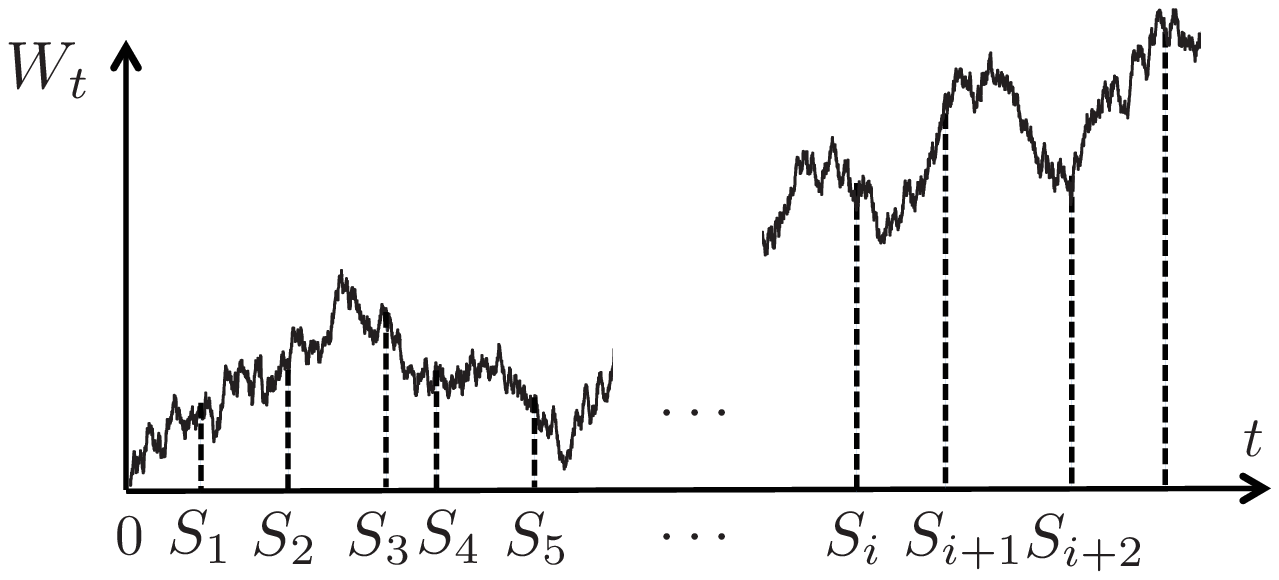}
        \caption{Wiener process $W_{t}$ and its samples.}\label{fig_signal1}
    \end{subfigure}\\
    \begin{subfigure}[t]{0.5\textwidth}
        \centering
        \includegraphics[width=0.6\textwidth]{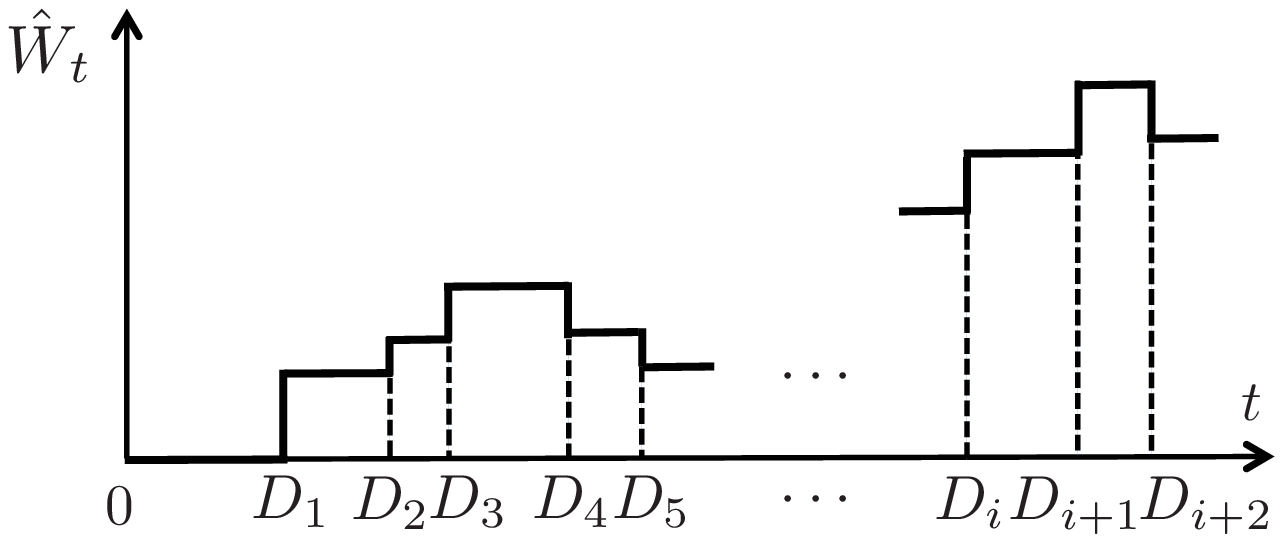}
        \caption{Estimate process $\hat{W}_{t}$ using causally received samples.}\label{fig_signal2}
    \end{subfigure}
                
    \caption{Sampling and remote estimation of the Wiener  process.}
    \ifreport
    \else
\fi\label{fig_signal}
\end{figure}

\item Our theoretical and numerical comparisons show that the MMSE of the optimal sampling policy is much smaller than those of age-optimal sampling, zero-wait sampling, and  classic uniform sampling. 
\end{itemize}

\section{Related Work}

On the one hand, the results in this paper are closely related to the recent age-of-information studies, e.g., \cite{2015ISITYates,KaulYatesGruteser-Infocom2012,Kam-status-ISIT2013,Bacinoglu2015,SunInfocom2016,report_AgeOfInfo2016,Bedewy2016,Bedewy2017,Kadota2016,Bacinoglu2017}, where the focus was on queueing and channel delay, without a signal model. 
The discovery that the zero-wait policy is not always optimal for minimizing the age-of-information can be found in \cite{2015ISITYates,SunInfocom2016,report_AgeOfInfo2016}. The sub-optimality of a work-conserving scheduling policy was also observed in
\cite{Kadota2016}, which considered scheduling updates to different users with unreliable channels. 
One important observation in our study is that the behavior of the optimal update policy changes dramatically after adding a signal model.

On the other hand, the paper can be considered as a contribution to the rich literature on remote estimation, e.g., \cite{nayyar2013, wu2013,GaoCDC2015, GaoACC2016,Astrom2002,Basar2014,Chakravorty2015,mehra76, Hajek2008, Nuno2011, imer2010}, by adding a queueing model. Optimal transmission scheduling of sensor measurements for estimating a stochastic process was recently studied in \cite{GaoCDC2015, GaoACC2016}, where the samples are transmitted over a channel with additive noise. In the absence of channel delay and queueing (i.e., $Y_i=0$), the problems of sampling Wiener process and Gaussian random walk were addressed in \cite{Astrom2002,Basar2014,Chakravorty2015}, where the optimality of threshold policies was established. To the best of our knowledge, \cite{Basar2014} is the closest study with this paper. Because there is no queueing and channel delay in \cite{Basar2014}, the problem analyzed therein is a special case of ours. 

\section{Main Result}


Let $\pi=(S_0,S_1,\ldots)$ represent a sampling policy, and $\Pi$ be the set of \emph{causal} sampling policies 
which satisfy the following conditions: (i) The \emph{information} that is available for determining the sampling time $S_i$ includes the history of the Wiener  process $(W_{t}: t\in[0, S_i])$, the history of channel states $(I_t: t\in[0, S_i])$, and the sampling times of previous samples $(S_0,\ldots,S_{i-1})$, 
where $I_t \in\{0,1\}$ is the idle/busy state of the channel at time $t$. (ii) The inter-sampling times $\{T_i = S_{i+1}-S_i, i=0,1,\ldots\}$ form a \emph{regenerative process} \cite[Section 6.1]{Haas2002}: There exist integers $0\leq {k_1}<k_2< \ldots$ such that the post-${k_j}$ process $\{T_{k_j+i}, i=0,1,\ldots\}$ has the same distribution as the post-${k_1}$ process $\{T_{k_1+i}, i=0,1,\ldots\}$ and is independent of the pre-$k_j$ process $\{T_{i}, i=0,1,\ldots, k_j-1\}$; in addition, $\mathbb{E}[S_{k_{1}}^2]<\infty$ and $0<\mathbb{E}[(S_{k_{j+1}}-S_{k_j})^2]<\infty$  for  $j=1,2,\ldots$\footnote{Really, we assume that $T_i$ is a regenerative process because we analyze the time-average MMSE in \eqref{eq_DPExpected}, but operationally a nicer definition is $\limsup_{n\rightarrow \infty}{\mathbb{E}[\int_0^{D_n} (W_{t}-\hat{W}_{t})^2dt]}/{\mathbb{E}[D_n]}$. These two definitions  are equivalent when $T_i$ is a regenerative process.} 
\ifreport
We assume that the Wiener process $W_t$ and the channel delay $Y_i$ are
determined by two external processes, which are mutually independent and  
do not change according to the sampling policy $\pi\in\Pi$. We also assume that the $Y_i$'s are \emph{i.i.d.} with $\mathbb{E}[Y_i^2]<\infty$. 
\else
We assume that the Wiener process $W_t$ and the channel delay $Y_i$ are mutually independent and  
do not change according to the sampling policy. We also assume $\mathbb{E}[Y_i^2]<\infty$.

\fi
\begin{figure}[t!]
    \!\!\!\!\begin{subfigure}[t]{0.23\textwidth}
        \centering
        \includegraphics[width=0.9\textwidth]{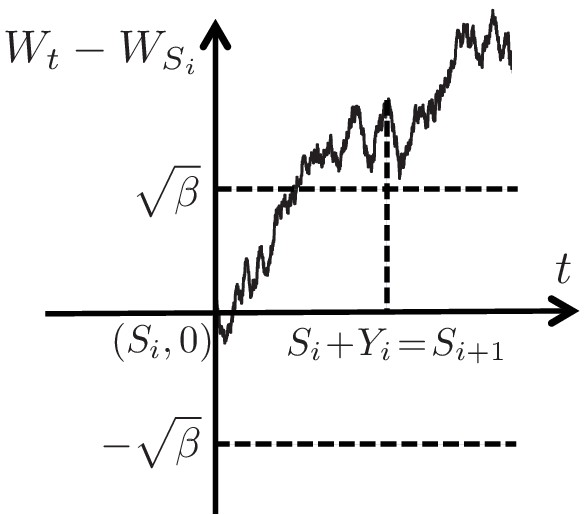}
            \vspace{-0.5em}
        \caption[(i)]{If $|W_{S_i+Y_i}\! -\! W_{S_i}|\! \geq\! \sqrt{\beta}$,  sample $i+1$ is taken at  time $S_{i+1} = S_i+Y_i$.}\label{}
    \end{subfigure}~~~
    \begin{subfigure}[t]{0.23\textwidth}
        \centering
        \includegraphics[width=0.9\textwidth]{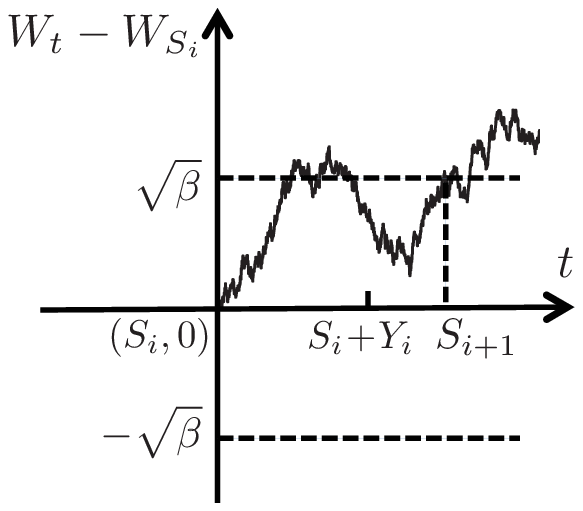}
            \vspace{-0.5em}
        \caption{If $|W_{S_i+Y_i} \!-\! W_{S_i}|\! <\! \sqrt{\beta}$, sample $i+1$ is taken at time $t$ that satisfies $t\geq S_i+Y_i$ and $|W_{t}\! -\! W_{S_i}|=\sqrt{\beta}$.}\label{}
    \end{subfigure}
  
    \caption{Illustration of the threshold policy \eqref{eq_opt_solution}.}\label{fig_policy1}
    \vspace{-1.5em}
\end{figure}

A sampling policy $\pi\in\Pi$ is said to be \emph{signal-independent} (\emph{signal-dependent}), if $\pi$ is (not) independent of the Wiener process $\{W_t,t\geq0\}$. Example policies in $\Pi$ include:
\begin{itemize}
\vspace{-0.5ex}
\ifreport 
\item[1.] \emph{Uniform sampling} \cite{Nyquist1928,Shannon1949}: 
\else
\item[1.] \emph{Uniform sampling}: 
\fi
The inter-sampling times are constant, such that for some $\beta\geq 0$,
\begin{align} \label{eq_uniform}
S_{i+1} = S_i+ \beta.
\end{align}
 
\item[2.] \emph{Zero-wait sampling} \cite{2015ISITYates,SunInfocom2016,report_AgeOfInfo2016,KaulYatesGruteser-Infocom2012}: A new sample is generated once the previous sample is delivered, i.e.,
\begin{align} \label{eq_Zero_wait}
S_{i+1} = S_i+ Y_i.
\end{align}

\item[3.] \emph{Threshold policy in signal variation:} 
The sampling times are given by
\begin{align}\label{eq_opt_solution}
S_{i+1}= \inf \left\{ t\geq S_i + Y_i: |W_t - W_{S_i}| \!\geq\! \sqrt{\beta}\right\},
\end{align}
which is illustrated in Fig. \ref{fig_policy1}.
If $|W_{S_i+Y_i} - W_{S_i}| \geq \sqrt{\beta}$, 
sample $i+1$ is generated at the time $S_{i+1} = S_i+Y_i$ when sample $i$ is delivered; otherwise, if $|W_{S_i+Y_i} - W_{S_i}| < \sqrt{\beta}$, sample $i+1$ is generated at the earliest time $t$ such that $t\geq S_i+Y_i$ and 
$|W_t - W_{S_i}|$ reaches the threshold $\sqrt{\beta}$. {\blue It is worthwhile to emphasize that even if there exists time $t\in[S_i, S_i+Y_i)$ such that $|W_{t} - W_{S_i}| \geq \sqrt{\beta}$, no sample is taken at such time $t$, as depicted in Fig. \ref{fig_policy1}. 
In other words, the threshold-based control is disabled during $[S_i, S_i+Y_i)$ and is reactivated at time $S_i+Y_i$. 
This is a key difference from previous studies on threshold policies \cite{nayyar2013, wu2013,GaoCDC2015, GaoACC2016,Astrom2002,Hajek2008,imer2010, Nuno2011,Basar2014,Chakravorty2015}.}

\item[4.] \emph{Threshold policy in time variation} \cite{2015ISITYates,SunInfocom2016,report_AgeOfInfo2016}: The sampling times are given by
\begin{align}\label{eq_thm2_1}
S_{i+1}=\inf \left\{ t\geq S_i + Y_i: t - {S_i} \geq{\beta}\right\}.
\end{align}


\end{itemize}

The optimal sampling problem for minimizing the MMSE  subject to a  sampling frequency constraint is formulated as 
\begin{align}\label{eq_DPExpected}
&\min_{\pi\in\Pi}~ \limsup_{T\rightarrow \infty}\frac{1}{T}\mathbb{E}\left[\int_0^{T} (W_t - \hat W_t)^2dt\right] \\
&~\text{s.t.}~~ \liminf_{n\rightarrow \infty} \frac{1}{n} 
\mathbb{E}[S_n]\geq \frac{1}{f_{\max}}.\label{eq_constraint}
\end{align}

Problem  \eqref{eq_DPExpected} is a constrained continuous-time Markov decision problem with a continuous state space. 
Somewhat to our surprise, we were able to exactly solve \eqref{eq_DPExpected}:
\begin{theorem}\label{thm_1}
There exists $\beta\geq0$ such that the sampling policy \eqref{eq_opt_solution} is optimal to \eqref{eq_DPExpected}, and the optimal $\beta$ is determined by solving\footnote{If $\beta\rightarrow0$, the last terms in \eqref{eq_thm1} and \eqref{eq_thm2} are determined by L'H\^{o}pital's rule.}
\begin{align} \label{eq_thm1}
\mathbb{E}[\max(\beta,W_Y^2)] \!=\! \max\left(\frac{1}{f_{\max}}, \frac{\mathbb{E}[\max(\beta^2,W_Y^4)]}{2\beta}\right),\!
\end{align}
where $Y$ is a random variable with the same distribution as $Y_i$.
The optimal  value of \eqref{eq_DPExpected} is then given by \emph{
\begin{align}\label{thm_1_obj}
\mathsf{mmse}_{\text{opt}}\triangleq\frac{\mathbb{E}[\max(\beta^2,W_Y^4)]}{6\mathbb{E}[\max(\beta,W_Y^2)]} + \mathbb{E}[Y].
\end{align}}
\end{theorem}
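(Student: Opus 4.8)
The plan is to reduce the continuous-time constrained MDP \eqref{eq_DPExpected} to a one-dimensional optimal stopping problem and to solve it by a verification argument. First I would argue that it is without loss of optimality to restrict to policies that never sample while the channel is busy, so that every sample is transmitted immediately upon generation and $D_i = S_i + Y_i$ with no queueing. Intuitively, a sample generated during a busy period only sits in the queue and grows stale before its transmission even begins, so deferring its generation until the channel clears can only reduce the integrated squared error while simultaneously relaxing the frequency constraint; I would make this precise by a sample-path interchange (coupling) argument. Under this restriction the inter-sample times $T_i = S_{i+1}-S_i$ satisfy $T_i \ge Y_i$, and by the regenerative assumption (ii) together with the renewal--reward theorem the time-average MMSE in \eqref{eq_DPExpected} equals the ratio of the expected squared error accrued over one cycle to the expected cycle length $\mathbb{E}[T_i]$ (this is the equivalence noted in the footnote to \eqref{eq_DPExpected}).

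The second step is a martingale computation that puts the per-cycle objective in closed form for an arbitrary causal stopping rule. Writing $B_u = W_{S_i+u} - W_{S_i}$, which by the strong Markov property is a standard Brownian motion independent of $\mathcal{F}_{S_i}$, and using that $Y_i$ and $Y_{i+1}$ are independent of $B$, I would split the error integral over $[D_i, D_{i+1})$ at the sampling instant $S_{i+1}$ and evaluate each piece with the optional-stopping identities $\mathbb{E}[B_T^2] = \mathbb{E}[T]$ and $\mathbb{E}[B_T^4] = 6\,\mathbb{E}[\int_0^T B_s^2\,ds]$ (the latter from the martingale $B_t^4 - 6\int_0^t B_s^2\,ds$). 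The boundary terms proportional to $\mathbb{E}[Y_i^2]$ cancel, and the objective collapses to
\begin{align}
\frac{\mathbb{E}[B_T^4]}{6\,\mathbb{E}[B_T^2]} + \mathbb{E}[Y], \qquad \text{s.t.} \qquad \mathbb{E}[B_T^2] = \mathbb{E}[T] \ge \frac{1}{f_{\max}}. \nonumber
\end{align}
The crucial feature is that this depends on the policy only through the law of the stopped value $B_T$, which already strongly suggests that a threshold on $|B_u|$ is optimal.

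The heart of the proof is Step three: showing the threshold rule \eqref{eq_opt_solution} is optimal among all causal stopping times $T \ge Y$. I would fix the mean $\mathbb{E}[B_T^2] = b \ge 1/f_{\max}$ and minimize $\mathbb{E}[B_T^4]$; forming the Lagrangian $\mathbb{E}[B_T^4 - \mu B_T^2] = \mathbb{E}[\int_0^T (6B_s^2 - \mu)\,ds]$ turns this into an optimal stopping problem with running cost $6x^2 - \mu$ and a forced initial delay $Y$. I would guess a symmetric continuation region $\{|x| < \theta\}$, solve $\tfrac12 h'' + (6x^2 - \mu) = 0$ there, and impose value-matching $h(\theta)=0$ and smooth-fit $h'(\theta)=0$; these pin down $\theta^2 = \mu/2 =: \beta$ and a closed-form even $h$ with a minimum at the origin. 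A verification theorem (apply It\^o to $h(B_t)$, using $h\le 0$ everywhere and $\tfrac12 h'' + 6x^2 - \mu \ge 0$ off the continuation region) then confirms that stopping at the first time after $Y$ that $|B_u| = \sqrt{\beta}$ attains the infimum, which is exactly \eqref{eq_opt_solution}; hence the stopped value obeys $B_T^2 = \max(\beta, W_Y^2)$ and $B_T^4 = \max(\beta^2, W_Y^4)$. Substituting these into the Step-two objective yields \eqref{thm_1_obj}.

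Finally, I would pin down $\beta$. With $B(\beta) = \mathbb{E}[\max(\beta,W_Y^2)]$ and $A(\beta) = \mathbb{E}[\max(\beta^2,W_Y^4)]$, differentiating $A/(6B)$ and using $A'(\beta) = 2\beta\,\Pr(W_Y^2 \le \beta)$ and $B'(\beta) = \Pr(W_Y^2 \le \beta)$ shows that the derivative has the sign of $-\big(A/(2\beta) - B\big)$, so the unconstrained objective is unimodal with interior stationary point $B = A/(2\beta)$. Combining this interior optimum with the binding frequency constraint $B = 1/f_{\max}$ --- whichever is larger governs --- gives the single characterization \eqref{eq_thm1}. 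I expect Step three (the verification theorem with the forced delay $Y$, together with the accompanying argument that no signal- or channel-state-dependent causal policy can beat the threshold) to be the main obstacle; Step one's no-queue reduction is the other delicate point, whereas Steps two and four are essentially calculus.
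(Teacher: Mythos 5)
Your proposal is correct and follows essentially the same architecture as the paper's proof: your Step 1 is Lemma \ref{lem_zeroqueue} plus the renewal--reward rewriting into \eqref{eq_Simple}; your Step 2 is Lemma \ref{lem_stop} and the per-cycle identity \eqref{eq_integral} (the telescoping of the $\mathbb{E}[Y^2]$ boundary terms is exactly how \eqref{eq_integral} is derived); your Step 3's verification function $h(x)=-(x^2-\beta)^2$ on the continuation region is, up to sign and a factor of $2$, the paper's $u-g$ from Lemma \ref{lem1_stop}, and your It\^o verification (which, note, needs It\^o--Tanaka--Meyer since $h''$ jumps at $\pm\sqrt{\beta}$) is the excessive-function argument of Lemma \ref{lem2_stop} and Theorem \ref{thm_optimal_stopping}; your Step 4 recovers \eqref{eq_thm1} by the same calculus the paper uses in Appendix \ref{app_zero_gap}.

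The one place where you diverge, and where your write-up is thin, is the treatment of the fractional long-run-average objective. You fix the denominator $\mathbb{E}[B_T^2]=b$ and minimize the numerator within a single cycle, then optimize over $b$; this is clean only if the optimal policy may be assumed stationary across cycles. The admissible class $\Pi$ only requires $T_i$ to be regenerative, so the objective is a ratio of Ces\`aro sums and the frequency constraint couples cycles (a policy may violate $\mathbb{E}[T_i]\geq 1/f_{\max}$ in some cycles and compensate in others). The paper closes this gap with the Dinkelbach-type Lemma \ref{lem_ratio_to_minus} (replacing the ratio by $p(c)$ and seeking $p(c_{\text{opt}})=0$) together with the Lagrangian decomposition and the zero-duality-gap argument of Theorem \ref{thm_zero_gap}; your per-cycle multiplier $\mu$ plays exactly the role of the paper's $c+\lambda$ (indeed $\beta=\mu/2$ versus $\beta=3(c+\lambda-\mathbb{E}[Y])$), so the fix is to run your Lagrangian at the level of the Ces\`aro sums rather than a single cycle. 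With that adjustment your argument is a complete proof; everything else matches the paper step for step.
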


\begin{proof}
See Section \ref{sec_proof}.
\end{proof}

The optimal policy in \eqref{eq_opt_solution} and \eqref{eq_thm1} is called  the ``MMSE-optimal'' policy. 
Note that one can use the bisection method or other one-dimensional search method to solve \eqref{eq_thm1} with quite low complexity. Interestingly, this optimal policy does not suffer from the  ``curse of dimensionality'' issue encountered in many Markov decision problems. 

 {\blue Notice that the feasible policies in $\Pi$ can  use the complete history of the Wiener process $(W_{t}: t\in[0, S_{i+1}])$ to determine $S_{i+1}$. However, the MMSE-optimal policy in \eqref{eq_opt_solution} and \eqref{eq_thm1} only requires recent knowledge of the Wiener process $(W_{t}-W_{S_i}: t\in[S_i+Y_i, S_{i+1}])$ to determine $S_{i+1}$.} 

{\blue Moreover, according to \eqref{eq_thm1}, the threshold $\sqrt{\beta}$ is determined by the maximum sampling frequency $f_{\max}$ and the distribution of the signal variation $W_Y$ during the  channel delay $Y$. It is worth noting that \emph{$W_Y$ is a random variable that tightly couples the source process and the channel delay}. 
This is different from the traditional wisdom of information theory where source coding and channel coding can be treated separately.}




\subsection{Signal-Independent Sampling and  the Age-of-Information}
Let $\Pi_{\text{sig-independent}}\subset{\Pi}$ denote the set of signal-independent sampling policies, defined as
\begin{align}
\Pi_{\text{sig-independent}} \!=\! \{\pi\in\Pi: \pi \text{ is independent of }W_t, t\geq 0\}.\nonumber
\end{align}
For each $\pi\in \Pi_{\text{sig-independent}}$, the MMSE  \eqref{eq_DPExpected} can be written as 
\ifreport
(see Appendix \ref{app_age} for its proof)
\fi
{
\begin{align}\label{eq_age_MSE}
\limsup_{T\rightarrow \infty}\frac{1}{T}\mathbb{E}\left[ \int_0^{T}\Delta (t) dt\right],
\end{align}}
\!\!\!where 
\begin{align}\label{eq_age_def}
\Delta (t) = t - S_i, ~t\in[D_i,D_{i+1}),~i=0,1,2,\ldots,
\end{align}
is the \emph{age-of-information} \cite{KaulYatesGruteser-Infocom2012}, that is, the time difference between the generation time of the freshest received sample and the current time $t$.
If the policy space in \eqref{eq_DPExpected}  is restricted from $\Pi$  to $\Pi_{\text{sig-independent}}$, \eqref{eq_DPExpected} reduces to the following age-of-information optimization problem \cite{SunInfocom2016,report_AgeOfInfo2016}:
\begin{align}\label{eq_age}
\min_{\pi\in\Pi_{\text{sig-independent}} }& \limsup_{T\rightarrow \infty}\frac{1}{T}\mathbb{E}\left[ \int_0^{T}\Delta (t) dt\right] \\
~\text{s.t.}~~~~~ &\liminf_{n\rightarrow \infty} \frac{1}{n} 
\mathbb{E}[S_n]\geq \frac{1}{f_{\max}}.\nonumber
\end{align}

Problem \eqref{eq_DPExpected} is significantly  more challenging than \eqref{eq_age}, because in \eqref{eq_DPExpected} the sampler needs to make decisions based on the evolution of the signal process $W_t$, which is not required in \eqref{eq_age}. 
\ifreport
More powerful techniques 
than those in \cite{SunInfocom2016,report_AgeOfInfo2016} are developed in Section \ref{sec_proof}  to solve \eqref{eq_DPExpected}.
\else
More powerful techniques 
than those in \cite{SunInfocom2016,report_AgeOfInfo2016} are developed in Section \ref{sec_proof} and our  technical report \cite{Sun_reportISIT17}  to solve \eqref{eq_DPExpected}.
\fi


%
\begin{theorem}\cite{SunInfocom2016,report_AgeOfInfo2016}\label{thm_2}
There exists $\beta\geq0$ such that the sampling policy \eqref{eq_thm2_1} is optimal to \eqref{eq_age}, and the optimal $\beta$ is determined by solving
\begin{align}\label{eq_thm2}
\mathbb{E}[\max(\beta,Y)] \!=\! \max\left(\frac{1}{f_{\max}}, \frac{\mathbb{E}[\max(\beta^2,Y^2)]}{2\beta}\right).\!
\end{align}
The optimal  value of \eqref{eq_age} is then given by\emph{
\begin{align}\label{thm_2_obj}
\mathsf{mmse}_{\text{age-opt}}\triangleq\frac{\mathbb{E}[\max(\beta^2,Y^2)]}{2\mathbb{E}[\max(\beta,Y)] } + \mathbb{E}[Y].
\end{align}}
\end{theorem}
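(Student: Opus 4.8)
The plan is to reduce the functional optimization over causal signal-independent policies to a one-dimensional fractional program, which I can then solve in closed form via Dinkelbach's method combined with a Lagrangian relaxation of the frequency constraint. First I would establish a structural reduction: because the channel is a work-conserving FIFO queue and the age in \eqref{eq_age_def} depends only on the freshest delivered sample, submitting a sample while the channel is still busy is suboptimal. Indeed, if a sample were generated at some $S_{i+1}<D_i$, it would begin transmission only at $D_i$ and be delivered at $D_i+Y_{i+1}$, the same instant as if it had been generated at $D_i$, but with strictly staler content. Hence without loss of optimality I may restrict to policies with $S_{i+1}\ge D_i=S_i+Y_i$, so every sample sees an empty queue and the only decision is the post-delivery waiting time $Z_i=S_{i+1}-D_i\ge0$, giving inter-sampling time $T_i=Y_i+Z_i\ge Y_i$.

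Next I would convert the time-average objective into a renewal-reward ratio. Writing the age area over $[D_i,D_{i+1})$ as $\int_{D_i}^{D_{i+1}}(t-S_i)\,dt=\tfrac12\big[(T_i+Y_{i+1})^2-Y_i^2\big]$ and the interval length as $T_i+Y_{i+1}-Y_i$, summing over $i$ and using the regenerative structure (so the boundary terms $Y_{i+1}^2-Y_i^2$ and $Y_{i+1}-Y_i$ telescope and vanish in the Ces\`aro limit), together with the independence $T_i\perp Y_{i+1}$ that holds for signal-independent policies, the objective in \eqref{eq_age} collapses to $\tfrac{\mathbb{E}[T^2]}{2\mathbb{E}[T]}+\mathbb{E}[Y]$, where $\mathbb{E}$ denotes the stationary average of the inter-sampling time $T=Y+Z$. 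The frequency constraint likewise reduces to $\mathbb{E}[T]\ge 1/f_{\max}$. The remaining problem is therefore to minimize $\tfrac{\mathbb{E}[T^2]}{2\mathbb{E}[T]}$ over all (possibly history-dependent, randomized) choices of $Z\ge0$, subject to $T\ge Y$ pointwise and $\mathbb{E}[T]\ge 1/f_{\max}$.

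I would then solve this constrained fractional program. By Dinkelbach's theorem the optimal ratio $c^\star$ is characterized by $\min\big(\mathbb{E}[T^2]-2c^\star\mathbb{E}[T]\big)=0$ over the feasible set; dualizing the average constraint $\mathbb{E}[T]\ge 1/f_{\max}$ with a multiplier $\lambda\ge0$ makes the inner objective $\mathbb{E}[T^2-(2c^\star+\lambda)T]$ pointwise separable, and completing the square with $\beta=c^\star+\lambda/2$ shows the per-sample cost is minimized, subject to $T\ge Y$, by $T=\max(\beta,Y)$ for every realization. This is precisely the threshold policy \eqref{eq_thm2_1} (with $\max(\beta,Y)^2=\max(\beta^2,Y^2)$ since $\beta,Y\ge0$). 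The Dinkelbach condition then reads $c^\star=\tfrac{\mathbb{E}[\max(\beta^2,Y^2)]}{2\mathbb{E}[\max(\beta,Y)]}$, while complementary slackness gives the dichotomy: either $\lambda=0$, so $\beta=c^\star$ and $\mathbb{E}[\max(\beta,Y)]=\tfrac{\mathbb{E}[\max(\beta^2,Y^2)]}{2\beta}\ge 1/f_{\max}$, or $\lambda>0$, so the constraint binds and $\mathbb{E}[\max(\beta,Y)]=1/f_{\max}$. These two cases combine into the single fixed-point equation \eqref{eq_thm2}, and substituting $T=\max(\beta,Y)$ into $\tfrac{\mathbb{E}[T^2]}{2\mathbb{E}[T]}+\mathbb{E}[Y]$ yields the optimal value \eqref{thm_2_obj}.

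The main obstacle I anticipate is the reduction in the second step rather than the algebra in the third: showing that the time-average objective of an arbitrary causal, regenerative signal-independent policy is governed solely by the first two moments of the stationary inter-sampling time, and that no history-dependent or randomized rule can beat the memoryless threshold. The Dinkelbach-plus-Lagrangian formulation is what makes this clean, since after dualizing the average constraint the inner problem becomes additively separable across samples and its pointwise minimizer is automatically memoryless. Care is still needed to justify the telescoping and ergodic limits and the interchange of limit and expectation under the stated finite-second-moment assumption on $Y$, and to verify existence of a root $\beta\ge0$ of \eqref{eq_thm2} by a monotonicity/continuity (intermediate-value) argument.
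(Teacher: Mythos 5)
Your proposal is correct and follows essentially the same route as the paper: although Theorem \ref{thm_2} itself is imported from \cite{SunInfocom2016,report_AgeOfInfo2016}, the paper's own proof of Theorem \ref{thm_1} uses exactly your template --- the fractional-to-parametric (Dinkelbach) transform of Lemma \ref{lem_ratio_to_minus}, Lagrangian dualization with per-sample separability, and the complementary-slackness dichotomy of Appendix \ref{app_zero_gap} that merges the two cases into the single fixed-point equation --- with your pointwise quadratic minimizer $T=\max(\beta,Y)$ being precisely the signal-independent degeneration of the per-sample optimal stopping step. The technical care points you flag (regenerative/renewal-reward limits, telescoping boundary terms, existence of a root of \eqref{eq_thm2}) are handled in the same manner there.
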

The sampling policy in \eqref{eq_thm2_1} and \eqref{eq_thm2} is referred to as the ``age-optimal'' policy. 
Because $\Pi_{\text{sig-independent}}\subset{\Pi}$, 
\begin{align}\label{eq_compare_aware}
\mathsf{mmse}_{\text{opt}} \leq \mathsf{mmse}_{\text{age-opt}}.
\end{align}
%
In the following, the asymptotics of the MMSE-optimal and age-optimal sampling policies at  low/high channel delay or low/high sampling frequencies  are studied. 
\subsection{Low Channel Delay or Low Sampling Frequency}
\ifreport
Let 
\begin{align}\label{eq_scale_Y_i}
Y_i = d X_i
\end{align}
\else
Let $Y_i = d X_i$ 
\fi
represent the scaling of the channel delay $Y_i$ with $d$, where $d\geq0$ and the $X_i$'s are \emph{i.i.d.} positive random variables. If $d\rightarrow  0$ or $f_{\max}\rightarrow 0$, we can obtain from \eqref{eq_thm1} that
\ifreport
(see Appendix \ref{app_low_delay} for its proof)
\fi
\begin{align}\label{eq_betaasy}
\beta = \frac{1}{f_{\max}} + o\left(\frac{1}{f_{\max}}\right),
\end{align}
where $f(x)=o(g(x))$ as $x\rightarrow a$ means that $\lim_{x\rightarrow a} $ $f(x)/g(x) = 0$.
Hence, the MMSE-optimal policy 
\ifreport
in \eqref{eq_opt_solution} and \eqref{eq_thm1} 
\fi
becomes  
\begin{align} \label{eq_opt_solution1}
S_{i+1}\! =\! \inf \left\{ t \geq S_i: |W_t - W_{S_i}| \!\geq\! \sqrt{\frac{1}{f_{\max}}}\right\},
\end{align}
and 
\ifreport 
as shown in Appendix \ref{app_low_delay}, the optimal value of \eqref{eq_DPExpected}  becomes
\begin{align}\label{eq_opt_limit_1}
\mathsf{mmse}_{\text{opt}}= \frac{1}{6f_{\max}} + o\left(\frac{1}{f_{\max}}\right).
\end{align}
\else
the optimal value of \eqref{eq_DPExpected}  becomes
$\mathsf{mmse}_{\text{opt}}= {1}/{(6f_{\max})} $ $+ o(1/f_{\max})$. 
\fi 
The sampling policy \eqref{eq_opt_solution1} 
was also obtained  in \cite{Basar2014} for the case that $ Y_i =  0$ for all $i$.

%

If $d\rightarrow  0$ or $f_{\max}\rightarrow 0$, one can show that the age-optimal  policy in \eqref{eq_thm2_1} and \eqref{eq_thm2} becomes  uniform sampling \eqref{eq_uniform} with $\beta = {1}/{f_{\max}}+ o(1/f_{\max})$, and the optimal value of \eqref{eq_age} is $\mathsf{mmse}_{\text{age-opt}}= {1}/{(2f_{\max})}+ o(1/f_{\max})$. Therefore, 
\begin{align}\label{eq_ratio_MSE}
\lim_{d\rightarrow  0} \frac{\mathsf{mmse}_{\text{opt}}}{\mathsf{mmse}_{\text{age-opt}}}= \lim_{f_{\max}\rightarrow 0}\frac{\mathsf{mmse}_{\text{opt}}}{\mathsf{mmse}_{\text{age-opt}}}=\frac{1}{3}.
\end{align}

\subsection{High Channel Delay or Unbounded Sampling Frequency}
If $d\rightarrow\infty$ or $f_{\max}\rightarrow\infty$, 
\ifreport
as shown in Appendix \ref{app_scale}, 
\fi
the MMSE-optimal policy for solving \eqref{eq_DPExpected}
is given by \eqref{eq_opt_solution} where $\beta$ is determined by solving 
\begin{align} \label{eq_coro_1}
2\beta \mathbb{E}[\max(\beta,W_{Y}^2)]= {\mathbb{E}[\max(\beta^2,W_{Y}^4)]}{}.
\end{align}
Similarly, if $d\rightarrow\infty$ or $f_{\max}\rightarrow\infty$, the age-optimal policy for solving \eqref{eq_age} is given by \eqref{eq_thm2_1} where $\beta$ is determined by solving 
\begin{align} \label{eq_coro_2}
2\beta \mathbb{E}[\max(\beta,Y)] = {\mathbb{E}[\max(\beta^2,Y^2)]}{}.
\end{align}
In these limits, the ratio between ${\mathsf{mmse}_{\text{opt}}}$ and ${\mathsf{mmse}_{\text{age-opt}}}$ depends on the distribution of $Y$.

When the sampling frequency is unbounded, i.e., $f_{\max}=\infty$, one logically reasonable policy is the zero-wait policy  in \eqref{eq_Zero_wait} \cite{2015ISITYates,SunInfocom2016,report_AgeOfInfo2016,KaulYatesGruteser-Infocom2012}. This zero-wait policy achieves the maximum throughput and the minimum queueing delay of the channel. Surprisingly, this zero-wait policy  \emph{does not always} minimize the age-of-information in \eqref{eq_age} and \emph{almost never} minimizes the MMSE in \eqref{eq_DPExpected}, as stated below:

\begin{theorem}\label{lem_zero_wait1}
If $f_{\max}=\infty$, the zero-wait policy is optimal for solving \eqref{eq_DPExpected} if and only if $Y=0$ with probability one.
\end{theorem}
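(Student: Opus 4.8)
The plan is to reduce the claim to a one‑dimensional calculus problem by leveraging Theorem \ref{thm_1}. The key observation is that the zero‑wait policy \eqref{eq_Zero_wait} is exactly the threshold policy \eqref{eq_opt_solution} with $\beta=0$: when $\sqrt{\beta}=0$ the stopping condition $|W_t-W_{S_i}|\geq 0$ holds already at $t=S_i+Y_i$, so $S_{i+1}=S_i+Y_i$. Since $f_{\max}=\infty$ makes the frequency constraint \eqref{eq_constraint} vacuous, Theorem \ref{thm_1} says the optimal value over $\Pi$ equals $g(\beta^\star)$, where $g(\beta)$ denotes the MMSE \eqref{thm_1_obj} of the threshold‑$\beta$ policy regarded as a function of $\beta\geq 0$, and $\beta^\star$ is the minimizer, characterized by \eqref{eq_coro_1}. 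Thus zero‑wait is optimal if and only if $\beta=0$ minimizes $g$.

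First I would record the two ingredients behind $g$. Writing $A(\beta)=\mathbb{E}[\max(\beta,W_Y^2)]$ and $B(\beta)=\mathbb{E}[\max(\beta^2,W_Y^4)]$, so that $g(\beta)=B(\beta)/(6A(\beta))+\mathbb{E}[Y]$, these quantities come from optional stopping applied over one renewal cycle to the martingales $W_t^2-t$ and $W_t^4-6\int_0^t W_s^2\,ds$ (the former giving the expected cycle length $\mathbb{E}[S_{i+1}-S_i]=A(\beta)$, the latter the expected integrated square error), together with the renewal‑reward theorem. I would then differentiate, using $A'(\beta)=\mathbb{P}(W_Y^2<\beta)$ and $B'(\beta)=2\beta\,\mathbb{P}(W_Y^2<\beta)$, to obtain
\[
g'(\beta)=\frac{\mathbb{P}(W_Y^2<\beta)}{6\,A(\beta)^2}\,h(\beta),\qquad h(\beta):=2\beta A(\beta)-B(\beta),
\]
so that the sign of $g'$ is governed by the single function $h$, whose zero is precisely condition \eqref{eq_coro_1}.

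The crux is the monotonicity of $h$: a short computation gives $h'(\beta)=2A(\beta)+2\beta A'(\beta)-B'(\beta)=2A(\beta)>0$, since the two $\beta\,\mathbb{P}(W_Y^2<\beta)$ terms cancel. Hence $h$ is strictly increasing, with $h(0)=-B(0)=-\mathbb{E}[W_Y^4]=-3\mathbb{E}[Y^2]$. If $Y=0$ almost surely then $W_Y\equiv 0$, so $A(\beta)=\beta$, $B(\beta)=\beta^2$, and $g(\beta)=\beta/6$, which is minimized at $\beta=0$; thus zero‑wait is optimal (attaining $\mathsf{mmse}_{\text{opt}}=0$). Conversely, if $\mathbb{P}(Y>0)>0$ then $\mathbb{E}[Y^2]>0$, so $h(0)<0$; since $h$ is strictly increasing (and $h(\beta)\geq\beta^2-3\mathbb{E}[Y^2]\to\infty$), its unique zero $\beta^\star$ is strictly positive, and $g'<0$ on $(0,\beta^\star)$ forces $g(\beta^\star)<g(0)$. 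Therefore $\beta=0$ does not minimize $g$ and zero‑wait is strictly suboptimal, which establishes the stated equivalence.

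I expect the main obstacle to be the rigorous justification that the objective of \emph{every} threshold‑$\beta$ policy equals $g(\beta)$ for all $\beta\geq 0$ — in particular the integrability conditions needed to apply optional stopping to $W_t^4-6\int_0^t W_s^2\,ds$ at the (delay‑offset) threshold‑crossing time, and the continuity of $g$ at $\beta=0$, where the footnote's L'H\^{o}pital evaluation of the $0/0$ form is required. Differentiating under the expectations defining $A$ and $B$ also needs a brief argument at the boundary event $\{W_Y^2=\beta\}$, though this set has measure zero for each fixed $\beta$.
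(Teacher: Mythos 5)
Your proposal is correct and follows essentially the same route as the paper: the paper's proof likewise identifies the zero-wait policy with the threshold policy \eqref{eq_opt_solution} at $\beta=0$ and observes that $\beta=0$ satisfies the optimality equation \eqref{eq_coro_1} if and only if $\mathbb{E}[W_Y^4]=3\mathbb{E}[Y^2]=0$, i.e., $Y=0$ almost surely. Your additional monotonicity analysis of $h(\beta)=2\beta A(\beta)-B(\beta)$ supplies the uniqueness of the solution to \eqref{eq_coro_1} and the strict suboptimality of $\beta=0$ when $\mathbb{E}[Y^2]>0$, a step the paper's two-line argument leaves implicit.
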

\ifreport
\begin{proof}
See Appendix \ref{app_zerowait}. 
\end{proof}
\fi
\begin{theorem}\label{lem_zero_wait2}\cite{report_AgeOfInfo2016}
If $f_{\max}=\infty$, the zero-wait policy is optimal for solving \eqref{eq_age} if and only if \emph{
\begin{align}\label{eq_zero_wait2}
\mathbb{E}[Y^2]\leq 2~ \text{ess}\inf Y~ \mathbb{E}[Y],
\end{align}
where $\text{ess}\inf Y = \sup\{y\in[0,\infty): \Pr[Y< y]=0\}$.}
\end{theorem}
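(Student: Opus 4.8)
The plan is to reduce the ``if and only if'' to a one-dimensional minimization over the threshold parameter, and then exploit the fact that $\text{ess}\inf Y$ creates a flat region in the age cost. First I would note that the zero-wait policy \eqref{eq_Zero_wait} is exactly the time-threshold policy \eqref{eq_thm2_1} with $\beta=0$, since $\inf\{t\ge S_i+Y_i: t-S_i\ge 0\}=S_i+Y_i$. By Theorem~\ref{thm_2} the optimal policy in $\Pi_{\text{sig-independent}}$ is the time-threshold policy, so with $f_{\max}=\infty$ the constraint \eqref{eq_constraint} is vacuous and the value of \eqref{eq_age} equals $\min_{\beta\ge 0}g(\beta)$, where (by a renewal--reward computation of the area under the age sawtooth, which generalizes the optimal value \eqref{thm_2_obj}) the time-average age of the threshold policy with parameter $\beta$ is
\begin{align}
g(\beta)=\frac{\mathbb{E}[\max(\beta^2,Y^2)]}{2\,\mathbb{E}[\max(\beta,Y)]}+\mathbb{E}[Y].\nonumber
\end{align}
Hence zero-wait is optimal for \eqref{eq_age} if and only if $\beta=0$ attains $\min_{\beta\ge 0}g(\beta)$.

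Next I would differentiate $g$. Writing $A(\beta)=\mathbb{E}[\max(\beta,Y)]$ and $B(\beta)=\mathbb{E}[\max(\beta^2,Y^2)]=\mathbb{E}[\max(\beta,Y)^2]$, a direct computation gives $A'(\beta)=\Pr[Y<\beta]$ and $B'(\beta)=2\beta\Pr[Y<\beta]=2\beta A'(\beta)$, so that
\begin{align}
g'(\beta)=\frac{A'(\beta)\,\big(2\beta A(\beta)-B(\beta)\big)}{2A(\beta)^2}.\nonumber
\end{align}
Introducing $h(\beta)=2\beta A(\beta)-B(\beta)$, one checks $h'(\beta)=2A(\beta)+2\beta A'(\beta)-B'(\beta)=2A(\beta)>0$, so $h$ is strictly increasing; note that \eqref{eq_coro_2} is precisely the equation $h(\beta)=0$. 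The sign of $g'$ is therefore governed by the single product $A'(\beta)\,h(\beta)$.

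The decisive step is the observation that $A'(\beta)=\Pr[Y<\beta]$ vanishes for every $\beta\le y_0:=\text{ess}\inf Y$ and is strictly positive for every $\beta>y_0$. Thus $g'\equiv 0$ on $[0,y_0]$, i.e. $g$ is flat and equal to $g(0)$ there; on this region $\max(\beta,Y)=Y$ and $\max(\beta^2,Y^2)=Y^2$ almost surely, so $A(y_0)=\mathbb{E}[Y]$, $B(y_0)=\mathbb{E}[Y^2]$, whence
\begin{align}
h(y_0)=2y_0\,\mathbb{E}[Y]-\mathbb{E}[Y^2].\nonumber
\end{align}
Because $h$ is strictly increasing, its sign on all of $[y_0,\infty)$ is fixed by $h(y_0)$. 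If $\mathbb{E}[Y^2]\le 2\,y_0\,\mathbb{E}[Y]$, i.e. \eqref{eq_zero_wait2} holds, then $h(y_0)\ge 0$, so $h\ge 0$ and $g'\ge 0$ on $[y_0,\infty)$; combined with the flat region, $g$ is non-decreasing on $[0,\infty)$ and $\beta=0$ is a global minimizer, so zero-wait is optimal. Conversely, if \eqref{eq_zero_wait2} fails then $h(y_0)<0$, and by continuity $h(\beta)<0$ while $A'(\beta)>0$ for $\beta$ slightly above $y_0$, forcing $g'(\beta)<0$ there; hence $g$ strictly decreases below $g(0)$ and zero-wait is suboptimal.

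The main obstacle---and the only genuine subtlety---is correctly identifying the flat region $[0,\text{ess}\inf Y]$ of the age cost. Unlike the MMSE problem of Theorem~\ref{lem_zero_wait1}, where the analogue of $A'$ is $\Pr[W_Y^2<\beta]$, which is strictly positive for all $\beta>0$ (so the cost strictly decreases at $0^+$ whenever $Y\not\equiv 0$), here the gap at the bottom of the support of $Y$ lets zero-wait remain optimal over a nontrivial family of delay distributions, and the threshold condition surfaces exactly as the sign of $h(\text{ess}\inf Y)$. I would also dispatch the degenerate cases: if $\text{ess}\inf Y=0$ then \eqref{eq_zero_wait2} forces $\mathbb{E}[Y^2]=0$, i.e. $Y=0$ almost surely, in which case zero-wait samples continuously and attains zero age (trivially optimal), with the $0/0$ value $g(0)$ read as a limit per the footnote to Theorem~\ref{thm_1}.
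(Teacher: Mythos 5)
Your proof is correct, and it reaches the result by a genuinely different mechanism than the paper's. The paper also reduces to the one-parameter family of threshold policies via Theorem~\ref{thm_2}, but then works directly with the fixed-point equation \eqref{eq_coro_2}: for the ``if'' direction it exhibits the candidate $\beta=\mathbb{E}[Y^2]/(2\mathbb{E}[Y])$, checks that \eqref{eq_zero_wait2} forces $\beta\le\text{ess}\inf Y\le Y$ a.s.\ so that this $\beta$ verifies \eqref{eq_coro_2} and the policy \eqref{eq_thm2_1} degenerates to zero-wait; for the ``only if'' direction it argues that optimality of zero-wait forces the root of \eqref{eq_coro_2} to lie below $\text{ess}\inf Y$, whence $2\beta\mathbb{E}[Y]=\mathbb{E}[Y^2]$ and \eqref{eq_zero_wait2}. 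You instead minimize the renewal--reward cost $g(\beta)$ directly, computing $g'=A'(2\beta A-B)/(2A^2)$ and locating the flat region $[0,\text{ess}\inf Y]$ where $A'=\Pr[Y<\beta]$ vanishes, so that everything reduces to the sign of $h(\text{ess}\inf Y)=2\,\text{ess}\inf Y\,\mathbb{E}[Y]-\mathbb{E}[Y^2]$. What your route buys is a fully explicit ``only if'' direction: the paper's assertion that optimality of zero-wait implies the root of \eqref{eq_coro_2} satisfies $\beta\le\text{ess}\inf Y$ tacitly uses that no other threshold can tie with zero-wait, whereas your derivative computation shows outright that when \eqref{eq_zero_wait2} fails, $g$ strictly decreases just above $\text{ess}\inf Y$, so zero-wait is strictly beaten. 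What the paper's route buys is brevity, since it never needs the formula for $g(\beta)$ at non-optimal $\beta$ or its differentiability (your argument should, strictly speaking, note that $A$ and $B$ are convex hence absolutely continuous so the a.e.\ sign of $g'$ controls monotonicity, but this is routine). Both treatments of the degenerate case $Y\equiv 0$ are equally informal.
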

\ifreport
\begin{proof} 
See Appendix \ref{app_zerowait}. 
\end{proof}
Note that the condition in Theorem \ref{lem_zero_wait2} is weaker than that of Theorem 5 in \cite{report_AgeOfInfo2016}.
\else
Theorems \ref{lem_zero_wait1} and \ref{lem_zero_wait2} are proven in our technical report \cite{Sun_reportISIT17}.
 \fi




\section{Proof of the Main Result}\label{sec_proof}

We establish Theorem \ref{thm_1} in four steps: First, we employ the strong Markov property  of the Wiener process to simplify the optimal sampling problem \eqref{eq_DPExpected}. Second, we study the Lagrangian dual problem of the simplified problem, and decompose the Lagrangian dual problem into a series of \emph{mutually independent} per-sample control problems. Each of these per-sample control problems is a continuous-time Markov decision problem. Third, we utilize optimal stopping theory 
\cite{Shiryaev1978} 
to solve the per-sample control problems. Finally, we show that the Lagrangian duality gap is zero. By this, the original problem \eqref{eq_DPExpected} is solved. 
The details are as follows.
\subsection{Problem Simplification} %

We first provide a lemma that is crucial for simplifying \eqref{eq_DPExpected}.
\begin{lemma}\label{lem_zeroqueue}
In the optimal sampling problem \eqref{eq_DPExpected} for minimizing
the MMSE of the Wiener process, it is suboptimal to take a new sample before the previous sample is delivered.




%
\end{lemma}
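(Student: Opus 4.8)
The plan is to prove the lemma by an interchange argument at the level of sample paths, combined with the strong Markov property, which reduces the statement to a single monotonicity property of the optimal cost-to-go.

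First I would take any causal policy and suppose it generates some sample, say sample $i+1$, at a time $S_{i+1}<D_i$, i.e.\ while sample $i$ is still in transmission. Because the queue is FIFO and work-conserving, sample $i+1$ cannot start transmission before $D_i$, so its delivery time is $D_{i+1}=D_i+Y_{i+1}$ whether it is generated at $S_{i+1}$ or at $D_i$. I would therefore compare against the modified policy that postpones this generation to $D_i$ and continues optimally thereafter. This leaves every transmission-start and delivery epoch unchanged, and since it only delays a sampling instant it cannot violate the frequency constraint \eqref{eq_constraint}. The two policies produce identical estimates on $[0,D_{i+1})$, so the whole cost difference lives on $[D_{i+1},\infty)$ and is caused solely by the delivered value changing from the stale $W_{S_{i+1}}$ to the fresher $W_{D_i}$.

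Next I would apply the strong Markov property at the common epoch $D_{i+1}$. From a delivery epoch with an idle channel, the remaining MMSE depends on the past only through the current estimation gap $G=W_{D_{i+1}}-\hat W_{D_{i+1}}$; let $J(G)$ be the optimal cost-to-go. Under the postponed policy the gap is $G'=W_{D_{i+1}}-W_{D_i}$, while under the original one it is $G=G'+\eta$ with $\eta=W_{D_i}-W_{S_{i+1}}$ the Brownian increment over the wasted wait $[S_{i+1},D_i]$. Since $\eta$ and $G'$ are increments of $W$ over the disjoint intervals $[S_{i+1},D_i]$ and $[D_i,D_{i+1}]$, the strong Markov property makes them independent with $\mathbb{E}[\eta]=0$; thus $G$ is obtained from $G'$ by adding an independent, conditionally Gaussian, zero-mean perturbation. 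The lemma reduces to $\mathbb{E}[J(G)]\ge\mathbb{E}[J(G')]$, i.e.\ that the expected cost-to-go is nondecreasing as the (centered) entering gap is smeared by an independent Brownian increment.

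The hard part is exactly this monotonicity, and it is subtler than it first appears: it is \emph{not} a consequence of convexity of $J$, since $J$ is convex only near $G=0$ and becomes concave near the sampling threshold, so the usual mean-preserving-spread bound is unavailable. I would instead exploit the quadratic running cost through the optimal-stopping/HJB characterization of $J$. After each delivery the gap regenerates to a value independent of $G$, which forces the variational inequality $\tfrac12 J''(x)+(x^2-\rho)\ge 0$ (with $\rho$ the optimal average MMSE) and the two-region form $J''=2\mathbb{E}[Y]$ on the stopping set versus $J''=2(\rho-x^2)$ on the continuation set. The smearing monotonicity is then the heat-semigroup statement $\tfrac{d}{dv}\mathbb{E}[J(\mathcal N(0,v))]=\tfrac12\mathbb{E}[J''(\mathcal N(0,v))]\ge 0$, which I would verify by showing that the positive contribution of $J''$ on the stopping set dominates its negative contribution near the threshold; the hypotheses $\mathbb{E}[Y^2]<\infty$ and the regenerative structure justify the optional-stopping and Fubini manipulations. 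Iterating the interchange over every sample taken during a busy period removes them all, giving the lemma.
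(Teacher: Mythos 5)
Your opening move is the same as the paper's: postpone the generation of the offending sample to the moment the channel becomes free, observe that all delivery epochs and the frequency constraint are unaffected, and invoke the strong Markov property together with the zero-mean Brownian increment over the wasted waiting interval (Wald's identity, using $\mathbb{E}[Y_i^2]<\infty$). But you then take a wrong turn that manufactures a genuinely hard and unproven step. By letting the modified policy ``continue optimally thereafter,'' you are forced to compare the \emph{optimal} cost-to-go $J$ evaluated at two different entering gaps, $G'$ and $G=G'+\eta$, and the whole lemma collapses onto the monotonicity claim $\mathbb{E}[J(G'+\eta)]\ge\mathbb{E}[J(G')]$. You correctly note that this does not follow from convexity of $J$, and your proposed fix --- that the positive contribution of $J''$ on the stopping set dominates its negative contribution near the threshold, so that $\tfrac{d}{dv}\mathbb{E}[J(\mathcal N(0,v))]\ge 0$ --- is only asserted, not verified. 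It is also more delicate than you acknowledge: in this long-run-average formulation $J$ is a relative value function rather than a total cost, $G'$ and $\eta$ are increments over \emph{random} intervals (so the clean heat-semigroup identity needs a time-change argument), and the iteration over all samples in a busy period must be shown not to accumulate errors. As written, the proof has a hole exactly at its load-bearing step.

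The paper's proof sidesteps all of this by \emph{not} re-optimizing the continuation: it keeps every subsequent sampling time identical, so the two policies differ only in the reference value held by the estimator on the single interval between the affected delivery and the next one. The expected cost difference on that interval decomposes as
\begin{align}
\mathbb{E}\left[\int_{D_i}^{D_{i+1}} (W_{G_i}-W_{S_i})^2\, dt\right] + \mathbb{E}\left[\int_{D_i}^{D_{i+1}} 2(W_t-W_{G_i})\, dt\right]\mathbb{E}\left[W_{G_i}-W_{S_i}\right],
\end{align}
where the cross term vanishes by the independence you already identified plus $\mathbb{E}[W_{G_i}-W_{S_i}]=0$, leaving a manifestly nonnegative remainder. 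No property of the value function is needed, and repeating the interchange sample by sample finishes the argument. I recommend you replace the ``continue optimally'' step with this fixed-continuation comparison; the rest of your setup then goes through essentially verbatim.
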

\begin{proof}
See Appendix \ref{app_zeroqueue}. 
\end{proof}
{\blue In recent studies on age-of-information  \cite{SunInfocom2016,report_AgeOfInfo2016}, Lemma \ref{lem_zeroqueue} was intuitive and hence was used without a proof: If a sample is taken when the channel is busy, it needs to wait in the queue until its transmission starts, and becomes stale while waiting. A better method is 
to wait until the channel becomes idle, and then generate a new
sample, as stated in Lemma \ref{lem_zeroqueue}. However, this lemma is not intuitive in the MMSE minimization problem \eqref{eq_DPExpected}: The proof of Lemma \ref{lem_zeroqueue} relies on the strong Markov property of Wiener process, which may not hold for other signal processes. 
}

By Lemma \ref{lem_zeroqueue}, we only need to consider a sub-class of sampling policies $\Pi_1\subset\Pi$ such that each  sample is generated and submitted to the channel after the previous sample is delivered, i.e., 
\begin{align}
\Pi_1 
 &= \{\pi\in\Pi: S_{i} = G_{i} \geq D_{i-1} \text{ for all $i$}\}. 
\end{align}
This completely eliminates the waiting time wasted in the queue, and hence the queue is always kept empty. 
The \emph{information} that is available for determining $S_i$ includes the history of signal values $(W_t: t\in[0, S_i])$ and the channel delay  $(Y_1,\ldots, Y_{i-1})$ of previous samples.\footnote{Note that the generation times $(S_1,\ldots, S_{i-1})$ of previous samples are also included in this information.} To characterize this statement precisely, let us define the $\sigma$-fields $\mathcal{F}_t = \sigma(W_s: s\in[0, t])$ and $\mathcal{F}_t^+ = \cap_{s>t}\mathcal{F}_s$. Then, $\{\mathcal{F}_t^+,t\geq0\}$ is the {filtration} (i.e., a non-decreasing and right-continuous  family of  $\sigma$-fields) of the Wiener process $W_t$.
Given the transmission  durations $(Y_1,\ldots, Y_{i-1})$ of previous samples, $S_{i} $ is a \emph{stopping time} with respect to the filtration $\{\mathcal{F}_t^+,t\geq0\}$ of the Wiener process $W_t$, that is
\begin{align}
[\{S_{i}\leq t\} | Y_1,\ldots, Y_{i-1}] \in \mathcal{F}_t^+.\label{eq_stopping}
\end{align}  
Then, the policy space $\Pi_1$ can be alternatively expressed as
\begin{align}\label{eq_policyspace}
\!\!\!\!\Pi_1 = & \{S_i : [\{S_{i}\leq t\} | Y_1,\ldots, Y_{i-1}] \in \mathcal{F}_t^+, \nonumber\\
&~~~~~~~S_{i} = G_{i} \geq D_{i-1} \text{ for all $i$}, \nonumber\\
&~~~~~~~\text{$T_i = S_{i+1}-S_i$ is a regenerative process} \}.\!\!
\end{align}  


Let $Z_i = S_{i+1} - D_{i}\geq0$ represent the \emph{waiting time} between the delivery time $D_{i}$ of sample $i$  and the generation time $S_{i+1}$ of sample $i+1$. Then,
$S_i =  Z_{0} +\sum_{j=1}^{i-1} (Y_{j} + Z_j)$ and $D_i =  \sum_{j=0}^{i-1} (Z_{j}+Y_{j+1})$. If $(Y_1,Y_2,\ldots)$ is given, $(S_0,S_1,\ldots)$ is uniquely determined by $(Z_0,Z_1,\ldots)$. Hence, one can also use $\pi = (Z_0,Z_1,\ldots)$ to represent a sampling policy. 
\ignore{
Proof idea of the following result: Use \cite[Theorem  6.1.18]{Haas2002}. Note that $T_i$ is a discrete time regenerative process. You may find that $\frac{1}{n} 
\mathbb{E}[S_n]$ may directly follow from this theorem. On the other hand, $\int_0^{T} (W_t-\hat W_t)^2dt$ is related to the property of the Wiener process. One method could be first showing the renewal property on $k_i$ and then treat the integral as a renewal award. You can use the strong Markov property to prove i.i.d. property.}

Because $T_i $ is a regenerative process, by following the renewal theory in \cite{Ross1996} and \cite[Section 6.1]{Haas2002}, one can show that $\frac{1}{n} 
\mathbb{E}[S_n]$ is a convergent sequence and 
\begin{align}
&\limsup_{T\rightarrow \infty}\frac{1}{T}\mathbb{E}\left[\int_0^{T} (W_t-\hat W_t)^2dt\right] \nonumber
\\
=& \lim_{n\rightarrow \infty}\frac{\mathbb{E}\left[\int_0^{D_n} (W_t-\hat W_t)^2dt\right]}{\mathbb{E}[D_n]} \nonumber\\
=& \lim_{n\rightarrow \infty}\frac{\sum_{i=0}^{n-1}\mathbb{E}\left[\int_{D_{i}}^{D_{i+1}} (W_t- W_{S_{i}})^2dt\right]}{\sum_{i=0}^{n-1} \mathbb{E}\left[Y_i+Z_i\right]},\nonumber
\end{align}
where in the last step we have used $\mathbb{E}\left[D_n\right]=\mathbb{E}[\sum_{i=0}^{n-1} (Z_{i}+Y_{i+1})] = \mathbb{E}[\sum_{i=0}^{n-1} (Y_{i}+Z_{i})]$.
Hence, \eqref{eq_DPExpected} can be rewritten as the following Markov decision problem:
\begin{align}\label{eq_Simple}
{\mathsf{mmse}}_{\text{opt}}\triangleq&\min_{\pi\in\Pi_1} \lim_{n\rightarrow \infty}\frac{\sum_{i=0}^{n-1}\mathbb{E}\left[\int_{D_{i}}^{D_{i+1}} (W_t\!-\!W_{S_{i}})^2dt\right]}{\sum_{i=0}^{n-1} \mathbb{E}\left[Y_i+Z_i\right]} \\
&~\text{s.t.}~\lim_{n\rightarrow \infty} \frac{1}{n} 
\sum_{i=0}^{n-1} \mathbb{E}\left[Y_i+Z_i\right]\geq \frac{1}{f_{\max}},\label{eq_Simple_constraint}
\end{align}
where ${\mathsf{mmse}}_{\text{opt}}$ is the optimal  value of \eqref{eq_Simple}.

In order to solve \eqref{eq_Simple}, let us consider the following Markov decision problem with a parameter $c\geq 0$:
\begin{align}\label{eq_SD}
\!\!p(c)\!\triangleq\!\min_{\pi\in\Pi_1}&\lim_{n\rightarrow \infty}\frac{1}{n}\sum_{i=0}^{n-1}\!\mathbb{E}\!\left[\int_{D_{i}}^{D_{i+1}} \!\!\!\!(W_t-W_{S_{i}})^2dt\!-\! c(Y_i\!+\!Z_i)\!\right]\!\!\!\!\\
\text{s.t.}~&\lim_{n\rightarrow \infty} \frac{1}{n} 
\sum_{i=0}^{n-1} \mathbb{E}\left[Y_i+Z_i\right]\geq \frac{1}{f_{\max}},\nonumber
\end{align}
where $p(c)$ is the optimum  value of \eqref{eq_SD}. 
\begin{lemma} \label{lem_ratio_to_minus}
The following assertions are true:
\begin{itemize}
\vspace{0.5em}
\item[(a).] \emph{${\mathsf{mmse}}_{\text{opt}} \gtreqqless c $} if and only if $p(c)\gtreqqless 0$. 
\vspace{0.5em}
\item[(b).] If $p(c)=0$, the solutions to \eqref{eq_Simple}
and \eqref{eq_SD} are identical. 
\end{itemize}
\end{lemma}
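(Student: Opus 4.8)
The plan is to treat Lemma~\ref{lem_ratio_to_minus} as the standard parametric (Dinkelbach-type) equivalence between a fractional program and its linearized version, exploiting the fact that the frequency constraint keeps the denominator uniformly bounded away from zero. For a feasible policy $\pi\in\Pi_1$ satisfying \eqref{eq_Simple_constraint}, the regenerative assumption guarantees that the two cycle-averaged limits
\begin{align}
a(\pi) &\triangleq \lim_{n\rightarrow\infty}\frac{1}{n}\sum_{i=0}^{n-1}\mathbb{E}\!\left[\int_{D_i}^{D_{i+1}}(W_t-W_{S_i})^2 dt\right], \nonumber\\
b(\pi) &\triangleq \lim_{n\rightarrow\infty}\frac{1}{n}\sum_{i=0}^{n-1}\mathbb{E}[Y_i+Z_i] \nonumber
\end{align}
exist. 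Dividing numerator and denominator of the ratio in \eqref{eq_Simple} by $n$ shows that its objective equals $a(\pi)/b(\pi)$, while the objective of \eqref{eq_SD} equals $a(\pi)-c\,b(\pi)$; both problems share the same feasible set, and $b(\pi)\geq 1/f_{\max}>0$ by \eqref{eq_Simple_constraint}. The entire lemma then rests on the single identity $a(\pi)-c\,b(\pi)=b(\pi)\big(a(\pi)/b(\pi)-c\big)$ together with the positivity of $b(\pi)$.

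For assertion~(a), I would first establish three sign implications. If $\mathsf{mmse}_{\text{opt}}>c$, then $a(\pi)/b(\pi)\geq\mathsf{mmse}_{\text{opt}}>c$ for every feasible $\pi$, so $a(\pi)-c\,b(\pi)\geq (1/f_{\max})(\mathsf{mmse}_{\text{opt}}-c)>0$ using $b(\pi)\geq 1/f_{\max}$; taking the infimum yields $p(c)>0$. If $\mathsf{mmse}_{\text{opt}}=c$, the same bound (now with a nonnegative factor) gives $p(c)\geq 0$, while evaluating $a(\pi)-c\,b(\pi)$ at an optimal policy of \eqref{eq_Simple} gives $p(c)\leq 0$, so $p(c)=0$. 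If $\mathsf{mmse}_{\text{opt}}<c$, then some feasible $\pi_0$ has $a(\pi_0)/b(\pi_0)<c$, whence $a(\pi_0)-c\,b(\pi_0)<0$ and therefore $p(c)<0$. Since the three premises on $\mathsf{mmse}_{\text{opt}}$ are exhaustive and mutually exclusive and map onto the three mutually exclusive conclusions on $p(c)$, the implications upgrade to the biconditionals $\mathsf{mmse}_{\text{opt}}\gtreqqless c\iff p(c)\gtreqqless 0$, which is exactly~(a).

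For assertion~(b), suppose $p(c)=0$; by~(a) this forces $\mathsf{mmse}_{\text{opt}}=c$. A feasible $\pi$ solves \eqref{eq_Simple} iff $a(\pi)/b(\pi)=\mathsf{mmse}_{\text{opt}}=c$, and since $b(\pi)>0$ this is equivalent to $a(\pi)-c\,b(\pi)=0$; likewise, a feasible $\pi$ solves \eqref{eq_SD} iff $a(\pi)-c\,b(\pi)=p(c)=0$. The two optimality conditions coincide, so the sets of minimizers are identical.

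The only genuinely delicate point is the boundary direction $\mathsf{mmse}_{\text{opt}}=c\Rightarrow p(c)\leq 0$, which I obtain by evaluating the linearized objective at a minimizer of the ratio problem; this presupposes that the infimum in \eqref{eq_Simple} is attained. Attainment is consistent with the $\min$ in the problem statement and is ultimately certified by the explicit optimal policy constructed later; if one instead insisted on arguing purely with minimizing sequences, the subtlety would be that $a(\pi)/b(\pi)\downarrow c$ need not force $b(\pi)\big(a(\pi)/b(\pi)-c\big)\to 0$ unless $b(\pi)$ stays bounded. Thus the real content of the lemma is packaged into two structural facts — the existence of the regenerative cycle-average limits and the uniform lower bound $b(\pi)\geq 1/f_{\max}$ furnished by the sampling-frequency constraint — after which the equivalence is elementary.
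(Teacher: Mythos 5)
Your argument is, in substance, the same Dinkelbach-type fractional-programming equivalence the paper uses: both proofs reduce everything to the identity $a(\pi)-c\,b(\pi)=b(\pi)\bigl(a(\pi)/b(\pi)-c\bigr)$ together with positivity of the cycle-averaged denominator, which the paper also extracts from the regenerative assumption. The logical packaging differs slightly — you prove the three forward implications ($>$, $=$, $<$) and upgrade to biconditionals by trichotomy, whereas the paper proves the ``$\leq$'' and ``$<$'' equivalences directly (each in both directions) and gets the ``$>$'' case by negation — and you are right, and more candid than the paper, about the attainment subtlety in the boundary direction $\mathsf{mmse}_{\text{opt}}=c\Rightarrow p(c)\leq 0$.

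There is one concrete weak spot. Your implication $\mathsf{mmse}_{\text{opt}}>c\Rightarrow p(c)>0$ leans on the \emph{uniform} lower bound $b(\pi)\geq 1/f_{\max}>0$, but the paper explicitly allows $f_{\max}=\infty$ (it is the setting of Theorems \ref{lem_zero_wait1} and \ref{lem_zero_wait2}), in which case $1/f_{\max}=0$ and your bound only delivers $a(\pi)-c\,b(\pi)\geq 0$ per policy, hence $p(c)\geq 0$ after the infimum — not $p(c)>0$. That breaks the mutual exclusivity of your three conclusions and hence the trichotomy upgrade, and it is exactly the direction needed later (to conclude $\mathsf{mmse}_{\text{opt}}=c_{\text{opt}}$ from $p(c_{\text{opt}})=0$). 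The fix is the paper's route: establish ``$\mathsf{mmse}_{\text{opt}}\leq c\iff p(c)\leq 0$'' and its strict analogue directly, for which only the per-policy positivity $b(\pi)>0$ (guaranteed by $0<\mathbb{E}[(S_{k_{j+1}}-S_{k_j})^2]$ in the regenerative assumption) is needed, and then obtain the ``$>$'' case by negating the ``$\leq$'' case. Alternatively, when $\mathbb{E}[Y]>0$ one can replace $1/f_{\max}$ by the uniform bound $b(\pi)\geq\mathbb{E}[Y]$; but as written your argument does not cover $f_{\max}=\infty$.
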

\begin{proof}
See Appendix \ref{app_ratio_to_minus}.
\end{proof}
Hence, the solution to \eqref{eq_Simple} can be obtained by solving \eqref{eq_SD} and seeking a $c_{\text{opt}}\geq 0$ such that 
\begin{align}\label{eq_c}
p(c_{\text{opt}})=0. 
\end{align}

\subsection{Lagrangian Dual Problem of  \eqref{eq_SD}}

Although \eqref{eq_SD} is a continuous-time Markov decision problem with a continuous state space (not a convex optimization problem), it is possible to use the Lagrangian dual approach to solve \eqref{eq_SD} and show that it admits no duality gap. 

Define the following Lagrangian
\begin{align}\label{eq_Lagrangian}
& L(\pi;\lambda,c) \nonumber\\
=& \lim_{n\rightarrow \infty}\frac{1}{n}\sum_{i=0}^{n-1}\mathbb{E}\left[\int_{D_{i}}^{D_{i+1}} (W_t-W_{S_{i}})^2dt\!-\! (c+\lambda)(Y_i\!+\!Z_i)\right] \nonumber\\
&~+ \frac{\lambda}{f_{\max}}.
\end{align}
Let
\begin{align}\label{eq_primal}
g(\lambda,c) \triangleq \inf_{\pi\in\Pi_1}  L(\pi;\lambda,c).
\end{align}
Then, the Lagrangian dual problem of \eqref{eq_SD} is defined by
\begin{align}\label{eq_dual}
d(c) \triangleq \max_{\lambda\geq 0}g(\lambda,c),
\end{align}
where $d(c)$ is the optimum value of \eqref{eq_dual}. 
Weak duality \cite{Bertsekas2003,Boyd04} implies that $d(c) \leq p(c)$. 
In Section \ref{sec:dual}, we will establish strong duality, i.e., $d(c_{\text{opt}}) = p(c_{\text{opt}}) = 0$, 
at the optimal choice of $c = c_{\text{opt}}$.



In the sequel, we solve \eqref{eq_primal}.
By considering the sufficient statistics of the Markov decision problem \eqref{eq_primal}, we obtain
\begin{lemma}\label{lem_decompose}
For any $\lambda \geq 0$, there exists an optimal solution $(Z_0,Z_1,\ldots)$ to \eqref{eq_primal} in which $Z_i$ is independent of $(W_t, t\in[0,W_{S_i}])$ for all $i=1,2,\ldots$
\end{lemma}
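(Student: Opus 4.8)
The plan is to exploit the strong Markov property of the Wiener process to show that the minimization in \eqref{eq_primal} decouples into identical, mutually independent per-sample optimal stopping problems, each of which is solved by a rule that uses only the freshly observed increments of $W_t$ after $S_i$. Since the lemma asserts that some optimal $(Z_0,Z_1,\ldots)$ has each $Z_i$ independent of the history $\mathcal{F}_{S_i}=\sigma(W_s:s\in[0,S_i])$, it suffices to construct such a policy that attains $g(\lambda,c)$.

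First I would rewrite each per-sample cost in terms of \emph{fresh} quantities. Since every policy in $\Pi_1$ keeps the queue empty, $D_i=S_i+Y_i$ and $D_{i+1}=S_i+Y_i+Z_i+Y_{i+1}$. Writing $\tilde W^{(i)}_s \triangleq W_{S_i+s}-W_{S_i}$ and changing variables $t=S_i+s$ gives
\[
\int_{D_i}^{D_{i+1}}\!\!(W_t-W_{S_i})^2\,dt-(c+\lambda)(Y_i+Z_i)=\int_{Y_i}^{Y_i+Z_i+Y_{i+1}}\!\!(\tilde W^{(i)}_s)^2\,ds-(c+\lambda)(Y_i+Z_i)=:g(Z_i;\tilde W^{(i)},Y_i,Y_{i+1}),
\]
a fixed measurable functional of the fresh increment process $\tilde W^{(i)}$, the delays $Y_i,Y_{i+1}$, and the control $Z_i$. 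By the strong Markov property \cite{Shiryaev1978}, conditioned on $\mathcal{F}^+_{S_i}$ the process $\tilde W^{(i)}$ is a standard Wiener process independent of $\mathcal{F}^+_{S_i}$; and because $W$ and the delays arise from independent external processes, $\tilde W^{(i)}$ is also independent of $(Y_1,\ldots,Y_i)$ and of $Y_{i+1}$ (the latter being realized only after $S_{i+1}$). I would then define the single-stage value
\[
v^\star \triangleq \inf_{\tau}\; \mathbb{E}\big[g(\tau;\tilde W,Y,Y')\big],
\]
where $\tilde W$ is standard Brownian motion, $Y,Y'$ are i.i.d. copies of $Y_i$ independent of $\tilde W$, and the infimum is over stopping times $\tau\geq0$ of $\tilde W$. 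This value is finite: as $\tau\to\infty$ the quadratic growth of $\int(\tilde W_s)^2\,ds$ (with $\mathbb{E}[\tilde W_s^2]=s$) dominates the linear penalty $-(c+\lambda)\tau$, so $g\to+\infty$ and $v^\star>-\infty$.

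Next I would prove matching bounds. For the lower bound, fix any $\pi\in\Pi_1$ and condition on $\mathcal{G}_i\triangleq\sigma(\mathcal{F}_{S_i},Y_1,\ldots,Y_i)$. Given $\mathcal{G}_i$, the conditional law of $(\tilde W^{(i)},Y_i,Y_{i+1})$ equals the fixed product law above, and $Z_i$ is a stopping time of the fresh filtration; hence $\mathbb{E}[g(Z_i;\cdot)\mid\mathcal{G}_i]\geq v^\star$ pointwise, giving $\mathbb{E}[\mathrm{cost}_i]\geq v^\star$ for every $i$ and therefore $L(\pi;\lambda,c)\geq v^\star+\lambda/f_{\max}$. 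For achievability, let $\tau^\star$ attain (or $\epsilon$-attain) $v^\star$ and set $Z_i=\tau^\star(\tilde W^{(i)},Y_i)$, a deterministic functional of fresh information and thus independent of $\mathcal{F}_{S_i}$. Under this policy $\pi^\star$ the strong Markov property makes the tuples $(\tilde W^{(i)},Y_i,Z_i,Y_{i+1})$ i.i.d. across $i$, so each stage contributes exactly $v^\star$ and $L(\pi^\star;\lambda,c)=v^\star+\lambda/f_{\max}=g(\lambda,c)$; the i.i.d. inter-sampling times make $\pi^\star$ trivially regenerative, and finiteness of $v^\star$ yields $\mathbb{E}[(Y_i+Z_i)^2]<\infty$, so $\pi^\star\in\Pi_1$. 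This exhibits an optimal policy with $Z_i$ independent of $\mathcal{F}_{S_i}$.

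The hardest part will be the decoupling of future stages in the lower-bound step: a priori, choosing $Z_i$ as a function of the past could shape $S_{i+1}$ and hence influence later costs. The strong Markov property is exactly what rules this out, since $\tilde W^{(j)}$ is independent of $\mathcal{F}_{S_j}$ for every $j$, the distribution of the stage-$j$ cost never depends on how $S_j$ was reached, so no information carried from the past can push any stage below $v^\star$. Secondary care is needed in confirming the per-stage bound holds uniformly in $i$ (it does, as the fresh increments and delays are identically distributed), in verifying that $\tau^\star$ is a genuine stopping time of $\{\mathcal{F}^+_t\}$, and in checking the second-moment condition that places $\pi^\star$ in $\Pi_1$.
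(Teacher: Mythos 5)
Your argument is correct in substance and lands on the same decomposition as the paper, but it gets there by a more self-contained route. The paper's own proof is essentially two sentences: it observes, exactly as you do, that the stage-$i$ term of the Lagrangian is a functional only of $Z_i$ and the ``fresh'' information $\mathcal{I}_i=(Y_i,(W_{S_i+t}-W_{S_i},\,t\geq0))$, and then invokes the sufficient-statistic principle for Markov decision problems (citing Bertsekas and Kumar) to conclude that some optimal policy depends only on $\mathcal{I}_i$, which is independent of $(W_t,\,t\in[0,S_i])$ by the strong Markov property. You replace that citation with an explicit verification: define the single-stage value $v^\star$, show every admissible policy pays at least $v^\star$ per stage by conditioning on $\mathcal{G}_i$, and exhibit a stationary fresh-information policy attaining $v^\star$ per stage. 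What this buys is a proof that does not lean on the sufficient-statistic machinery (which is somewhat delicate for a continuous-time, average-cost problem over stopping times) and makes the cross-stage decoupling completely transparent; what the paper's version buys is brevity, deferring the work to standard references. Two points to tighten in yours: (i) since $Y_i\in\mathcal{G}_i$, the conditional law of $Y_i$ given $\mathcal{G}_i$ is degenerate, so the per-stage bound should read $\mathbb{E}[\mathrm{cost}_i\mid\mathcal{G}_i]\geq v^\star(Y_i)$ with $v^\star=\mathbb{E}[v^\star(Y)]$ (which only helps, as the sampler observes $Y_i$ before choosing $Z_i$); and (ii) finiteness of $v^\star$ does not by itself give $\mathbb{E}[(Y_i+Z_i)^2]<\infty$ --- you should either restrict the single-stage infimum to square-integrable stopping times (as membership in $\Pi_1$ already requires), argue with $\epsilon$-optimal stopping times in that class, or anticipate that the optimizer is the threshold rule of Theorem \ref{thm_optimal_stopping}, whose hitting time has finite moments of all orders.
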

\begin{proof}
See Appendix \ref{app_decompose}.
\end{proof}

Using the stopping times and martingale theory of the Wiener process, we obtain the following  lemma: 
%
\begin{lemma}\label{lem_stop}
Let $\tau \geq0$ be a stopping time of the Wiener process $W_t$ with $\mathbb{E}[\tau^2]<\infty$, 
then
\begin{align}\label{eq_stop}
\mathbb{E}\left[\int_0^\tau W_t^2 dt\right]= \frac{1}{6}\mathbb{E}\left[ W_\tau^4 \right].
\end{align}
\end{lemma}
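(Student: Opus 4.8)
The plan is to connect the time-integral of $W_t^2$ to the fourth moment $W_\tau^4$ through It\^o's formula and an optional-stopping argument. Applying It\^o's formula to the map $x\mapsto x^4$ gives, for every $t\geq 0$,
\begin{align}
W_t^4 = 4\int_0^t W_s^3\, dW_s + 6\int_0^t W_s^2\, ds. \nonumber
\end{align}
The first term $M_t = \int_0^t W_s^3\, dW_s$ is a local martingale with $M_0=0$, so the entire proof reduces to showing $\mathbb{E}[M_\tau]=0$: once this vanishing is established, taking expectations in the identity above and dividing by $6$ yields the claim $\mathbb{E}[\int_0^\tau W_s^2\,ds]=\frac16\mathbb{E}[W_\tau^4]$.

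Since $\tau$ may be unbounded, I cannot invoke the optional stopping theorem directly and would instead localize. Introduce the bounded stopping times $\tau_n = \tau \wedge n \wedge \inf\{t\geq 0:\,|W_t|\geq n\}$, on which the integrand obeys $|W_s^3|\leq n^3$. Then the stopped process $M^{\tau_n}$ is a genuine square-integrable martingale, so $\mathbb{E}[M_{\tau_n}]=0$, and evaluating the It\^o identity at $\tau_n$ gives
\begin{align}
\mathbb{E}[W_{\tau_n}^4] = 6\,\mathbb{E}\!\left[\int_0^{\tau_n} W_s^2\, ds\right]. \nonumber
\end{align}
It then remains to pass to the limit $n\to\infty$, using $\tau_n\uparrow\tau$ almost surely (Brownian motion does not explode in finite time, and $\mathbb{E}[\tau^2]<\infty$ forces $\tau<\infty$ a.s.).

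The right-hand side is straightforward: the integrand is nonnegative and $\tau_n\uparrow\tau$, so monotone convergence gives $\mathbb{E}[\int_0^{\tau_n}W_s^2\,ds]\to\mathbb{E}[\int_0^{\tau}W_s^2\,ds]$. The main obstacle is the left-hand side, where I must justify $\mathbb{E}[W_{\tau_n}^4]\to\mathbb{E}[W_\tau^4]$, and this is precisely where the hypothesis $\mathbb{E}[\tau^2]<\infty$ is used. By the Burkholder--Davis--Gundy inequality applied to the continuous martingale $W_t$, whose quadratic variation is $\langle W\rangle_t=t$, there is a universal constant $C$ with $\mathbb{E}[\sup_{t\leq\tau}W_t^4]\leq C\,\mathbb{E}[\langle W\rangle_\tau^2]=C\,\mathbb{E}[\tau^2]<\infty$. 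Because $W_{\tau_n}^4\leq\sup_{t\leq\tau}W_t^4$ and $W_{\tau_n}^4\to W_\tau^4$ a.s. by continuity of $W$, dominated convergence applies to the left-hand side, and combining the two limits proves the identity. The same bound also confirms both sides are finite, since by Cauchy--Schwarz $\mathbb{E}[\int_0^\tau W_s^2\,ds]\leq\mathbb{E}[\tau\sup_{s\leq\tau}W_s^2]\leq\sqrt{\mathbb{E}[\tau^2]\,\mathbb{E}[\sup_{s\leq\tau}W_s^4]}<\infty$.
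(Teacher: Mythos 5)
Your proof is correct, and its skeleton coincides with the paper's: both start from the martingale decomposition $W_t^4 = 4\int_0^t W_s^3\,dW_s + 6\int_0^t W_s^2\,ds$, apply optional stopping at a bounded truncation of $\tau$, use monotone convergence on the time-integral side, and then face the same crux of justifying $\mathbb{E}[W_{\tau_n}^4]\to\mathbb{E}[W_\tau^4]$. Where you diverge is precisely at that crux. The paper proceeds elementarily: it invokes the second martingale $W_t^4 - 6tW_t^2 + 3t^2$ to write $\mathbb{E}[W_{t\wedge\tau}^4] = \mathbb{E}[6(t\wedge\tau)W_{t\wedge\tau}^2 - 3(t\wedge\tau)^2]$, applies Cauchy--Schwarz and a quadratic-root argument to get the explicit bound $\mathbb{E}[W_{t\wedge\tau}^4]\le(3+\sqrt6)^2\,\mathbb{E}[\tau^2]$, passes this to $\mathbb{E}[W_\tau^4]<\infty$ via Fatou, and only then obtains the dominating function $\sup_{t\le\tau}W_t^4$ through Doob's $L^4$ maximal inequality. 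You instead reach the dominating function in one step via the Burkholder--Davis--Gundy inequality, $\mathbb{E}[\sup_{t\le\tau}W_t^4]\le C\,\mathbb{E}[\langle W\rangle_\tau^2] = C\,\mathbb{E}[\tau^2]$. Your route is shorter and makes transparent exactly where the hypothesis $\mathbb{E}[\tau^2]<\infty$ enters, at the cost of citing a heavier theorem; the paper's route uses only martingale identities, Cauchy--Schwarz, Fatou, and Doob, and yields an explicit constant, which is in keeping with its self-contained style. Your localization $\tau\wedge n\wedge\inf\{t:|W_t|\ge n\}$ versus the paper's $t\wedge\tau$ is an immaterial difference. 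Both arguments are complete.
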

\begin{proof}
See Appendix \ref{applem_stop}.
\end{proof}

If $Z_i$ is independent of $(W_t, t\in[0,W_{S_i}])$, by using Lemma \ref{lem_stop},  we can show that for every $i = 1,2,\ldots$,
\begin{align}\label{eq_integral}
&\mathbb{E}\left[\int_{D_{i}}^{D_{i+1}} (W_t- W_{S_i})^2dt\right] \nonumber\\
= & \frac{1}{6}\mathbb{E}\left[ (W_{S_{i}+Y_i+Z_i}-W_{S_i})^4 \right] + \mathbb{E}\left[Y_i + Z_i \right] \mathbb{E}\left[Y_i \right],  
\end{align}
which is proven in Appendix \ref{app_integral}.

Define the $\sigma$-fields $\mathcal{F}^{s}_t = \sigma(W_{s+v}-W_{s}: v\in[0, t])$ and $\mathcal{F}^{s+}_t = \cap_{v>t} \mathcal{F}^{s}_v$, as well as the {filtration} $\{\mathcal{F}^{s+}_t,t\geq0\}$ of the {time-shifted Wiener process} $\{W_{s+t}-W_{s},t\in[0,\infty)\}$. Define ${\mathfrak{M}}_s$ as the set of square-integrable stopping times of $\{W_{s+t}-W_{s},t\in[0,\infty)\}$, i.e.,
\begin{align}
\mathfrak{M}_{s} = \{\tau \geq 0:  \{\tau\leq t\} \in \mathcal{F}^{s+}_t, \mathbb{E}\left[\tau^2\right]<\infty\}.\nonumber
\end{align}
By using 
  \eqref{eq_integral} and considering the sufficient statistics of the Markov decision problem \eqref{eq_primal},  we  obtain
\begin{theorem}\label{thm_solution_form}
An optimal solution  $(Z_0,Z_1,\ldots)$ 
to 
\eqref{eq_primal} 
satisfies 
\begin{align}\label{eq_opt_stopping}
Z_i =&  \min_{\tau  \in \mathfrak{M}_{S_i+Y_i}} \mathbb{E}\left[\frac{1}{2}(W_{S_{i}+Y_i+\tau}-W_{S_i})^4\right.\nonumber\\
&~~~~~~~~~~~~~~~~\left.- {\beta}(Y_i+\tau)\bigg|W_{S_{i}+Y_i}-W_{S_i},Y_i \right],
\end{align}
where $\beta$ is given by 
\begin{align}\label{eq_beta_new}
\beta = 3(c + \lambda - \mathbb{E}\left[Y \right]).
\end{align}
\ignore{where $\beta \triangleq 3(c_{\text{opt}} + \lambda - \mathbb{E}\left[Y \right]) \geq0$ is determined by solving
\begin{align}\label{eq_equation}
\mathbb{E}[Y_i+Z_i] \!=\! \max\left(\frac{1}{f_{\max}}, \frac{\mathbb{E}[(W_{S_{i}+Y_i+Z_i}-W_{S_i})^4]}{2\beta}\right).\!
\end{align}
The optimal  value of \eqref{eq_Simple} is then given by \emph{
\begin{align}\label{thm_3_obj}
\mathsf{mmse}_{\text{opt}}=\frac{\mathbb{E}[(W_{S_{i}+Y_i+Z_i}-W_{S_i})^4]}{6\mathbb{E}[Y_i+Z_i] } + \mathbb{E}[Y].
\end{align}}}
\end{theorem}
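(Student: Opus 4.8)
The plan is to turn the unconstrained average-cost Lagrangian \eqref{eq_primal} into a collection of decoupled per-sample optimal stopping problems, each of which is solved over $\mathfrak{M}_{S_i+Y_i}$. First I would invoke Lemma \ref{lem_decompose} to restrict attention, without loss of optimality, to policies in which each waiting time $Z_i$ is independent of the signal history $(W_t, t\in[0,S_i])$. Under this independence the closed form \eqref{eq_integral} applies term by term, so I would substitute it into the $i$-th summand of the Lagrangian \eqref{eq_Lagrangian}. Discarding the additive constant $\lambda/f_{\max}$, which is irrelevant to the minimization over $\pi$, the summand becomes
\begin{align}
&\mathbb{E}\left[\int_{D_i}^{D_{i+1}}(W_t-W_{S_i})^2dt-(c+\lambda)(Y_i+Z_i)\right]\nonumber\\
&\quad= \frac{1}{6}\mathbb{E}\left[(W_{S_i+Y_i+Z_i}-W_{S_i})^4\right]-\left(c+\lambda-\mathbb{E}[Y]\right)\mathbb{E}[Y_i+Z_i],\nonumber
\end{align}
where I have used $\mathbb{E}[Y_i]=\mathbb{E}[Y]$ to absorb the cross term $\mathbb{E}[Y_i+Z_i]\,\mathbb{E}[Y_i]$ into the linear cost. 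Recognizing $\beta=3(c+\lambda-\mathbb{E}[Y])$ as in \eqref{eq_beta_new}, the right-hand side equals $\tfrac13\mathbb{E}[\tfrac12(W_{S_i+Y_i+Z_i}-W_{S_i})^4-\beta(Y_i+Z_i)]$, which is exactly the per-sample objective appearing in \eqref{eq_opt_stopping}.

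The second step is the decoupling argument. Since \eqref{eq_primal} is the \emph{unconstrained} infimum over $\Pi_1$ (the sampling-rate constraint is already dualized into $\lambda$), minimizing the time-average $\lim_n\frac1n\sum_{i=0}^{n-1}$ of the rescaled summands can be carried out stage by stage. Here I would lean on the strong Markov property: taking $W_{S_{i+1}}=W_{S_i+Y_i+Z_i}$ as the new reference point, the post-$(S_i+Y_i)$ increments $\{W_{S_i+Y_i+t}-W_{S_i+Y_i}\}_{t\ge0}$ form a fresh standard Wiener process independent of the past, and $Y_{i+1}$ is \emph{i.i.d.}; hence the choice of $Z_i$ does not couple to later stages through the cost, and the relevant ``state'' at each decision epoch is the pair $(W_{S_i+Y_i}-W_{S_i},\,Y_i)$, which is known when $Z_i$ is selected. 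A sufficient-statistics argument then shows the overall infimum is attained by independently minimizing each stage's conditional expected cost, with $Z_i$ ranging over the square-integrable stopping times $\tau\in\mathfrak{M}_{S_i+Y_i}$ of that fresh Wiener process. Conditioning on the observed state yields precisely \eqref{eq_opt_stopping}.

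The main obstacle is the rigorous justification of this decoupling: interchanging the infimum over the full causal policy class $\Pi_1$ with the per-stage minimization, and confirming that the per-stage optima assemble into a feasible, regenerative policy in $\Pi_1$ that attains the time-average infimum. This requires (i) verifying through the sufficient-statistics argument that retaining memory beyond $(W_{S_i+Y_i}-W_{S_i},Y_i)$ cannot help, which hinges on the strong Markov property together with Lemma \ref{lem_decompose}; and (ii) controlling the exchange of limit, expectation, and infimum, for which the square-integrability built into $\mathfrak{M}_{s}$, the moment assumption $\mathbb{E}[Y_i^2]<\infty$, and Lemma \ref{lem_stop} guarantee that all quantities in \eqref{eq_integral} are finite. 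The remaining work---the substitution and the algebraic simplification to the form \eqref{eq_opt_stopping}---is routine once these structural facts are in place.
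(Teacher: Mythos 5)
Your proposal follows essentially the same route as the paper's proof: invoke Lemma \ref{lem_decompose} so that \eqref{eq_integral} applies, substitute it into the per-sample Lagrangian term to absorb $\mathbb{E}[Y_i+Z_i]\,\mathbb{E}[Y_i]$ into the linear cost and identify $\beta=3(c+\lambda-\mathbb{E}[Y])$, and then decouple via a sufficient-statistics/strong-Markov argument showing the state $(W_{S_i+Y_i}-W_{S_i},\,Y_i)$ suffices for choosing $Z_i$. The algebra and the structure match the paper's Appendix proof; the interchange-of-infimum issue you flag is exactly the step the paper also disposes of by citing the sufficient-statistic results of Bertsekas and Kumar.
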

\begin{proof}
See Appendix \ref{app_solution_form}. 
\end{proof}

\subsection{Per-Sample Optimal Stopping Solution to \eqref{eq_opt_stopping}}
We use  optimal stopping theory  \cite{Shiryaev1978} 
to solve \eqref{eq_opt_stopping}.
Let us first pose \eqref{eq_opt_stopping} in the language of  optimal stopping.
A continuous-time two-dimensional Markov chain $X_t$ on a probability space $(\mathbb{R}^2,\mathcal{F},\mathbb{P})$ is defined as follows:  Given the initial state $ X_0=x = (s,b)$, the state $X_t$ at time $t$ is  
\begin{align}\label{eq_Markov}
X_t= (s + t, b + W_t), 
\end{align}
where $\{W_t,t\geq 0\}$ is a standard Wiener process. Define $\mathbb{P}_x(A) = \mathbb{P}(A | X_0 = x)$ and $\mathbb{E}_x Z = \mathbb{E}(Z | X_0 = x)$, respectively, 
as the conditional probability of event $A$ and the conditional expectation of random variable $Z$ 
for given  initial state $X_0 =x$. Define the $\sigma$-fields $\mathcal{F}^{X}_t  = \sigma(X_{v}: v\in[0, t])$ and $\mathcal{F}^{X+}_t  = \cap_{v>t} \mathcal{F}^{X}_v$, as well as the {filtration} $\{\mathcal{F}^{X+}_t ,t\geq0\}$ of the Markov chain $X_t$.
A random variable $\tau: \mathbb{R}^2\rightarrow [0,\infty)$ is said to be a \emph{stopping time} of $X_t$ if $\{\tau \leq t\}\in \mathcal{F}^{X+}_t $ {for all}~$t\geq0$. Let ${\mathfrak{M}}$ be the set of  square-integrable stopping times of $X_t$, i.e.,
\begin{align}
\mathfrak{M} = \{\tau \geq 0:  \{\tau\leq t\} \in \mathcal{F}^{X+}_t , \mathbb{E}\left[\tau^2\right]<\infty\}.\nonumber
\end{align}

Our goal is to solve the following optimal stopping problem:
\begin{align}\label{eq_stop_prob}
\sup_{\tau\in \mathfrak{M}}  \mathbb{E}_x g(X_\tau),
\end{align}
where the function $g: \mathbb{R}^2\rightarrow \mathbb{R}$ is defined as 
\begin{align}\label{eq_fun}
g(s,b) =\beta s  - \frac{1}{2}b^4
\end{align}
with parameter $\beta \geq 0$, and $x$ is the initial state of the Markov chain $X(t)$. Notice that \eqref{eq_stop_prob} becomes  \eqref{eq_opt_stopping}
if the initial state is $ x=(Y_i,W_{S_{i}+Y_i}-W_{S_i})$,  and  $W_t$ is replaced by $W_{S_{i}+Y_i+t}-W_{S_i}$. 
\begin{figure}
\centering \includegraphics[width=0.35\textwidth]{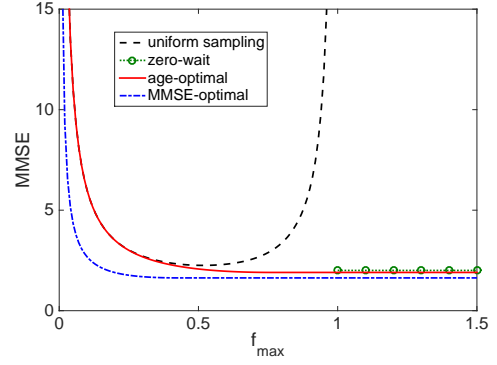} \caption{MMSE vs. ${f_{\max}}$ tradeoff for \emph{i.i.d.} exponential channel delay.}
\label{fig1} \vspace{-0.cm}
\end{figure}

\begin{theorem}\label{thm_optimal_stopping}
For all initial state $(s,b)\in \mathbb{R}^2$ and  $\beta\geq0$, an optimal stopping time for solving \eqref{eq_stop_prob} is
\begin{align}\label{eq_opt_stop_solution}
\tau^* = \inf \left\{ t \geq 0:\! \left|b+W_t\right| \!\geq\! \sqrt{\beta}\right\}.
\end{align}
\end{theorem}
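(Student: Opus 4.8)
The plan is to solve \eqref{eq_stop_prob} by the guess-and-verify (free-boundary) method of optimal stopping \cite{Shiryaev1978}. The first step is to exploit the additive structure of the reward: since $g(X_\tau)=\beta s+\left[\beta\tau-\frac12(b+W_\tau)^4\right]$ and the bracketed term does not depend on the starting time coordinate $s$, the value function splits as $V(s,b)=\beta s+V_0(b)$ with $V_0(b)=\sup_{\tau\in\mathfrak{M}}\mathbb{E}\left[\beta\tau-\frac12(b+W_\tau)^4\right]$. The generator of $X_t=(s+t,b+W_t)$ is $\mathcal{L}=\partial_s+\frac12\partial_b^2$, so on the conjectured continuation region $\{|b|<\sqrt{\beta}\}$ the equation $\mathcal{L}V=\beta+\frac12 V_0''(b)=0$ forces $V_0''(b)=-2\beta$, i.e. $V_0(b)=-\beta b^2+Ab+B$. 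The symmetry of the problem under $b\mapsto-b$ eliminates the linear term, and imposing continuous fit $V_0(\pm\sqrt{\beta})=-\frac12\beta^2$ at the conjectured boundary fixes $B=\frac12\beta^2$.

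This produces the explicit candidate
\begin{align}
V_0(b)=
\begin{cases}
-\beta b^2+\frac12\beta^2, & |b|\le\sqrt{\beta},\\
-\frac12 b^4, & |b|>\sqrt{\beta},
\end{cases}\nonumber
\end{align}
which one checks is $C^1$ across $b=\pm\sqrt{\beta}$ (smooth fit: both one-sided derivatives equal $-2\beta^{3/2}$). I would then verify the two classical conditions ensuring that $V(s,b)=\beta s+V_0(b)$ equals the value function. The majorization $V\ge g$ reduces, on the continuation region, to $\frac12 b^4-\beta b^2+\frac12\beta^2=\frac12(b^2-\beta)^2\ge 0$, which holds with equality exactly on the boundary; on the stopping region it is an identity. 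The superharmonicity $\mathcal{L}V\le 0$ is automatic inside the continuation region (where it vanishes) and on the stopping region reduces to $\mathcal{L}V=\beta-3b^2\le\beta-3\beta\le0$, using $b^2\ge\beta$ and $\beta\ge0$.

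The last step is the verification/martingale argument. Applying the generalized It\^o formula to the $C^1$, piecewise-$C^2$ function $V$ gives $V(X_{t\wedge\tau})=V(x)+\int_0^{t\wedge\tau}\mathcal{L}V(X_u)\,du+\int_0^{t\wedge\tau}\partial_b V(X_u)\,dW_u$; since $\mathcal{L}V\le0$, the process $V(X_t)$ is a supermartingale, whence for every $\tau\in\mathfrak{M}$ one obtains $\mathbb{E}_x g(X_\tau)\le\mathbb{E}_x V(X_\tau)\le V(x)$, so $V(x)$ dominates the supremum. Because $\mathcal{L}V=0$ strictly inside the continuation region while $V=g$ on its boundary, the threshold $\tau^*$ of \eqref{eq_opt_stop_solution} turns both inequalities into equalities, which proves that $\tau^*$ is optimal. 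I expect the main obstacle to be the rigorous justification of this final passage rather than the algebra: one must argue that the $C^1$ smooth fit makes the local-time contribution at $b=\pm\sqrt{\beta}$ vanish, so that ordinary It\^o applies despite the jump in $V_0''$, and one must control integrability in order to let $t\to\infty$ and discard the stochastic-integral term. Here the standing condition $\mathbb{E}[\tau^2]<\infty$ in the definition of $\mathfrak{M}$ supplies the necessary moments of $\tau$ and of $W_\tau$ to secure uniform integrability; for $\tau^*$ itself this is immediate, since $b+W_{\tau^*}\in\{\pm\sqrt{\beta}\}$ is bounded and $\mathbb{E}[\tau^*]=\beta-b^2<\infty$, making the optional-stopping limit straightforward.
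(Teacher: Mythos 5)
Your proposal is correct and follows essentially the same route as the paper: both construct the same candidate value function (yours by solving the free-boundary ODE with continuous/smooth fit, the paper by directly computing $\mathbb{E}_x g(X_{\tau^*})$ via $\mathbb{E}_x\tau^*=\beta-b^2$), then verify the majorization $V\ge g$ through the identity $\tfrac12(b^2-\beta)^2\ge0$, establish superharmonicity $\mathcal{L}V\le0$ by the generalized It\^{o}/It\^{o}--Tanaka formula with the local-time term vanishing thanks to the $C^1$ fit, and conclude by the standard verification (excessive-function) theorem. The only difference is presentational, so there is nothing substantive to flag.
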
 


In order to prove Theorem \ref{thm_optimal_stopping}, let us define the function
$u(x) = \mathbb{E}_x g(X_{\tau^*})$ and establish some properties of $u(x)$.

\begin{lemma} \label{lem1_stop}
 $u(x)\geq g(x)$ for all $x\in \mathbb{R}^2$, and 
\begin{align}
u(s,b) = \left\{\begin{array}{l l} 
\beta s- \frac{1}{2}b^4 , &\text{if}~ b^2 \geq \beta;
\vspace{0.5em}\\
\beta s+ \frac{1}{2} \beta ^2- \beta b^2 , &\text{if}~ b^2 < \beta.
\end{array}\right.
\end{align}
\end{lemma}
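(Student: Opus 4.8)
The plan is to evaluate $u(s,b)=\mathbb{E}_x[g(X_{\tau^*})]$ by direct computation, exploiting the fact that $\tau^*$ is an explicit exit time of the Wiener process from the symmetric interval $(-\sqrt{\beta},\sqrt{\beta})$. Writing out $g(X_{\tau^*})=\beta(s+\tau^*)-\frac{1}{2}(b+W_{\tau^*})^4$ and taking expectations gives
\begin{align}
u(s,b) = \beta s + \beta\,\mathbb{E}[\tau^*] - \frac{1}{2}\mathbb{E}\left[(b+W_{\tau^*})^4\right],\nonumber
\end{align}
so the whole lemma reduces to computing the two moments $\mathbb{E}[\tau^*]$ and $\mathbb{E}[(b+W_{\tau^*})^4]$, after which I would split into the two cases stated according to whether the starting point $b$ already lies outside the interval.

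First I would dispose of the easy case $b^2\geq\beta$, i.e. $|b|\geq\sqrt{\beta}$. Here the threshold condition $|b+W_t|\geq\sqrt{\beta}$ already holds at $t=0$, so $\tau^*=0$, $X_{\tau^*}=(s,b)$, and therefore $u(s,b)=g(s,b)=\beta s-\frac{1}{2}b^4$. This matches the first branch, and trivially yields $u(x)=g(x)\geq g(x)$.

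For the case $b^2<\beta$, I would evaluate both moments. Since $|b+W_{\tau^*}|=\sqrt{\beta}$ at the stopping time, the fourth moment is immediate: $(b+W_{\tau^*})^4=\beta^2$ almost surely, so $\mathbb{E}[(b+W_{\tau^*})^4]=\beta^2$. For the expected stopping time I would apply the optional stopping theorem to the martingale $M_t=(b+W_t)^2-t$; from $\mathbb{E}[M_{\tau^*}]=M_0=b^2$ together with $(b+W_{\tau^*})^2=\beta$ one gets $\mathbb{E}[\tau^*]=\beta-b^2$. Substituting both into the expression for $u$ gives $u(s,b)=\beta s+\beta(\beta-b^2)-\frac{1}{2}\beta^2=\beta s+\frac{1}{2}\beta^2-\beta b^2$, which is the second branch. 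The inequality then follows from the algebraic identity
\begin{align}
u(s,b)-g(s,b) = \frac{1}{2}\beta^2-\beta b^2+\frac{1}{2}b^4 = \frac{1}{2}(\beta-b^2)^2 \geq 0.\nonumber
\end{align}

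The only delicate point, and the step I would treat most carefully, is justifying the optional stopping argument: one must verify that $\tau^*$ is almost surely finite and integrable—indeed square-integrable, so that $\tau^*\in\mathfrak{M}$ and the earlier stopping-time machinery applies. This is standard for the exit time of Brownian motion from a bounded interval, which even has finite exponential moments, and it legitimizes passing to the limit in $\mathbb{E}[M_{t\wedge\tau^*}]\to\mathbb{E}[M_{\tau^*}]$, using that $(b+W_{t\wedge\tau^*})^2$ is uniformly bounded by $\beta$. Everything else is elementary algebra.
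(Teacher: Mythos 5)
Your proof is correct and follows essentially the same route as the paper: the same case split on $b^2 \gtrless \beta$, the identity $(b+W_{\tau^*})^2=\beta$ on exit, the expected exit time $\mathbb{E}[\tau^*]=\beta-b^2$, and the final algebraic identity $u-g=\frac{1}{2}(b^2-\beta)^2\geq 0$. The only cosmetic difference is that you derive $\mathbb{E}[\tau^*]$ by optional stopping of $(b+W_t)^2-t$ (with the appropriate integrability check), whereas the paper simply cites the standard exit-time formula from Durrett.
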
 
\begin{proof}
See Appendix \ref{app_optimal_stopping}. 
\end{proof}

A function $f(x)$ is said to be \emph{excessive} for the process $X_t$ if \cite{Shiryaev1978}
\begin{align}
\mathbb{E}_x f(X_t)\leq f(x),~\text{for all}~ t\geq 0, ~x\in \mathbb{R}^2.
\end{align} 
By using the It\^{o}-Tanaka-Meyer formula in stochastic calculus, we can obtain
\begin{lemma}\label{lem2_stop}
The function $u(x)$ is excessive for the process $X_t$.
\end{lemma}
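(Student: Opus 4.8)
The plan is to deduce the excessive inequality $\mathbb{E}_x u(X_t) \le u(x)$ from a drift computation: I would show that the generator of $X_t$ applied to $u$ is nonpositive and then integrate via It\^o's formula. Since $X_t = (s+t,\, b+W_t)$, the $s$-coordinate moves with unit drift while the $b$-coordinate is a standard Wiener process, so the relevant generator acting on a smooth test function $f$ is $\mathcal{A}f = \partial_s f + \tfrac{1}{2}\partial_{bb} f$. First I would record the partial derivatives of $u$ on each region from Lemma~\ref{lem1_stop}: on $b^2 \ge \beta$ one has $\partial_s u = \beta$, $\partial_b u = -2b^3$, $\partial_{bb} u = -6b^2$, while on $b^2 < \beta$ one has $\partial_s u = \beta$, $\partial_b u = -2\beta b$, $\partial_{bb} u = -2\beta$. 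A direct check shows that $u$ and $\partial_b u$ agree across the interface $b = \pm\sqrt{\beta}$, so $u$ is $C^1$ on $\mathbb{R}^2$, with $\partial_b u$ absolutely continuous and piecewise $C^1$; only $\partial_{bb} u$ jumps (from $-2\beta$ to $-6\beta$) across the interface.

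The core estimate is then the pointwise bound $\mathcal{A}u \le 0$ off the interface. On $b^2 < \beta$ one gets $\mathcal{A}u = \beta - \beta = 0$ (the continuation region, where $u$ is $\mathcal{A}$-harmonic), and on $b^2 > \beta$ one gets $\mathcal{A}u = \beta - 3b^2 \le \beta - 3\beta = -2\beta \le 0$, since $\beta \ge 0$. Hence $\mathcal{A}u \le 0$ Lebesgue-almost everywhere.

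Next I would apply the It\^o-Tanaka-Meyer formula to $u(X_t)$. Because $u$ is $C^1$ with absolutely continuous first derivative and locally bounded a.e.\ second derivative, the generalized It\^o formula yields
\begin{align}
u(X_t) = u(X_0) + \int_0^t \partial_b u(X_r)\, dW_r + \int_0^t \left[\partial_s u(X_r) + \tfrac{1}{2}\partial_{bb} u(X_r)\right] dr, \nonumber
\end{align}
with the second derivative understood in the a.e.\ sense. The would-be local-time contribution on the interface $\{b = \pm\sqrt{\beta}\}$ vanishes precisely because $\partial_b u$ is continuous there (no jump in the first derivative); equivalently, the curve $b^2 = \beta$ carries zero occupation measure for the Wiener coordinate, so the measure-zero set where $\partial_{bb} u$ is ambiguous is irrelevant to the time integral. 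Taking $\mathbb{E}_x$ and using that $\partial_b u$ has at most cubic growth while $b + W_r$ has finite moments uniformly on $[0,t]$, the stochastic integral is a genuine martingale of zero mean, so
\begin{align}
\mathbb{E}_x u(X_t) = u(x) + \mathbb{E}_x\!\int_0^t \mathcal{A}u(X_r)\, dr \le u(x), \nonumber
\end{align}
which is the claimed excessiveness.

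The main obstacle I anticipate is the rigorous justification of the It\^o step for the non-$C^2$ function $u$: one must invoke the correct generalized It\^o / Tanaka-Meyer formula and argue that the jump of $\partial_{bb} u$ across the smooth interface contributes no local-time term, which hinges on the $C^1$-matching verified above. A secondary technical point is confirming the true-martingale (mean-zero) property of $\int_0^t \partial_b u(X_r)\, dW_r$, for which the square-integrability $\mathbb{E}_x\int_0^t \left(\partial_b u(X_r)\right)^2 dr < \infty$ follows from the polynomial growth of $\partial_b u$ together with the finiteness of all moments of the Gaussian $b + W_r$ on the bounded interval $[0,t]$.
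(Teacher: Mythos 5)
Your proposal is correct and follows essentially the same route as the paper: both verify that $u$ is $C^1$ with a piecewise-continuous second derivative, apply the It\^o--Tanaka--Meyer formula (reducing the second-order term to an ordinary time integral since the interface $b=\pm\sqrt{\beta}$ carries zero occupation time), check that the stochastic integral is a mean-zero martingale via the polynomial growth of $\partial_b u$, and conclude from the pointwise bound $\partial_s u + \tfrac{1}{2}\partial_{bb}u \le 0$ off the interface. The computations of the generator on the two regions ($0$ on $b^2<\beta$ and $\beta-3b^2\le 0$ on $b^2>\beta$) match the paper's exactly.
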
 

\begin{proof}
See Appendix \ref{app_optimal_stopping1}. 
\end{proof}

Now, we are ready to prove Theorem \ref{thm_optimal_stopping}. 
\begin{proof}[Proof of Theorem \ref{thm_optimal_stopping}]
In Lemma \ref{lem1_stop} and Lemma \ref{lem2_stop}, we have shown that $u(x)=\mathbb{E}_x g(X_{\tau^*})$ is an excessive function and $u(x)\geq g(x)$. In addition, it is known that $\mathbb{P}_x(\tau^*<\infty) = 1$ for all $x\in \mathbb{R}^2$ \cite[Theorem 8.5.3]{Durrettbook10}. These conditions, together with the Corollary to Theorem 1 in \cite[Section 3.3.1]{Shiryaev1978}, imply that $\tau^*$ is an optimal stopping time of \eqref{eq_stop_prob}. This completes the proof.
\end{proof}

An immediate consequence of Theorem \ref{thm_optimal_stopping} is
\begin{corollary}\label{coro_stop}
An optimal solution to \eqref{eq_opt_stopping} is
\begin{align}\label{eq_opt_stop_solution1}
Z_i = \left\{\!\!\!\begin{array}{c c}\inf \left\{ t \geq 0:\! \left|W_{S_i+Y_i+t}-W_{S_i}\right| \geq \sqrt{\beta}\right\},& \!\!\text{if } \beta \geq 0 ;\\ 0, & \!\!\text{if } \beta < 0.\!\!
\end{array}\right.
\end{align}
\end{corollary}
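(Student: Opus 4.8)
The plan is to deduce the corollary directly from Theorem~\ref{thm_optimal_stopping} by matching the per-sample problem \eqref{eq_opt_stopping} with the optimal stopping problem \eqref{eq_stop_prob}, treating the cases $\beta\geq0$ and $\beta<0$ separately.

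For $\beta\geq0$, the key device is the strong Markov property of the Wiener process. Conditioning first on $Y_i=y$ (which is independent of $W_t$) makes $S_i+y$ a stopping time of the filtration $\{\mathcal{F}^+_t\}$, so the time-shifted process $\tilde W_t := W_{S_i+Y_i+t}-W_{S_i+Y_i}$ is a standard Wiener process started at $0$ and independent of $\mathcal{F}^+_{S_i+Y_i}$; in particular $W_{S_i+Y_i+\tau}-W_{S_i} = b + \tilde W_\tau$ with $b = W_{S_i+Y_i}-W_{S_i}$. Substituting this into the conditional expectation in \eqref{eq_opt_stopping}, the objective to be \emph{minimized} becomes $\mathbb{E}[\tfrac{1}{2}(b+\tilde W_\tau)^4 - \beta(s+\tau)] = -\mathbb{E}_x\, g(X_\tau)$, with $g$ as in \eqref{eq_fun}, $X_t$ as in \eqref{eq_Markov}, and initial state $x=(s,b)=(Y_i,\, W_{S_i+Y_i}-W_{S_i})$. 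Hence minimizing \eqref{eq_opt_stopping} is equivalent to the maximization \eqref{eq_stop_prob}, and the minimizing stopping time coincides with the maximizer, which by Theorem~\ref{thm_optimal_stopping} is $\tau^* = \inf\{t\geq0: |b+\tilde W_t|\geq\sqrt{\beta}\}$. Rewriting $b+\tilde W_t = W_{S_i+Y_i+t}-W_{S_i}$ then gives $Z_i = \inf\{t\geq0: |W_{S_i+Y_i+t}-W_{S_i}|\geq\sqrt{\beta}\}$, which is exactly the first branch of \eqref{eq_opt_stop_solution1}.

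For $\beta<0$ I would argue that stopping immediately, $Z_i=0$, is optimal. Writing $M_t = b+\tilde W_t$, the objective in \eqref{eq_opt_stopping} equals $\tfrac{1}{2}\mathbb{E}[M_\tau^4] - \beta Y_i - \beta\,\mathbb{E}[\tau]$. Since $M_t$ is a martingale and $x\mapsto x^4$ is convex, $M_t^4$ is a submartingale, so optional stopping yields $\mathbb{E}[M_\tau^4]\geq M_0^4 = b^4$; moreover $-\beta\,\mathbb{E}[\tau]\geq0$ because $\beta<0$. Both terms are therefore simultaneously minimized at $\tau=0$, so $Z_i=0$ is optimal, matching the second branch of \eqref{eq_opt_stop_solution1}.

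The main obstacle is the careful handling of integrability so that the optional stopping and strong Markov identities are licensed: one must verify that the constraint $\mathbb{E}[\tau^2]<\infty$ defining $\mathfrak{M}_{S_i+Y_i}$ supplies the uniform integrability needed to pass from the finite-time submartingale inequality to its stopped version, and that the candidate $\tau^*$ (an exit time of a bounded interval) indeed lies in $\mathfrak{M}_{S_i+Y_i}$. Once this measure-theoretic bookkeeping of the time-shift is in place, the identification of \eqref{eq_opt_stopping} with \eqref{eq_stop_prob} is immediate and the corollary follows.
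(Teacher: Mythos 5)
Your argument for $\beta\geq 0$ is exactly the paper's route: the corollary is presented there as an immediate consequence of Theorem~\ref{thm_optimal_stopping} via the very identification you make --- initial state $x=(Y_i,\,W_{S_i+Y_i}-W_{S_i})$, time-shifted Wiener process, and the sign flip turning the minimization \eqref{eq_opt_stopping} into the maximization \eqref{eq_stop_prob} --- together with the integrability checks you flag (which are the content of Lemma~\ref{lem_stop}'s proof). Your submartingale argument for $\beta<0$ is a correct addition that the paper leaves implicit, since Theorem~\ref{thm_optimal_stopping} is stated only for $\beta\geq 0$ and the later duality analysis shows $\beta=3(c_{\text{opt}}+\lambda^\star-\mathbb{E}[Y])\geq 0$ at the optimum, so that branch is never actually exercised.
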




\subsection{Zero Duality Gap between \eqref{eq_SD} and \eqref{eq_dual} at $c = c_{\text{opt}}$ } \label{sec:dual}
\begin{theorem}\label{thm_zero_gap}
The following assertions are true:
\begin{itemize}
\vspace{0.5em}
\item[(a).] The duality gap between \eqref{eq_SD} and \eqref{eq_dual} is zero such that $d(c) = p(c)$.
\vspace{0.5em}
\item[(b).] A common optimal solution to \eqref{eq_DPExpected},  \eqref{eq_Simple}, and \eqref{eq_SD} with \emph{$c = c_{\text{opt}}$},  is given by \eqref{eq_opt_solution} and \eqref{eq_thm1}. 
\end{itemize}
\end{theorem}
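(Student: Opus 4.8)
The plan is to prove part (a) by exhibiting a saddle point of the Lagrangian and then assemble part (b) from the per-sample optimal stopping solution. For part (a), weak duality already gives $d(c)\le p(c)$, so it suffices to produce, for each fixed $c$, a multiplier $\lambda^*\ge 0$ together with the threshold policy $\pi^*$ minimizing $L(\cdot;\lambda^*,c)$ such that $\pi^*$ is primal-feasible and complementary slackness holds. By Theorem \ref{thm_solution_form} and Corollary \ref{coro_stop}, for every $\lambda\ge0$ the inner infimum $g(\lambda,c)$ is attained by the threshold policy with $\beta(\lambda)=3(c+\lambda-\mathbb{E}[Y])$, whose expected inter-sample time is $\mathbb{E}[\max(\beta(\lambda),W_Y^2)]$ via the optional stopping identity $\mathbb{E}[(W_{S_{i+1}}-W_{S_i})^2]=\mathbb{E}[Y_i+Z_i]$. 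This quantity is continuous and nondecreasing in $\lambda$, so the achieved frequency $f(\lambda)=1/\mathbb{E}[\max(\beta(\lambda),W_Y^2)]$ is continuous and nonincreasing. If $f(0)\le f_{\max}$ I take $\lambda^*=0$ (the constraint already holds and slackness is automatic); otherwise I invoke the intermediate value theorem to pick $\lambda^*>0$ with $f(\lambda^*)=f_{\max}$, activating the constraint. In either case $\lambda^*(1/f_{\max}-\mathbb{E}[Y+Z])=0$, so $g(\lambda^*,c)=L(\pi^*;\lambda^*,c)$ equals the primal objective of the feasible point $\pi^*$, which is $\ge p(c)$; hence $d(c)\ge p(c)$ and the gap is zero.

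For part (b), I first establish existence of $c_{\text{opt}}$ with $p(c_{\text{opt}})=0$: the map $c\mapsto p(c)$ is continuous and strictly decreasing (increasing $c$ subtracts a larger multiple of $\mathbb{E}[Y_i+Z_i]\ge 1/f_{\max}>0$), with $p(0)>0$ and $p(c)\to-\infty$ as $c\to\infty$, so the intermediate value theorem yields $c_{\text{opt}}$. By Lemma \ref{lem_ratio_to_minus}, this gives $\mathsf{mmse}_{\text{opt}}=c_{\text{opt}}$ and makes the minimizers of \eqref{eq_Simple} and \eqref{eq_SD} coincide; combined with the equivalence of \eqref{eq_DPExpected} and \eqref{eq_Simple} and the zero gap from part (a), the common optimizer is the threshold policy \eqref{eq_opt_solution} with $\beta=3(c_{\text{opt}}+\lambda^*-\mathbb{E}[Y])$ furnished by Corollary \ref{coro_stop}. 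Since $c_{\text{opt}}=\mathsf{mmse}_{\text{opt}}\ge\mathbb{E}[Y]$, this $\beta\ge0$, so the threshold (rather than zero-wait) branch indeed applies.

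It remains to reduce the abstract parameter $\beta$ to the explicit equations. Along the threshold policy, a case analysis on whether $|W_{S_i+Y_i}-W_{S_i}|$ already exceeds $\sqrt\beta$ shows the terminal increment satisfies $(W_{S_{i+1}}-W_{S_i})^2=\max(\beta,W_Y^2)$ and $(W_{S_{i+1}}-W_{S_i})^4=\max(\beta^2,W_Y^4)$. Substituting these into the optional stopping identity and into \eqref{eq_integral} yields $\mathbb{E}[Y+Z]=\mathbb{E}[\max(\beta,W_Y^2)]$ and the objective \eqref{thm_1_obj}. Finally, $c_{\text{opt}}=\mathsf{mmse}_{\text{opt}}$ together with $\beta/3=c_{\text{opt}}+\lambda^*-\mathbb{E}[Y]$ gives $\beta/3-\lambda^*=\mathbb{E}[\max(\beta^2,W_Y^4)]/(6\,\mathbb{E}[\max(\beta,W_Y^2)])$; splitting on the two cases of part (a)---$\lambda^*=0$ versus $\mathbb{E}[\max(\beta,W_Y^2)]=1/f_{\max}$---and using $\lambda^*\ge0$ shows that the right-hand side of \eqref{eq_thm1} selects the correct branch of the max, so \eqref{eq_thm1} holds.

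I expect the main obstacle to be part (a): rigorously justifying that the achieved frequency $f(\lambda)$ is continuous in $\lambda$ (so the intermediate value argument applies) and that the limits defining the time-average Lagrangian interchange correctly under the regenerative structure. Establishing monotonicity and continuity of the expected threshold hitting time in $\beta$, together with ruling out degeneracies at the boundary $\beta\to0$, is the delicate technical core; the remaining algebra reducing to \eqref{eq_thm1} and \eqref{thm_1_obj} is routine once these are in place.
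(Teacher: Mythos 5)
Your proposal is correct and follows essentially the same route as the paper: exhibit a geometric multiplier via the per-sample stopping solution (Theorem \ref{thm_solution_form} and Corollary \ref{coro_stop}) with complementary slackness to close the duality gap, transfer optimality back through Lemma \ref{lem_ratio_to_minus} at $c=c_{\text{opt}}$, use $c_{\text{opt}}=\mathsf{mmse}_{\text{opt}}\geq\mathbb{E}[Y]$ to get $\beta\geq0$, and evaluate $\mathbb{E}[\max(\beta,W_Y^2)]$ and $\mathbb{E}[\max(\beta^2,W_Y^4)]$ by the threshold case analysis plus Wald's identity. The only (harmless) deviations are that you make explicit the intermediate-value arguments for the existence of $\lambda^\star$ and $c_{\text{opt}}$, which the paper leaves implicit, and that you obtain $\mathbb{E}[Y_i+Z_i]$ from Wald's identity applied to the total stopping time rather than from the expected exit time of $[-\sqrt{\beta},\sqrt{\beta}]$ as in the paper's Step 3 --- both yield the same expressions.
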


\begin{proof}[Proof Sketch of Theorem \ref{thm_zero_gap}]
We first use Theorem \ref{thm_solution_form} and Corollary \ref{coro_stop} to find a geometric multiplier \cite{Bertsekas2003} for the primal problem \eqref{eq_SD}. Hence, the duality gap between \eqref{eq_SD} and \eqref{eq_dual} is zero, because otherwise there exists no geometric multiplier \cite[Section 6.1-6.2]{Bertsekas2003}. By this, part (a) is proven. Next, we consider the case of $c =  c_{\text{opt}}$ and use Lemma \ref{lem_ratio_to_minus}  to prove part (b).
See Appendix \ref{app_zero_gap} for the details.
\end{proof}


\ignore{
The proof of Theorem  \ref{thm_zero_gap} relies on the analysis of the following mixture policies:
Let $\pi_{\text{mix}}$ be a mixture policy  which adopts policy $\pi_1$ with probability $p$ and $\pi_2$ with  probability $1-p$, where $\pi_1,\pi_2 \in\Pi_1,0\leq p\leq 1$.  However, this mixture policy may satisfy $\pi_{\text{mix}}\notin \Pi_1$, because a mixture of two stopping times is not necessarily a stopping time if the mixture probability $p$ is unknown. To resolve this difficulty, we consider a larger policy space. Let $\Pi_{1,\text{mix}}$ denotes the set of mixture policies, defined as
\begin{align}
\Pi_{1,\text{mix}} =\bigg \{\pi_{\text{mix}}: &~\pi_{\text{mix}} \text{ is the mixture of a number of policies } \nonumber\\ 
&\!\!\!\!\!\!\!\!\!\!\!\!\text{ $\pi_i \in \Pi_1$ such that }  \pi_{\text{mix}}=\pi_i \text{~with probability $p_i$}, \nonumber\\
&\!\!\!\!\!\!\!\!\!\!\!\! \sum_{i}p_i = 1, ~p_i \geq 0\bigg\}.\nonumber
\end{align}
Hence, 
\begin{align}\label{eq_policy_space}
\Pi_1\subset \Pi_{1,\text{mix}}.
\end{align} Consider the following problem with  policy space $\Pi_{1,\text{mix}}$:
\begin{align}\label{eq_SD_mix}
\!\!p_{\text{mix}}\!\triangleq\!\min_{\pi\in\Pi_{1,\text{mix}}}\!&\lim_{n\rightarrow \infty}\!\frac{1}{n}\!\sum_{i=0}^{n-1}\!\mathbb{E}\!\!\left[\int_{D_{i}}^{D_{i+1}} \!\!\!\!\!\!\!\!(W_t\!-\!W_{S_{i}})^2dt\!-\! c_{\text{opt}}(Y_i\!+\!Z_i)\!\right]\!\!\!\!\\
\text{s.t.}~~&\lim_{n\rightarrow \infty} \frac{1}{n} 
\sum_{i=0}^{n-1} \mathbb{E}\left[Y_i+Z_i\right]\geq \frac{1}{f_{\max}},\label{eq_SD_constraint_mix}
\end{align}
where $p_{\text{mix}}$ is the optimum value of \eqref{eq_SD_mix}. By  \eqref{eq_policy_space}, we can get
$p_{\text{mix}} \leq p(c_{\text{opt}}).$
Let
\begin{align}\label{eq_primal_mix}
g_{\text{mix}}(\lambda,c_{\text{opt}}) = \inf_{\pi\in\Pi_{1,\text{mix}}}  L(\pi;\lambda,c_{\text{opt}}).
\end{align}
Then, the Lagrangian dual problem of \eqref{eq_SD_mix} is defined by
\begin{align}\label{eq_dual_mix}
d_{\text{mix}} \triangleq \max_{\lambda\geq 0}g_{\text{mix}}(\lambda,c_{\text{opt}}).
\end{align}
Now, we are ready to prove Theorem \ref{thm_zero_gap}:

\begin{proof}[Proof sketch of Theorem \ref{thm_zero_gap}] We first establish the strong duality between \eqref{eq_SD_mix} and \eqref{eq_dual_mix}, i.e., $d_{\text{mix}}= p_{\text{mix}}$. Using this, we can show that the MMSE-optimal  policy in \eqref{eq_opt_solution} and \eqref{eq_thm1} achieves the optimal value $p_{\text{mix}}$ of Problem \eqref{eq_SD_mix}. Because the MMSE-optimal  policy is a \emph{pure} policy in $\Pi_1$, it also achieves the optimal value $p(c_{\text{opt}})$ of Problem \eqref{eq_SD} with $c=c_{\text{opt}}$. From this, we can obtain $d(c_{\text{opt}})=d_{\text{mix}}= p_{\text{mix}} = p(c_{\text{opt}})$.
See Appendix \ref{app_zero_gap} for the details.
\end{proof}}
Hence, Theorem~\ref{thm_1} follows from Theorem \ref{thm_zero_gap}.


\ifreport
\begin{figure}
\centering \includegraphics[width=0.35\textwidth]{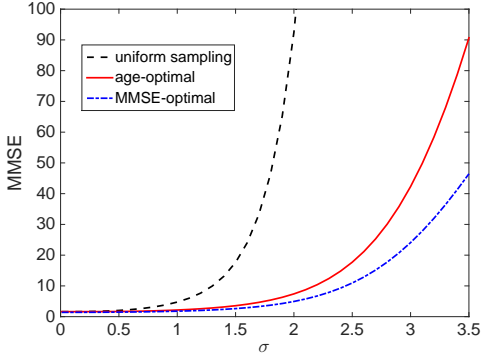} \caption{MMSE vs. the scale parameter  $\sigma$ of \emph{i.i.d.} log-normal channel delay for $f_{\max}=0.8$.}
\label{fig2} \vspace{-0.cm}
\end{figure}
\else
\fi
\section{Numerical Results}
In this section, we evaluate the estimation performance achieved by the following four sampling policies:  

\begin{itemize}
\item[1.] \emph{Uniform sampling:} The policy in \eqref{eq_uniform} with $\beta = f_{\max}$. 
\item[2.] \emph{Zero-wait sampling \cite{2015ISITYates,report_AgeOfInfo2016,SunInfocom2016,KaulYatesGruteser-Infocom2012}:} The sampling policy in \eqref{eq_Zero_wait}, which is feasible when $f_{\max}\geq \EE[Y_i]$.

\item[3.] \emph{Age-optimal sampling \cite{report_AgeOfInfo2016,SunInfocom2016}:} The sampling policy in \eqref{eq_thm2_1} and \eqref{eq_thm2}, which is the optimal solution to  \eqref{eq_age}.

\item[4.] \emph{MMSE-optimal sampling:} The sampling policy in \eqref{eq_opt_solution} and \eqref{eq_thm1}, which is the optimal solution to \eqref{eq_DPExpected}.
\end{itemize}
Let ${\mathsf{mmse}_{\text{uniform}}}$, ${\mathsf{mmse}_{\text{zero-wait}}}$, ${\mathsf{mmse}_{\text{age-opt}}}$, and ${\mathsf{mmse}_{\text{opt}}}$, 
be the MMSEs of uniform sampling, zero-wait sampling, age-optimal sampling,  MMSE-optimal sampling, respectively. 
According to \eqref{eq_compare_aware}, as well as the facts that uniform sampling is  feasible for \eqref{eq_age} and zero-wait sampling is  feasible for \eqref{eq_age} when $f_{\max}\geq \EE[Y_i]$,  
we can obtain  
\begin{align}
&{\mathsf{mmse}_{\text{opt}}}\leq{\mathsf{mmse}_{\text{age-opt}}}\leq{\mathsf{mmse}_{\text{uniform}}}, \nonumber\\
&{\mathsf{mmse}_{\text{opt}}}\leq{\mathsf{mmse}_{\text{age-opt}}}\leq{\mathsf{mmse}_{\text{zero-wait}}},~\text{when $f_{\max}\geq \EE[Y_i]$,} \nonumber
\end{align}
which fit with our numerical results below.

\ifreport
\else
\begin{figure}
\centering \includegraphics[width=0.33\textwidth]{./figure1} \caption{MMSE vs. ${f_{\max}}$ tradeoff for \emph{i.i.d.} exponential channel delay.}
\label{fig1} \vspace{-1em}
\end{figure}
\fi

Figure \ref{fig1} depicts the tradeoff between MMSE and $f_{\max}$  for \emph{i.i.d.} exponential channel delay with mean $\mathbb{E}[Y_i] = 1/\mu=1$. Hence, the maximum throughput of the channel is $\mu=1$. In this setting, ${\mathsf{mmse}_{\text{uniform}}}$ is characterized by eq. (25) of \cite{KaulYatesGruteser-Infocom2012}, which was obtained using a D/M/1 queueing model. For small values of $f_{\max}$,   age-optimal sampling is similar with uniform sampling, and hence ${\mathsf{mmse}_{\text{age-opt}}}$ and ${\mathsf{mmse}_{\text{uniform}}}$ are of similar values. However, as $f_{\max}$ approaches the maximum throughput $1$, ${\mathsf{mmse}_{\text{uniform}}}$ increases to infinite. 
This is because the queue length in uniform sampling is large at high sampling frequencies, and the samples become stale during their long waiting times in the queue. On the other hand, ${\mathsf{mmse}_{\text{opt}}}$ and ${\mathsf{mmse}_{\text{age-opt}}}$ decrease with respect to $f_{\max}$. The reason is that the set of feasible policies satisfying the constraints in \eqref{eq_DPExpected} and \eqref{eq_age} becomes larger as $f_{\max}$ grows, and hence the optimal  values of \eqref{eq_DPExpected} and \eqref{eq_age} are decreasing in $f_{\max}$. Moreover, the gap between ${\mathsf{mmse}_{\text{opt}}}$ and ${\mathsf{mmse}_{\text{age-opt}}}$ is large for small values of $f_{\max}$. The ratio ${\mathsf{mmse}_{\text{opt}}}/{\mathsf{mmse}_{\text{age-opt}}}$ tends to $1/3$ as $f_{\max}\rightarrow 0$, which is in accordance with \eqref{eq_ratio_MSE}.
As we expected, ${\mathsf{mmse}_{\text{zero-wait}}}$ is larger than ${\mathsf{mmse}_{\text{opt}}}$ and ${\mathsf{mmse}_{\text{age-opt}}}$ when $f_{\max}\geq1$.


\ifreport

\begin{figure}
\centering \includegraphics[width=0.35\textwidth]{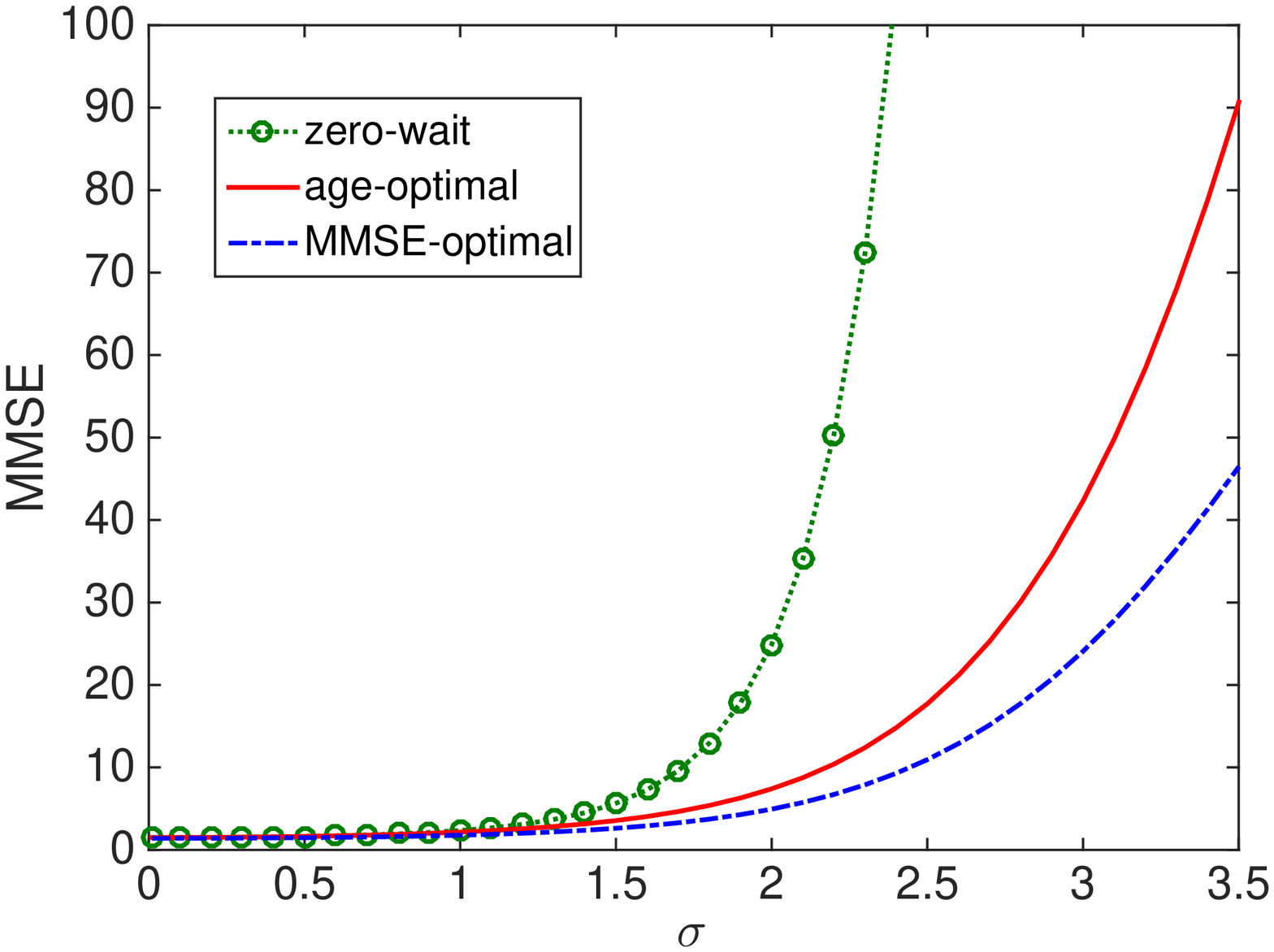} \caption{MMSE vs. the scale parameter  $\sigma$ of \emph{i.i.d.} log-normal channel delay for $f_{\max}=1.5$.}
\label{fig3} \vspace{-0.cm}
\end{figure}

Figure \ref{fig2} and Figure \ref{fig3} illustrate the MMSE of \emph{i.i.d.} log-normal channel delay for $f_{\max}=0.8$ and $f_{\max}=1.5$, respectively, where $Y_i = e^{\sigma X_i}/ \mathbb{E}[e^{\sigma X_i}]$, $\sigma>0$ is the scale parameter of log-normal distribution, and $(X_1,X_2,\ldots)$ are \emph{i.i.d.} Gaussian random variables with zero mean and unit variance. Because $ \mathbb{E}[Y_i] = 1$, the maximum throughput of the channel is  $1$. In Fig. \ref{fig2}, since $f_{\max}<1$, zero-wait sampling is not feasible and hence is not plotted. As the scale parameter $\sigma$ grows, the tail of the log-normal distribution becomes heavier and heavier. We observe that ${\mathsf{mmse}_{\text{uniform}}}$ grows quickly with respect to $\sigma$, much faster than ${\mathsf{mmse}_{\text{opt}}}$ and ${\mathsf{mmse}_{\text{age-opt}}}$. 
In addition, the gap between ${\mathsf{mmse}_{\text{opt}}}$ and ${\mathsf{mmse}_{\text{age-opt}}}$ increases as $\sigma$ grows.
In Fig. \ref{fig3}, because $f_{\max}> 1$, ${\mathsf{mmse}_{\text{uniform}}}$ is infinite and hence is not plotted. We can find that ${\mathsf{mmse}_{\text{zero-wait}}}$ grows quickly with respect to $\sigma$ and is much larger than ${\mathsf{mmse}_{\text{opt}}}$ and ${\mathsf{mmse}_{\text{age-opt}}}$. 
\else

Figure \ref{fig2} illustrates the MMSE of \emph{i.i.d.} log-normal channel delay for $f_{\max}=1.5$, where $Y_i = e^{\sigma X_i}/ \mathbb{E}[e^{\sigma X_i}]$, $\sigma>0$ is the scale parameter of log-normal distribution, and $(X_1,X_2,\ldots)$ are \emph{i.i.d.} Gaussian random variables with zero mean and unit variance. Because $ \mathbb{E}[Y_i] = 1$, the maximum throughput of the channel is  $1$. Because $f_{\max}> 1$, ${\mathsf{mmse}_{\text{uniform}}}$ is infinite and hence is not plotted. As the scale parameter $\sigma$ grows, the tail of the log-normal distribution becomes heavier and heavier. We observe that ${\mathsf{mmse}_{\text{zero-wait}}}$ grows quickly with respect to $\sigma$ and is much larger than ${\mathsf{mmse}_{\text{opt}}}$ and ${\mathsf{mmse}_{\text{age-opt}}}$. In addition, the gap between ${\mathsf{mmse}_{\text{opt}}}$ and ${\mathsf{mmse}_{\text{age-opt}}}$ increases as $\sigma$ grows. 
\begin{figure}
\centering \includegraphics[width=0.33\textwidth]{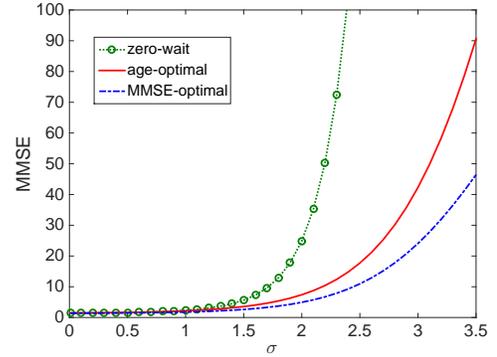} \caption{MMSE vs. the scale parameter  $\sigma$ of \emph{i.i.d.} log-normal channel delay for $f_{\max}=1.5$.}
\label{fig2} \vspace{-0.5em}
\end{figure}


\section*{Acknowledgement}
The authors are grateful to Ness B. Shroff and Roy D. Yates for their careful reading of the paper and valuable suggestions.
\fi

\section{Conclusion}
In this paper, we have investigated optimal sampling and remote estimation of the Wiener process over a channel with random delay. 
The optimal sampling policy for minimizing the mean square estimation error subject to a sampling frequency constraint has been obtained. We prove that the optimal sampling policy is a threshold policy, and find the optimal threshold. Analytical and numerical comparisons with several important sampling policies, including age-optimal sampling, zero-wait sampling, and classic uniform sampling,  have been provided. 
The results in this paper generalize recent research on ago-of-information by adding a signal model, and can be  also considered a contribution to the rich literature on remote estimation by adding a channel that consists of a queue with random delay. 

\section*{Acknowledgement}
The authors are grateful to Ness B. Shroff and Roy D. Yates for their careful reading of the paper and valuable suggestions.


\bibliographystyle{IEEEtran}
\bibliography{ref,AgeofRealtimeMeasurements,sueh}
\appendices

\section{Proof of \eqref{eq_age_MSE}}\label{app_age}
If $\pi$ is independent of $\{W_t,t\in[0,\infty)\}$, the $S_i$'s and $D_i$'s are independent of $\{W_t,t\in[0,\infty)\}$. Hence,
\begin{align}
&\mathbb{E}\left\{\int_{D_{i}}^{D_{i+1}} (W_t-\hat W_t)^2dt\right\} \nonumber\\
\overset{(a)}{=}& \mathbb{E}\left\{\mathbb{E}\left\{\int_{D_{i}}^{D_{i+1}} (W_t-W_{S_i})^2dt\Bigg|S_i,D_i,D_{i+1}\right\} \right\} \nonumber\\
\overset{(b)}{=}& \mathbb{E}\left\{\int_{D_{i}}^{D_{i+1}} \mathbb{E}\left\{(W_t-W_{S_i})^2|S_i,D_i,D_{i+1}\right\} dt\right\} \nonumber\\
\overset{(c)}{=}& \mathbb{E}\left\{\int_{D_{i}}^{D_{i+1}} \mathbb{E}\left[(W_t-W_{S_i})^2\right] dt\right\} \nonumber\\
\overset{(d)}{=}& \mathbb{E}\left\{\int_{D_{i}}^{D_{i+1}}  (t - S_i) dt\right\} \nonumber\\
\overset{(e)}{=}& \mathbb{E}\left\{\int_{D_{i}}^{D_{i+1}}  \Delta(t) dt\right\}, \nonumber
\end{align} 
where step (a) is due to the law of iterated expectations, step (b) is due to Fubini's theorem, step (c) is because $S_i,D_i,D_{i+1}$ are  independent of the Wiener process, step (d) is due to Wald's identity $\mathbb{E}[W_T^2] = T$ \cite[Theorem 2.48]{BMbook10} and the strong Markov property of the Wiener process \cite[Theorem 2.16]{BMbook10}, and step (e) is due to \eqref{eq_age_def}. By this, \eqref{eq_age_MSE} is proven.

\section{Proofs of \eqref{eq_betaasy} and  \eqref{eq_opt_limit_1}}\label{app_low_delay}
If $f_{\max}\rightarrow 0$,  \eqref{eq_thm1} tells us that
\begin{align}
\mathbb{E}[\max(\beta,W_Y^2)] =\frac{1}{f_{\max}},\nonumber
\end{align}
which implies
\begin{align}
\beta \leq \frac{1}{f_{\max}} \leq \beta+ \mathbb{E}[W_Y^2] = \beta+ \mathbb{E}[Y].\nonumber
\end{align}
Hence,
\begin{align}
\frac{1}{f_{\max}} -\mathbb{E}[Y] \leq \beta \leq \frac{1}{f_{\max}}.\nonumber
\end{align}
If $f_{\max}\rightarrow 0$, \eqref{eq_betaasy} follows. Because $Y$ is independent of the Wiener process, using the law of iterated expectations and the Gaussian distribution of the Wiener process, we can obtain $\mathbb{E}[W_Y^4]=3\mathbb{E}[Y^2]$ and $\mathbb{E}[W_Y^2]=3\mathbb{E}[Y]$. Hence, 
\begin{align}
\beta\leq &~\mathbb{E}[\max(\beta,W_Y^2)] \leq \beta+ \mathbb{E}[W_Y^2]=\beta+ \mathbb{E}[Y],\nonumber\\
\beta^2\leq &~\mathbb{E}[\max(\beta^2,W_Y^4)] \leq \beta^2+ \mathbb{E}[W_Y^4]=\beta^2+ 3\mathbb{E}[Y^2].\nonumber
\end{align}
Therefore, 
\begin{align}\label{eq_eq_betaasy}
\frac{\beta^2}{\beta+ \mathbb{E}[Y]}\leq \frac{\mathbb{E}[\max(\beta^2,W_Y^4)]}{\mathbb{E}[\max(\beta,W_Y^2)]}\leq \frac{\beta^2+ 3\mathbb{E}[Y^2]}{\beta}.
\end{align}
By combining  \eqref{thm_1_obj}, \eqref{eq_betaasy}, and \eqref{eq_eq_betaasy}, \eqref{eq_opt_limit_1} follows in the case of $f_{\max}\rightarrow 0$.

If $d\rightarrow  0$, then $Y \rightarrow 0$ and $W_Y \rightarrow 0$ with probability one. Hence, $\mathbb{E}[\max(\beta,W_Y^2)] \rightarrow \beta$ and $\mathbb{E}[\max(\beta^2,W_Y^4)] \rightarrow \beta^2$. Substituting these
 into \eqref{eq_thm1} and \eqref{eq_eq_betaasy}, yields
\begin{align}
\lim_{d\rightarrow  0 }\beta = \frac{1}{f_{\max}},~ \lim_{d\rightarrow  0 }\left\{\frac{\mathbb{E}[\max(\beta^2,W_Y^4)]}{6\mathbb{E}[\max(\beta,W_Y^2)]} + \mathbb{E}[Y]\right\} = \frac{1}{6f_{\max}}.\nonumber
\end{align}
By this, \eqref{eq_betaasy} and \eqref{eq_opt_limit_1} are proven in the case of $d\rightarrow 0$.
This completes the proof.

\section{Proof of \eqref{eq_coro_1}}\label{app_scale}
If $f_{\max}\rightarrow\infty$, the sampling frequency constraint in \eqref{eq_DPExpected} can be removed. By \eqref{eq_thm1}, the optimal $\beta$ is determined by  \eqref{eq_coro_1}.

If $d\rightarrow\infty$, let us consider the equation
\begin{align} \label{eq_a_equation}
\mathbb{E}[\max(\beta,W_Y^2)] \!=\! \frac{\mathbb{E}[\max(\beta^2,W_Y^4)]}{2\beta}.
\end{align}
If $Y$ grows by $a$ times, then $\beta$ and $\mathbb{E}[\max(\beta,W_Y^2)]$ in \eqref{eq_a_equation} both should grow by $a$ times, and $\mathbb{E}[\max(\beta^2,W_Y^4)]$ in \eqref{eq_a_equation} should grow by $a^2$ times. Hence, if $d\rightarrow\infty$, it holds in \eqref{eq_thm1} that 
\begin{align}
\frac{1}{f_{\max}}\leq\frac{\mathbb{E}[\max(\beta^2,W_Y^4)]}{2\beta}
\end{align} and 
the solution to \eqref{eq_thm1} is given by \eqref{eq_coro_1}. This completes the proof.

\section{Proofs of Theorems \ref{lem_zero_wait1} and \ref{lem_zero_wait2}}\label{app_zerowait}
\begin{proof}[Proof of Theorem \ref{lem_zero_wait1}]
The zero-wait policy can be expressed as \eqref{eq_opt_solution} with $\beta =0$. Because $Y$ is independent of the Wiener process, using the law of iterated expectations and the Gaussian distribution of the Wiener process, we can obtain $\mathbb{E}[W_Y^4]=3\mathbb{E}[Y^2]$. 
According to \eqref{eq_coro_1}, $\beta=0$  if and only if $\mathbb{E}[W_Y^4]=3\mathbb{E}[Y^2]=0$ which is equivalent to $Y=0$ with probability one. This completes the proof.
\end{proof}
\begin{proof}[Proof of Theorem \ref{lem_zero_wait2}]
In the one direction, the zero-wait policy can be expressed as \eqref{eq_thm2_1} with $\beta \leq \text{ess}\inf Y$. If the zero-wait policy is optimal, then  the solution to \eqref{eq_coro_2} must satisfy $\beta \leq \text{ess}\inf Y$, which further implies $\beta\leq Y$ with probability one. From this, we can get 
\begin{align}\label{eq_coro_3}
 2  \text{ess}\inf Y \mathbb{E}[Y]\geq 2\beta \mathbb{E}[Y] = {\mathbb{E}[Y^2]},
\end{align}
By this, \eqref{eq_zero_wait2} follows.

In the other direction, if \eqref{eq_zero_wait2} holds, we will show that the zero-wait policy is age-optimal by considering the following two cases. 

\emph{Case 1:} $\mathbb{E}[Y]> 0$. By choosing \begin{align}\label{eq_coro_5}
\beta = \frac{\mathbb{E}[Y^2]}{2\mathbb{E}[Y]},
\end{align}we can get $\beta \leq \text{ess}\inf Y$
and hence
\begin{align}\label{eq_coro_6}
\beta \leq  Y
\end{align}
with probability one.
According to \eqref{eq_coro_5} and \eqref{eq_coro_6}, such a $\beta$ is the solution to \eqref{eq_coro_2}. Hence, the zero-wait policy expressed by \eqref{eq_thm2_1} with $\beta \leq \text{ess}\inf Y$ is the age-optimal policy. 

\emph{Case 2:} $\mathbb{E}[Y]= 0$ and hence $Y=0$ with probability one. In this case, $\beta =0$ is the solution to \eqref{eq_coro_2}. Hence, the zero-wait policy expressed by \eqref{eq_thm2_1} with $\beta =0$ is the age-optimal policy. 

Combining these two cases, the proof is completed.
\end{proof}

\section{Proof of Lemma \ref{lem_zeroqueue}}\label{app_zeroqueue}


Suppose that in policy $\pi$, sample $i$ is generated when the channel is busy sending another sample, and hence sample $i$  
needs to wait for some time before submitted to the channel, i.e., $S_i<G_i$. 
Let us consider a \emph{virtual} sampling policy $\pi' = \{S_0,\ldots, S_{i-1},G_i, S_{i+1},\ldots\}$. We call policy $\pi'$ a virtual policy because the generation time of sample $i$ in policy $\pi'$ is $S'_i =G_i$ and it may happen that  $S'_i > S_{i+1}$. However, this will not affect our  proof below. We will show that the MMSE of  policy $\pi'$ is smaller than that of policy $\pi =\{S_0,\ldots, S_{i-1},S_i, S_{i+1},\ldots\}$. 

Note that the Wiener process $\{W_t: t\in[0, \infty)\}$ does not change according to the sampling policy, and the sample delivery times $\{D_1, D_2,\ldots\}$ remain the same in policy $\pi$ and policy $\pi'$. 
Hence, the only difference between policies $\pi$ and $\pi'$ is that \emph{the generation time of sample $i$ is postponed from $S_i$ to $G_i$}. 
The MMSE estimator of policy $\pi$ is given by \eqref{eq_esti} and the  MMSE estimator of policy $\pi'$ is given by
\begin{align}\label{eq_esti_pi1}
\hat W_t = \left\{\begin{array}{l l} 0,& t\in[0,D_1);\\
W_{G_i},& t\in[D_i,D_{i+1}); \\
W_{S_j},& t\in[D_j,D_{j+1}),~j\neq i, j\geq 1. \\
\end{array}\right.
\end{align}
Because $S_i\leq G_i\leq D_{i} \leq D_{i+1}$, by the strong Markov property of the Wiener process \cite[Theorem 2.16]{BMbook10}, $\int_{D_i} ^{D_{i+1}}2 [W_t - W_{G_i}] dt$ and $W_{G_i} - W_{S_i}$ are mutually independent. Hence,
\begin{align}
&\mathbb{E}\left\{\int_{D_i} ^{D_{i+1}} (W_t-W_{S_i})^2dt\right\} \nonumber\\
= & \mathbb{E}\left\{\int_{D_i} ^{D_{i+1}} (W_t-W_{G_i})^2 +(W_{G_i}- W_{S_i})^2 dt\right\} \nonumber\\
& + \mathbb{E}\left\{\int_{D_i} ^{D_{i+1}}2 (W_t-W_{G_i}) (W_{G_i}- W_{S_i}) dt\right\} \nonumber\\ 
= & \mathbb{E}\left\{\int_{D_i} ^{D_{i+1}} (W_t-W_{G_i})^2 +(W_{G_i}- W_{S_i})^2 dt\right\} \nonumber\\
& + \mathbb{E}\left\{\int_{D_i} ^{D_{i+1}}2 (W_t- W_{G_i}) dt\right\}\mathbb{E}[W_{G_i}-W_{S_i}] . \nonumber
\end{align}
Note that the channel is busy whenever there exist some generated samples that are not delivered to the estimator.
Hence, during the time interval $[S_i, G_i]$, the channel is busy sending some samples generated before $S_i$ in policy $\pi$. 
Because 
$\mathbb{E}[Y_j^2]<\infty$, we can get $\mathbb{E}[Y_j]<\infty$ and
\begin{align}
\mathbb{E}[G_i - S_i]\leq \mathbb{E}\bigg[\sum_{j=1}^{i-1} Y_j\bigg]<\infty.\nonumber
\end{align}
By Wald's identity \cite[Theorem 2.44]{BMbook10}, we have
$\mathbb{E}[W_{G_i}-W_{S_i}]=0$ and hence
\begin{align}
&\mathbb{E}\left\{\int_{D_i} ^{D_{i+1}} (W_t-W_{S_i})^2dt\right\}\nonumber\\
\geq &\mathbb{E}\left\{\int_{D_i} ^{D_{i+1}} (W_t-W_{G_i})^2  dt\right\}.
\end{align}
Therefore, the MMSE of  policy $\pi'$ is smaller than that of policy $\pi$. 

By repeating the above arguments for all samples $i$ satisfying $S_i<G_i$, one can show that policy $\pi'' = \{S_0,G_1,\ldots, G_{i-1},G_i, G_{i+1},\ldots\}$ is better than policy $\pi =\{S_0,S_1,\ldots, S_{i-1},S_i, S_{i+1},\ldots\}$.
This completes the proof.

\section{Proof of Lemma  \ref{lem_ratio_to_minus}}\label{app_ratio_to_minus}
Part (a) is proven in  two steps:

\emph{Step 1:} We will prove that {$\mathsf{mmse}_{\text{opt}} \leq c $ if and only if $p(c)\leq 0$}.

If $\mathsf{mmse}_{\text{opt}} \leq c $, then there exists a policy $\pi= (Z_0,Z_1,\ldots)\in\Pi_1$ that is feasible for both \eqref{eq_Simple}
and \eqref{eq_SD}, which satisfies  
\begin{align}\label{eq_step1}
\lim_{n\rightarrow \infty}\frac{\sum_{i=0}^{n-1}\mathbb{E}\left[\int_{D_{i}}^{D_{i+1}} (W_t-W_{S_{i}})^2dt\right]}{ \sum_{i=0}^{n-1} \mathbb{E}\left[Y_i\!+\!Z_i\right]}\leq c.
\end{align}
Hence,
\begin{align}\label{eq_step2}
\lim_{n\rightarrow \infty}\frac{\frac{1}{n}\sum_{i=0}^{n-1}\mathbb{E}\left[\int_{D_{i}}^{D_{i+1}} (W_t-W_{S_{i}})^2dt - c(Y_i+Z_i)\right]}{\frac{1}{n} \sum_{i=0}^{n-1} \mathbb{E}\left[Y_i\!+\!Z_i\right]}\leq0.
\end{align}
Because the inter-sampling times $T_i=Y_i+Z_i$ are regenerative, the renewal theory \cite{Ross1996} tells us that the limit 
$\lim_{n\rightarrow \infty}\frac{1}{n} \sum_{i=0}^{n-1} \mathbb{E}\left[Y_i\!+\!Z_i\right]$ exists and is positive.  By this, we get 
\begin{align}\label{eq_step3}
\lim_{n\rightarrow \infty}\frac{1}{n}\sum_{i=0}^{n-1}\mathbb{E}\left[\int_{D_{i}}^{D_{i+1}} (W_t-W_{S_{i}})^2dt - c(Y_i+Z_i)\right] \leq 0.
\end{align}
Therefore, $p(c)\leq 0$.

On the reverse direction, if $p(c)\leq 0$, then there exists a policy $\pi= (Z_0,Z_1,\ldots)\in\Pi_1$ that is feasible for both \eqref{eq_Simple}
and \eqref{eq_SD}, which satisfies \eqref{eq_step3}. From \eqref{eq_step3}, we can derive \eqref{eq_step2} and \eqref{eq_step1}. Hence, $\mathsf{mmse}_{\text{opt}} \leq c $. By this, we have proven that {$\mathsf{mmse}_{\text{opt}} \leq c $ if and only if $p(c)\leq 0$}. 

\emph{Step 2:} We needs to prove that $\mathsf{mmse}_{\text{opt}} < c $ if and only if $p(c)< 0$. This statement can be proven by using the arguments in \emph{Step 1}, in which ``$\leq$'' should be replaced by ``$<$''. Finally, from the statement of \emph{Step 1}, it immediately follows that $\mathsf{mmse}_{\text{opt}} > c $ if and only if $p(c)> 0$. This completes the proof of part (a). 

Part (b): We first show that each optimal solution to \eqref{eq_Simple} is  an optimal solution to  \eqref{eq_SD}. 
By the claim of part (a), $p(c)= 0$ is equivalent to $\mathsf{mmse}_{\text{opt}} = c $. Suppose that policy $\pi= (Z_0,Z_1,\ldots)\in\Pi_1$ is an optimal solution to  \eqref{eq_Simple}.
Then, $\mathsf{mmse}_{\pi} = \mathsf{mmse}_{\text{opt}} = c $. Applying this in the arguments of \eqref{eq_step1}-\eqref{eq_step3}, we can show that policy $\pi$ satisfies
\begin{align}
\lim_{n\rightarrow \infty}\frac{1}{n}\sum_{i=0}^{n-1}\mathbb{E}\left[\int_{D_{i}}^{D_{i+1}} (W_t-W_{S_{i}})^2dt - c(Y_i+Z_i)\right] = 0.\nonumber
\end{align}   
This and $p(c)= 0$ imply that  policy $\pi$ is an optimal solution to  \eqref{eq_SD}. 

Similarly, we can prove that each optimal solution to \eqref{eq_SD} is  an optimal solution to  \eqref{eq_Simple}. By this, part (b) is proven.

\section{Proof of Lemma \ref{lem_decompose}}\label{app_decompose}

Because the $Y_i$'s are \emph{i.i.d.}, $Z_i$ is independent of $Y_{i+1}, Y_{i+2},\ldots$, and the strong Markov property of the Wiener process \cite[Theorem 2.16]{BMbook10}, in the Lagrangian $L(\pi;\lambda)$ the term related to $Z_i$ is
\begin{align}\label{eq_decomposed_term}
&\mathbb{E}\left[\int_{S_{i}+Y_i}^{S_{i}+Y_i+Z_i +Y_{i+1}} (W_t-W_{S_{i}})^2dt\!-\! (c+\lambda)(Y_i+Z_i)\right],
\end{align}
which is determined by 
the control decision $Z_i$ and the recent information of the system $\mathcal{I}_i = (Y_i, (W_{S_{i}+t}-W_{S_{i}},$ $t \geq 0))$. According to \cite[p. 252]{Bertsekas2005bookDPVol1} and \cite[Chapter 6]{Kumar1986}, $\mathcal{I}_i$ is the \emph{sufficient statistic} for determining  $Z_i$ in \eqref{eq_primal}. Therefore, there exists an optimal policy $(Z_0,Z_1,\ldots)$ in which $Z_i$ is determined based on only $\mathcal{I}_i$, which is independent of $(W_t: t\in[0,S_i])$. 
This completes the proof.


\section{Proof of Lemma \ref{lem_stop}}\label{applem_stop}
According to Theorem 2.51 and Exercise 2.15  of \cite{BMbook10}, 
$W_t^4 - 6\int_0^t W_s^2 ds$ and $W_t^4 - 6t W_t^2 + 3t^2$ are two martingales of the Wiener process $\{W_t,t\in[0,\infty)\}$. Hence, $\int_0^t W_s^2 ds-t W_t^2 + t^2/2$ is also a martingale of the Wiener process. 

Because the minimum of two stopping times is a stopping time and constant times are stopping times \cite{Durrettbook10}, it follows that $t\wedge \tau$ is a bounded stopping time for every $t\in[0,\infty)$, where $x\wedge y = \min[x,y]$. Then, it follows from Theorem 8.5.1 of \cite{Durrettbook10} that for every $t\in[0,\infty)$
\begin{align}
\!\!\!\!\mathbb{E}\!\left[\int_0^{t \wedge \tau }\!\!\! W_s^2 ds\right] &\!=\!  \frac{1}{6}\mathbb{E}\left[ W_{t \wedge \tau}^4 \right] \label{eq_martigale1}\\
&= \mathbb{E}\left[  (t \wedge \tau) W_{t \wedge \tau}^2 \!- \frac{1}{2} (t \wedge \tau)^2\right]\!.\!\!\! \label{eq_martigale2}
\end{align}
Notice that $\int_0^{t \wedge \tau} W_s^2 ds$ is positive and increasing with respect to $t$. By applying the monotone convergence theorem \cite[Theorem 1.5.5]{Durrettbook10}, we can obtain 
\begin{align}
\lim_{t\rightarrow \infty }\mathbb{E}\left[\int_0^{t \wedge \tau} W_s^2 ds\right] = \mathbb{E}\left[\int_0^{\tau} W_s^2 ds\right].\nonumber
\end{align}

The remaining task is to show that 
\begin{align}
\lim_{t\rightarrow \infty } \mathbb{E}\left[ W_{t \wedge \tau}^4 \right] = \mathbb{E}\left[ W_{\tau}^4 \right]. \label{eq_goal}
\end{align}
Towards this goal, we combine \eqref{eq_martigale1} and \eqref{eq_martigale2}, and apply  Cauchy-Schwarz inequality to get
\begin{align}
& \mathbb{E}\left[ W_{t \wedge \tau}^4 \right]\nonumber \\
= & \mathbb{E}\left[ 6 (t \wedge \tau) W_{t \wedge \tau}^2 - 3 (t \wedge \tau)^2\right] \nonumber \\
\leq & 6 \sqrt{\mathbb{E}\left[ ( t \wedge \tau)^2\right] \mathbb{E}\left[ W_{t \wedge \tau}^4\right]}  - 3\mathbb{E}\left[ (t \wedge \tau)^2\right].\nonumber
\end{align} 
Let $x = \sqrt{\mathbb{E}\left[ W_{t \wedge \tau}^4 \right]/\mathbb{E}\left[ (t \wedge \tau)^2\right]}$, then $x^2 -6 x +3 \leq 0$.
By the roots and properties of quadratic functions, we obtain $3 - \sqrt{6}\leq x\leq 3 + \sqrt{6}$ and hence
\begin{align}
&  \mathbb{E}\left[ W_{t \wedge \tau}^4 \right]    \leq (3 + \sqrt{6})^2 \mathbb{E}  \left[ (t \wedge \tau)^2\right] \leq (3 + \sqrt{6})^2 \mathbb{E}\left[  \tau^2\right]< \infty. \nonumber 
\end{align} 
Then, we use Fatou's lemma \cite[Theorem 1.5.4]{Durrettbook10} to derive
\begin{align}\label{eq_fatou}
& \mathbb{E}\left[ W_{\tau}^4 \right] \nonumber \\
 = & 
\mathbb{E}\left[ \lim_{t\rightarrow \infty } W_{t \wedge \tau}^4 \right]\nonumber \\
\leq & \liminf_{t\rightarrow \infty }  \mathbb{E}\left[ W_{t \wedge \tau}^4 \right] \nonumber \\
\leq & (3 + \sqrt{6})^2 \mathbb{E}\left[  \tau^2\right]< \infty. 
\end{align} 
Further, by \eqref{eq_fatou} and Doob's maximal inequality \cite[Theorem 12.30]{BMbook10} and \cite[Theorem 5.4.3]{Durrettbook10}, 
\begin{align}
& \mathbb{E}\left[ \sup_{t\in[0,\infty)} W_{t\wedge \tau}^4 \right]  \nonumber \\
=& \mathbb{E}\left[ \sup_{t\in[0,\tau]} W_t^4 \right]  \nonumber \\
\leq & \left(\frac{4}{3}\right)^4 \mathbb{E}\left[ W_{\tau}^4  \right]  < \infty. 
\end{align} 
Because $W_{t\wedge \tau}^4 \leq \sup_{t\in[0,\infty)} W_{t\wedge \tau}^4$ and $\sup_{t\in[0,\infty)} W_{t\wedge \tau}^4$ is integrable, \eqref{eq_goal} follows from dominated convergence theorem \cite[Theorem 1.5.6]{Durrettbook10}. This completes the proof.

\section{Proof of  \eqref{eq_integral}}\label{app_integral}

By using \eqref{eq_policyspace} and the condition that $Z_i$ is independent of $(W_t, t\in[0,W_{S_i}])$, we obtain that for
given $Y_i$ and $Y_{i+1}$, $Y_i$ and $Y_i+Z_i+Y_{i+1}$ are stopping times of the time-shifted Wiener process $\{W_{S_{i}+t}-W_{S_{i}},t\geq0\}$. Hence,

\begin{align}\label{eq_Q_i}
{=}& \mathbb{E}\left\{\int_{D_{i}}^{D_{i+1}} (W_t-W_{S_i})^2dt\right\} \nonumber\\
\overset{}{=} & \mathbb{E}\left\{\int_{Y_i}^{Y_i+Z_i+Y_{i+1}} (W_{S_{i}+t}-W_{S_{i}})^2dt\right\} \nonumber\\
\overset{(a)}{=} & \mathbb{E}\left\{\mathbb{E}\left\{\int_{Y_i}^{Y_i+Z_i+Y_{i+1}} (W_{S_{i}+t}-W_{S_{i}})^2dt\Bigg| Y_i, Y_{i+1}\right\}\right\} \nonumber\\
\overset{(b)}{=} &  \frac{1}{6}\mathbb{E}\left\{\mathbb{E}\left\{(W_{S_{i}+Y_i+Z_i+Y_{i+1}}-W_{S_{i}})^4\Bigg| Y_i, Y_{i+1}\right\}\!\!\right\} \nonumber\\
&-\frac{1}{6} \mathbb{E}\left\{\mathbb{E}\left\{(W_{S_{i}+Y_i}-W_{S_{i}})^4\Bigg| Y_i, Y_{i+1}\right\}\!\!\right\} \nonumber\\
\overset{(c)}{=} & \frac{1}{6}\mathbb{E}\left[ (W_{S_{i}+Y_i+Z_i+Y_{i+1}}\!-\!W_{S_{i}})^4 \right] - \frac{1}{6}\mathbb{E}\left[ (W_{S_{i}+Y_i}\!-\!W_{S_{i}})^4 \right]\!\!, 
\end{align}
where   step (a) and step (c) are due to the law of iterated expectations, and step (b) is due to Lemma \ref{lem_stop}. 
Because $S_{i+1} = {S_{i}+Y_i+Z_i}$, we have
\begin{align}
&\mathbb{E}\left[ (W_{S_{i}+Y_i+Z_i+Y_{i+1}}-W_{S_{i}})^4 \right] \nonumber\\
=&\mathbb{E}\left\{[ (W_{S_{i}+Y_i+Z_i}-W_{S_i}) +(W_{S_{i+1}+Y_{i+1}}-W_{S_{i+1}})]^4 \right\} \nonumber\\
=&\mathbb{E}\left[(W_{S_{i}+Y_i+Z_i}-W_{S_i})^4\right] \nonumber\\
&+ 4\mathbb{E}\left[(W_{S_{i}+Y_i+Z_i}-W_{S_i})^3(W_{S_{i+1}+Y_{i+1}}-W_{S_{i+1}})\right]  \nonumber\\
&+6\mathbb{E}\left[ (W_{S_{i}+Y_i+Z_i}-W_{S_i})^2(W_{S_{i+1}+Y_{i+1}}-W_{S_{i+1}})^2\right]   \nonumber\\
&+ 4\mathbb{E}\left[(W_{S_{i}+Y_i+Z_i}-W_{S_i})(W_{S_{i+1}+Y_{i+1}}-W_{S_{i+1}})^3\right] \nonumber\\
&+ \mathbb{E}\left[(W_{S_{i+1}+Y_{i+1}}-W_{S_{i+1}})^4\right] \nonumber\\
=&\mathbb{E}\left[(W_{S_{i}+Y_i+Z_i}-W_{S_i})^4\right] \nonumber\\
&+ 4\mathbb{E}\left[(W_{S_{i}+Y_i+Z_i}-W_{S_i})^3\right]\mathbb{E}\left[(W_{S_{i+1}+Y_{i+1}}-W_{S_{i+1}})\right]  \nonumber\\
&+6\mathbb{E}\left[ (W_{S_{i}+Y_i+Z_i}-W_{S_i})^2\right] \mathbb{E}\left[(W_{S_{i+1}+Y_{i+1}}-W_{S_{i+1}})^2\right]   \nonumber\\
&+ 4\mathbb{E}\left[(W_{S_{i}+Y_i+Z_i}-W_{S_i})\right]\mathbb{E}\left[(W_{S_{i+1}+Y_{i+1}}-W_{S_{i+1}})^3\right] \nonumber\\
&+ \mathbb{E}\left[(W_{S_{i+1}+Y_{i+1}}-W_{S_{i+1}})^4\right], \nonumber
\end{align}
where in the last equation we have used the fact that $Y_{i+1}$ is independent of $Y_i$ and $Z_i$, and the strong Markov property of the Wiener process \cite[Theorem 2.16]{BMbook10}.  
Because
\begin{align}
&\mathbb{E}\left[(W_{S_{i+1}+Y_{i+1}}-W_{S_{i+1}})^3 | Y_{i+1}\right]\nonumber\\
=&\mathbb{E}\left[(W_{S_{i+1}+Y_{i+1}}-W_{S_{i+1}}) | Y_{i+1}\right] =  0,\nonumber
\end{align}
by the law of iterated expectations, we have $$\mathbb{E}\left[(W_{S_{i+1}+Y_{i+1}}\!\!-\!W_{S_{i+1}})^3\right] \!=\! \mathbb{E}\left[(W_{S_{i+1}+Y_{i+1}}\!\!-\!W_{S_{i+1}})\right] =0.$$ In addition,  Wald's identity tells us that $\mathbb{E}\left[ W_{\tau}^2\right] =  \mathbb{E}\left[\tau\right] $ for any stopping time $\tau$ with $\mathbb{E}\left[\tau\right] <\infty$. Hence,  
\begin{align}
&\mathbb{E}\left[ (W_{S_{i}+Y_i+Z_i+Y_{i+1}}-W_{S_{i}})^4 \right] \nonumber\\
=&\mathbb{E}\left[(W_{S_{i}+Y_i+Z_i}-W_{S_i})^4\right]\! +\!6\mathbb{E}\left[Y_i\!+\!Z_i\right] \mathbb{E}\left[Y_{i+1}\right]\!\nonumber\\
&+ \!\mathbb{E}\left[(W_{S_{i+1}+Y_{i+1}}-W_{S_{i+1}})^4\right]\!.  
\end{align}
Finally, because $(W_{S_{i}+t}-W_{S_{i}})$ and $(W_{S_{i+1}+t}-W_{S_{i+1}})$ are both Wiener processes, and the $Y_i$'s are \emph{i.i.d.}, 
\begin{align}\label{eq_Q_i1}
\mathbb{E}\left[(W_{S_{i}+Y_{i}}-W_{S_{i}})^4\right] = \mathbb{E}\left[(W_{S_{i+1}+Y_{i+1}}\!-\!W_{S_{i+1}})^4\right].
\end{align}
Combining  \eqref{eq_Q_i}-\eqref{eq_Q_i1}, yields \eqref{eq_integral}.

\section{Proof of Theorem  \ref{thm_solution_form}}\label{app_solution_form}

By \eqref{eq_integral}, \eqref{eq_decomposed_term}  can be rewritten as 

\begin{align}\label{eq_decomposed_term1}
&\mathbb{E}\left[\int_{S_{i}+Y_i}^{S_{i}+Y_i+Z_i +Y_{i+1}} (W_t-W_{S_{i}})^2dt\!-\! (c+\lambda)(Y_i+Z_i)\right]\nonumber\\
=&\frac{1}{6}(W_{S_{i}+Y_i+Z_i}-W_{S_i})^4\!-\! \frac{\beta}{3}(Y_i\!+\!Z_i) \nonumber\\
=&\frac{1}{6}[(W_{S_{i}+Y_i}\!-\!W_{S_{i}}) \!+\! (W_{S_{i}+Y_i+Z_i}\!-\!W_{S_{i}+Y_i})]^4\!-\! \frac{\beta}{3}(Y_i \!+\! Z_i).\nonumber\\
\end{align}
Because the $Y_i$'s are \emph{i.i.d.}  and the strong Markov property of the Wiener process \cite[Theorem 2.16]{BMbook10}, the term in \eqref{eq_decomposed_term1} is determined by the control decision $Z_i$ and the information $\mathcal{I}_i' = (W_{S_{i}+Y_i}-W_{S_{i}}, Y_i, (W_{S_{i}+Y_i+t}-W_{S_{i}+Y_i},$ $t \geq 0))$. According to \cite[p. 252]{Bertsekas2005bookDPVol1} and \cite[Chapter 6]{Kumar1986}, $\mathcal{I}_i'$ is the \emph{sufficient statistic} for determining the waiting time $Z_i$ in \eqref{eq_primal}. Therefore, there exists an optimal policy $(Z_0,Z_1,\ldots)$ in which $Z_i$ is determined based on only $\mathcal{I}_i'$. 
By this, \eqref{eq_primal} is decomposed into a sequence of per-sample control problems \eqref{eq_opt_stopping}.
In addition, because the $Y_i$'s are \emph{i.i.d.} and the strong Markov property of the Wiener process, the $Z_i$'s in this optimal policy are  \emph{i.i.d.} Similarly, the $(W_{S_{i}+Y_i+Z_i}-W_{S_i})$'s in this optimal policy are  \emph{i.i.d.}

\section{Proof of Lemma  \ref{lem1_stop}}\label{app_optimal_stopping}
Case 1: If $b^2 \geq \beta$, then \eqref{eq_opt_stop_solution} tells us that 
\begin{align}\label{eq_zero}
\tau^* = 0
\end{align}
 and 
\begin{align}\label{eq_case1}
u(x) = \mathbb{E}[g(X_0)|X_0=x] = g(x) = \beta s  - \frac{1}{2}b^4.
\end{align}
Case 2: If $b^2 < \beta$, then $\tau^* > 0$ and $(b+ W_{\tau^*})^2 = \beta$. Invoking Theorem 8.5.5 in \cite{Durrettbook10}, yields 
\begin{align}\label{eq_expectation}
\mathbb{E}_x \tau^* = - (\sqrt{\beta} - b)(-\sqrt{\beta} -b) = \beta - b^2. 
\end{align}
Using this, we can obtain
\begin{align}\label{eq_case2}
u(x) &= \mathbb{E}_x g(X(\tau^*)) \nonumber\\
    &= \beta (s + \mathbb{E}_x \tau^*) - \frac{1}{2} \mathbb{E}_x \left[(b+ W_{\tau^*})^4\right] \nonumber\\
& = \beta (s + \beta - b^2)  - \frac{1}{2} \beta^2\nonumber\\
& = \beta s + \frac{1}{2}\beta^2 - b^2\beta.
\end{align}
Hence, in Case 2,
\begin{align}
u(x)-g(x)=\frac{1}{2}\beta^2 - b^2\beta + \frac{1}{2}b^4 =\frac{1}{2} (b^2-\beta)^2\geq 0.\nonumber
\end{align}
By combining these two cases, Lemma \ref{lem1_stop} is proven. 

\section{Proof of Lemma  \ref{lem2_stop}}\label{app_optimal_stopping1}

The function $u(s,b)$ is continuous differentiable in $(s,b)$. In addition, $\frac{\partial^2}{\partial^2b}u(s,b)$ is continuous everywhere but at $b = \pm\sqrt{\beta}$. 
By the It\^{o}-Tanaka-Meyer formula \cite[Theorem 7.14 and Corollary 7.35]{BMbook10}, we obtain that almost surely 
\begin{align}\label{eq_ito}
&u(s+t, b+W_t) - u(s, b) \nonumber\\
=& \int_0^t \frac{\partial}{\partial b} u(s+r,b+W_{r}) dW_{r}  \nonumber\\
  &+ \int_0^t \frac{\partial}{\partial s} u(s+r,b+W_{r}) d r \nonumber\\
  &+ \frac{1}{2}\int_{-\infty}^\infty L^{a}(t)\frac{\partial^2}{\partial b^2} u(s+r,b+a)da , 
\end{align}
where $L^a(t)$ is the local time that the Wiener process spends at the level $a$, i.e.,
\begin{align}
L^a(t) = \lim_{\epsilon \downarrow 0} \frac{1}{2\epsilon}\int_0^t 1_{\{| W_s -a|\leq \epsilon\}} ds,
\end{align}
and $1_A$ is the indicator function of event $A$.
By the property of local times of the Wiener process \cite[Theorem 6.18]{BMbook10}, we obtain that almost surely
\begin{align}\label{eq_ito}
&u(s+t, b+W_t) - u(s, b) \nonumber\\
=& \int_0^t \frac{\partial}{\partial b} u(s+r,b+W_{r}) dW_{r}  \nonumber\\
  &+ \int_0^t \frac{\partial}{\partial s} u(s+r,b+W_{r}) d r \nonumber\\
  &+ \frac{1}{2}\int_0^t \frac{\partial^2 }{\partial b^2} u(s+r,b+W_{r}) d r. 
 \end{align}
 
Because
 \begin{align}
 \frac{\partial}{\partial b} u(s,b)  = \left\{\begin{array}{l l} 
-2b^3, &\text{if}~ b^2 \geq \beta;
\vspace{0.5em}\\
- 2\beta b, &\text{if}~ b^2 < \beta,
\end{array}\right.\nonumber
\end{align}
we can obtain that for all $t\geq 0$ and all $x = (s,b)\in \mathbb{R}^2$ 
\begin{align}\label{eq_bound}
&\mathbb{E}_x\left\{\int_0^t \left[\frac{\partial}{\partial b} u(s+r,b+W_{r})\right]^2 dr \right\} <\infty.
\end{align}
This and Thoerem 7.11 of \cite{BMbook10} imply that $\int_0^t \frac{\partial}{\partial b} u(s+r,b+W_{r}) dW_{r}$ is a martingale and 
\begin{align}\label{eq_ito_integral}
\mathbb{E}_x\left[\int_0^t \frac{\partial}{\partial b} u(s+r,b+W_{r}) dW_{r}\right] =0,~ \forall~ t\geq0.
 \end{align}   
By combining \eqref{eq_Markov}, \eqref{eq_ito}, and \eqref{eq_ito_integral}, we get
\begin{align} \label{eq_difference}
\mathbb{E}_x \left[u(X_t)\right] \!-\! u(x) = \mathbb{E}_x \left\{\!\int_0^t \!\left[\frac{\partial}{\partial s} u(X_r) \!+\!\frac{1}{2} \frac{\partial^2 }{\partial b^2} u(X_r)\right]\!dr\right\}.
\end{align} 
It is easy to compute that if $b^2> \beta$, 
\begin{align} 
\frac{\partial}{\partial s} u(s,b) + \frac{1}{2}\frac{\partial^2 }{\partial b^2} u(s,b) =\beta -  3b^2 \leq 0; \nonumber
 \end{align}
and if $b^2< \beta$, 
\begin{align} 
\frac{\partial}{\partial s} u(s,b) + \frac{1}{2}\frac{\partial^2 }{\partial b^2} u(s,b) =\beta -  \beta = 0.\nonumber
 \end{align} 
 Hence, 
 \begin{align} \label{eq_gradient}
 \frac{\partial}{\partial s} u(s,b) + \frac{1}{2}\frac{\partial^2 }{\partial b^2} u(s,b) \leq 0
  \end{align} 
 for all $(s,b)\in \mathbb{R}^2$ except for $b =  \pm\sqrt{\beta}$. Since the Lebesgue measure of those $r$ for which $b+W_{r} = \pm\sqrt{\beta}$ is zero,  we get from \eqref{eq_difference} and \eqref{eq_gradient} that
$\mathbb{E}_x \left[u(X_t)\right] \leq u(x)$ for all $x\in \mathbb{R}^2$ and $t\geq 0$. This completes the proof.

\section{Proof of Theorem  \ref{thm_zero_gap}}\label{app_zero_gap}

Theorem  \ref{thm_zero_gap} is proven in three steps:

\emph{Step 1: We will show that the duality gap between \eqref{eq_SD} and \eqref{eq_dual} is zero, i.e., $d(c) = p(c)$.} 

To that end, we needs to find $\pi^{\star} = (Z_0,Z_1,\ldots)$ and $\lambda^{\star}$ that 
satisfy the following conditions:
\begin{align}
&\pi^{\star}\in\Pi, \lim_{n\rightarrow \infty} \frac{1}{n} \sum_{i=0}^{n-1} \mathbb{E}\left[Y_i+Z_i\right] - \frac{1}{f_{\max}} \geq  0,\label{eq_mix_or_not0}\\
&\lambda^{\star}\geq 0,\label{eq_mix_or_not1}\\
&L(\pi^{\star};\lambda^{\star},c_{\text{opt}}) = \inf_{\pi\in\Pi_1}  L(\pi;\lambda^{\star},c_{\text{opt}}),\label{eq_mix_or_not}\\
&\lambda^\star \left\{\lim_{n\rightarrow \infty} \frac{1}{n} \sum_{i=0}^{n-1} \mathbb{E}\left[Y_i+Z_i\right] - \frac{1}{f_{\max}}\right\} = 0.\label{eq_KKT_last}
\end{align}


According to Theorem \ref{thm_solution_form} and Corollary \ref{coro_stop}, the  solution $\pi^{\star}$ to
\eqref{eq_mix_or_not} is given by \eqref{eq_opt_stop_solution1} where $\beta = 3(c + \lambda^\star - \mathbb{E}\left[Y \right])$. 
In addition, as shown in the proof of Theorem \ref{thm_solution_form}, the $Z_i$'s in policy $\pi^{\star}$ are  \emph{i.i.d.}
From \eqref{eq_mix_or_not0}, \eqref{eq_mix_or_not1}, and \eqref{eq_KKT_last}, $\lambda^\star$ is determined by considering two cases: If $\lambda^\star >0$, then 
\begin{align}\label{eq_KKT_2}
\lim_{n\rightarrow \infty} \frac{1}{n} 
\sum_{i=0}^{n-1} \mathbb{E}\left[Y_i+Z_i\right] = \mathbb{E}\left[Y_i+Z_i\right]= \frac{1}{f_{\max}}.
\end{align}
If $\lambda^\star =0$, then 
\begin{align}\label{eq_KKT_3}
\lim_{n\rightarrow \infty} \frac{1}{n}\sum_{i=0}^{n-1} \mathbb{E}\left[Y_i+Z_i\right] = \mathbb{E}\left[Y_i+Z_i\right] \geq \frac{1}{f_{\max}}.
\end{align}
Hence, such $\pi^{\star}$ and $\lambda^\star$  satisfy \eqref{eq_mix_or_not0}-\eqref{eq_KKT_last}. By \cite[Prop. 6.2.5]{Bertsekas2003}, $\pi^{\star}$ is an optimal solution to the primal problem \eqref{eq_SD} and $\lambda^\star$ is a geometric multiplier \cite{Bertsekas2003} for the primal problem \eqref{eq_SD}. 
In addition, the duality gap between \eqref{eq_SD} and \eqref{eq_dual} is zero, because otherwise there exists no geometric multiplier \cite[Section 6.1-6.2]{Bertsekas2003}. 

\emph{Step 2: We will show that a common optimal solution to \eqref{eq_DPExpected},  \eqref{eq_Simple}, and \eqref{eq_SD} with \emph{$c = c_{\text{opt}}$}, is given by \eqref{eq_opt_solution} where 
$\beta  \geq0$ is determined by solving
\begin{align}\label{eq_equation}
\mathbb{E}[Y_i+Z_i] \!=\! \max\left(\frac{1}{f_{\max}}, \frac{\mathbb{E}[(W_{S_{i}+Y_i+Z_i}-W_{S_i})^4]}{2\beta}\right),\!
\end{align}}
where the last term is determined by L'H\^{o}pital's rule if $\beta\rightarrow0$.


We  consider the case that $c =  c_{\text{opt}}$. In \emph{Step 1}, we have shown that policy $\pi^{\star}$ in \eqref{eq_opt_stop_solution1} with $\beta = 3(c_{\text{opt}} + \lambda^\star - \mathbb{E}\left[Y \right])$ is an optimal solution to  \eqref{eq_SD}. According to the definition of  {$c_{\text{opt}}$} in \eqref{eq_c}, $p(c_{\text{opt}})=0$. By Lemma \ref{lem_ratio_to_minus}(b), this policy $\pi^{\star}$ is also an optimal solution to \eqref{eq_Simple}. In addition, $p(c_{\text{opt}})=0$ and Lemma \ref{lem_ratio_to_minus}(a) imply ${\mathsf{mmse}}_{\text{opt}} = c_{\text{opt}} $.  
Substituting policy $\pi^{\star}$ and \eqref{eq_integral} into \eqref{eq_Simple}, yields
\begin{align}\label{eq_KKT_1}
c_{\text{opt}} &=\lim_{n\rightarrow \infty}\!\!\frac{\sum_{i=0}^{n-1}\mathbb{E}\left[(W_{S_{i}+Y_i+Z_i}\!-\!W_{S_i})^4 \!+\! (Y_i+Z_i) \mathbb{E}[Y] \right]}{ 6\sum_{i=0}^{n-1} \mathbb{E}\left[Y_i\!+\!Z_i\right]} \nonumber\\
&=\frac{\mathbb{E}\left[(W_{S_{i}+Y_i+Z_i}\!-\!W_{S_i})^4\right]}{ 6\mathbb{E}\left[Y_i\!+\!Z_i\right]} + \mathbb{E}[Y],
\end{align}
where in the last equation we have used that the $Z_i$'s are \emph{i.i.d.} and the $(W_{S_{i}+Y_i+Z_i}-W_{S_i})$'s are \emph{i.i.d.}, which were shown  in the proof of Theorem \ref{thm_solution_form}. According to \eqref{eq_KKT_1}, $c_{\text{opt}}\geq \mathbb{E}[Y]$. Hence,  $\beta = 3(c_{\text{opt}} + \lambda^\star - \mathbb{E}\left[Y \right])\geq 0$, in which case policy $\pi^{\star}$ in \eqref{eq_opt_stop_solution1} is exactly \eqref{eq_opt_solution}. 

The value of $\beta$ can be obtained by considering the following two cases:

\emph{Case 1}: If $\lambda >0$, then \eqref{eq_KKT_1} and \eqref{eq_KKT_2} imply that
\begin{align}\label{eq_KKT_4}
&\mathbb{E}\left[Y_i+Z_i\right]= \frac{1}{f_{\max}}, \\
&\beta > 3(c_{\text{opt}} -\mathbb{E}[Y])  = \frac{\mathbb{E}\left[(W_{S_{i}+Y_i+Z_i}-W_{S_i})^4\right]}{ 2\mathbb{E}\left[Y_i\!+\!Z_i\right]}.
\end{align}

\emph{Case 2}: If $\lambda =0$, then \eqref{eq_KKT_1} and \eqref{eq_KKT_3} imply that
\begin{align}
&\mathbb{E}\left[Y_i+Z_i\right]\geq \frac{1}{f_{\max}}, \\
&\beta = 3(c_{\text{opt}} -\mathbb{E}[Y]) = \frac{\mathbb{E}\left[(W_{S_{i}+Y_i+Z_i}-W_{S_i})^4\right]}{ 2\mathbb{E}\left[Y_i\!+\!Z_i\right]}.\label{eq_KKT_5}
\end{align}
Combining \eqref{eq_KKT_4}-\eqref{eq_KKT_5}, \eqref{eq_equation} follows. By \eqref{eq_KKT_1}, the optimal value of \eqref{eq_Simple} is  given by 
\begin{align}\label{thm_3_obj}
\mathsf{mmse}_{\text{opt}}=\frac{\mathbb{E}[(W_{S_{i}+Y_i+Z_i}-W_{S_i})^4]}{6\mathbb{E}[Y_i+Z_i] } + \mathbb{E}[Y].
\end{align}

\emph{Step 3: We will show that the expectations in \eqref{eq_equation} and \eqref{thm_3_obj} 
are given by
\begin{align}
&\mathbb{E}[Y_i+Z_i] = \mathbb{E}[\max(\beta,W_Y^2)],\label{eq_expression0}\\
&\mathbb{E}[(W_{S_{i}+Y_i+Z_i}-W_{S_i})^4] = \mathbb{E}[\max(\beta^2,W_Y^4)].\label{eq_expression}
\end{align}}
According to \eqref{eq_opt_stop_solution1} with $\beta\geq 0$, we have
\begin{align}
&W_{S_{i}+Y_i+Z_i}-W_{S_{i}} \nonumber\\
=&\left\{\begin{array}{l l}
W_{S_{i}+Y_i}-W_{S_{i}},&\text{if}~ |W_{S_{i}+Y_i}-W_{S_{i}}| \geq \sqrt{\beta};\\
\sqrt{\beta},&\text{if}~ |W_{S_{i}+Y_i}-W_{S_{i}}| < \sqrt{\beta}.
\end{array}\right. \nonumber
\end{align}
Hence, 
\begin{align}\label{eq_max_1}
\mathbb{E}[(W_{S_{i}+Y_i+Z_i}-W_{S_{i}})^4] = \mathbb{E}[\max(\beta^2,(W_{S_{i}+Y_i}-W_{S_{i}})^4)].
\end{align}
In addition, from \eqref{eq_zero} and \eqref{eq_expectation} we know that if  $|W_{S_{i}+Y_i}-W_{S_{i}}| \geq \sqrt{\beta}$
\begin{align}
\mathbb{E}[Z_i| Y_i] = 0;\nonumber
\end{align}
otherwise,
\begin{align}
\mathbb{E}[Z_i| Y_i] = \beta - (W_{S_{i}+Y_i}-W_{S_{i}})^2.\nonumber
\end{align}
Hence,
\begin{align}
\mathbb{E}[Z_i| Y_i] = \max[\beta - (W_{S_{i}+Y_i}-W_{S_{i}})^2,0].\nonumber
\end{align}
Using the law of iterated expectations, the strong Markov property of the Wiener process, and  Wald's identity $\mathbb{E}[(W_{S_{i}+Y_i}-W_{S_{i}})^2]=\mathbb{E}[Y_i]$, yields
\begin{align}\label{eq_max_2}
&\mathbb{E}[Z_i+ Y_i] \nonumber\\
=&\mathbb{E}[ \mathbb{E}[Z_i| Y_i] +Y_i]\nonumber\\
=&\mathbb{E}[ \max(\beta - (W_{S_{i}+Y_i}-W_{S_{i}})^2,0) +Y_i] \nonumber\\
=&\mathbb{E}[ \max(\beta - (W_{S_{i}+Y_i}-W_{S_{i}})^2,0) +(W_{S_{i}+Y_i}-W_{S_{i}})^2] \nonumber\\
=&\mathbb{E}[ \max(\beta,(W_{S_{i}+Y_i}-W_{S_{i}})^2)]. 
\end{align}
Finally, because $W_t$ and $W_{S_{i}+t}-W_{S_{i}}$ are of the same distribution,  \eqref{eq_expression0} and \eqref{eq_expression} follow from \eqref{eq_max_2} and \eqref{eq_max_1}, respectively. 
This completes the proof.

\ignore{
According to the proofs of Theorem \ref{thm_solution_form} and Corollary \ref{coro_stop}, we can obtain
\begin{align}
\pi^{\star} = \arg\inf_{\pi\in\Pi_{1}}  L(\pi;\lambda^{\star},c_{\text{opt}}).\nonumber
\end{align}
In addition, by the definition of $\Pi_{1,\text{mix}}$, for all $\lambda\geq 0$
\begin{align}\label{eq_mix_or_not2}
 &\inf_{\pi\in\Pi_{1,\text{mix}}}  L(\pi;\lambda,c_{\text{opt}}) \nonumber\\
= & \inf_{p_i: p_i\geq 0, \sum_i p_i = 1}\inf_{\pi_i\in\Pi_{1}}  L(\pi_i;\lambda,c_{\text{opt}}) \nonumber\\
= & \inf_{\pi\in\Pi_{1}}  L(\pi;\lambda,c_{\text{opt}}).
\end{align}
Hence, \eqref{eq_mix_or_not} is proven. In addition, \eqref{eq_mix_or_not1} and \eqref{eq_KKT_last} follow from \eqref{eq_KKT_2} and \eqref{eq_KKT_3} in the 
proof of Theorem \ref{thm_solution_form}. Combining \eqref{eq_mix_or_not2} with \eqref{eq_primal}, \eqref{eq_dual}, \eqref{eq_primal_mix}, and  \eqref{eq_dual_mix}, yields
\begin{align}\label{eq_result_of_step2}
d_{\text{mix}}= d(c_{\text{opt}}).
\end{align}

\emph{Step 3:} We will show that \emph{$\pi^{\star}$ is an common optimal solution to \eqref{eq_SD} and \eqref{eq_SD_mix} with} $c=c_{\text{opt}}$.

By strong duality and the KKT conditions, we can obtain 
\begin{align}\label{eq_solution_value}
p_{\text{mix}} &= d_{\text{mix}} \nonumber\\
&= g_{\text{mix}}(\lambda^\star,c_{\text{opt}})\nonumber\\
& = \inf_{\pi\in\Pi_{1,\text{mix}}}  L(\pi;\lambda^\star,c_{\text{opt}})\nonumber\\
&\overset{(a)}{ =} \lim_{n\rightarrow \infty}\!\frac{1}{n}\!\sum_{i=0}^{n-1}\!\mathbb{E}\!\!\left[\int_{D_{i}}^{D_{i+1}} \!\!\!\!\!\!\!\!(W_t\!-\!W_{S_{i}^\star})^2dt\!-\! c_{\text{opt}}(Y_i\!+\!Z_i)\!\right],
\end{align}
where step (a) follows from \eqref{eq_Lagrangian} and \eqref{eq_KKT_last}. By \eqref{eq_solution_value}, $\pi^{\star}$ achieves the optimal value $p_{\text{mix}}$ of \eqref{eq_SD_mix}. 
Finally, because $\pi^{\star}$ is a \emph{pure} policy in $\Pi_1$ and the first inequality of \eqref{eq_mix_or_not1}, $\pi^{\star}$  is feasible for Problem \eqref{eq_SD}. Hence, $\pi^{\star}$ also achieves the optimal value of \eqref{eq_SD} with $c=c_{\text{opt}}$  and
\begin{align}\label{eq_result_of_step21}
p_{\text{mix}} = p(c_{\text{opt}}).
\end{align}
Combining \eqref{eq_result_of_step1}, \eqref{eq_result_of_step2}, and \eqref{eq_result_of_step21}, \eqref{eq_thm_zero_gap} is proven.

\emph{Step 2: We show that the optimal value of $\beta$ satisfies $\beta\geq0$.} According to Lemma \ref{lem_ratio_to_minus} and Lemma \ref{thm_zero_gap}, the optimal $\beta$ satisfies
\begin{align}
\beta = 3(c_{\text{opt}} + \lambda - \mathbb{E}\left[Y \right]),\nonumber
\end{align}
where $c_{\text{opt}}$ satisfies $c_{\text{opt}}\geq0$ and $p(c_{\text{opt}})=0$, $\lambda$ and $\pi_{\text{opt}}=(Z_0,Z_1,\ldots)$ satisfy the Karush-Kuhn-Tucker (KKT) conditions of Problem \eqref{eq_SD}. By \eqref{eq_SD} and \eqref{eq_integral}, we have
\begin{align}\label{eq_one_step}
p(c_{\text{opt}}) \geq (\mathbb{E}\left[Y \right] - c_{\text{opt}})  \mathbb{E}\left[Y_i+Z_i\right],
\end{align}
where we have used that $\mathbb{E}\left[Y_i \right] = \mathbb{E}\left[Y \right]$ and  the $Z_i$'s are  \emph{i.i.d.} Suppose that 
$c_{\text{opt}}<  \mathbb{E}\left[Y \right]$, then \eqref{eq_one_step} leads to $p(c_{\text{opt}})>0$, which contradicts with  $p(c_{\text{opt}})=0$. Hence, it must hold that $c_{\text{opt}}\geq  \mathbb{E}\left[Y \right]$. This and $\lambda \geq 0$ imply  $\beta \geq 0$.


 \section{Proofs of Corollary \ref{coro_stop}}\label{app_expression}

First, \eqref{eq_opt_stop_solution1} follows directly from Theorem \ref{thm_optimal_stopping}.

The remaining task is to prove \eqref{eq_expression0} and \eqref{eq_expression}. According to \eqref{eq_opt_stop_solution1}, we have
\begin{align}
&W_{S_{i}+Y_i+Z_i}-W_{S_{i}} \nonumber\\
=&\left\{\begin{array}{l l}
W_{S_{i}+Y_i}-W_{S_{i}},&\text{if}~ |W_{S_{i}+Y_i}-W_{S_{i}}| \geq \sqrt{\beta};\\
\sqrt{\beta},&\text{if}~ |W_{S_{i}+Y_i}-W_{S_{i}}| < \sqrt{\beta}.
\end{array}\right. \nonumber
\end{align}
Hence, 
\begin{align}\label{eq_max_1}
\mathbb{E}[(W_{S_{i}+Y_i+Z_i}-W_{S_{i}})^4] = \mathbb{E}[\max(\beta^2,(W_{S_{i}+Y_i}-W_{S_{i}})^4)].
\end{align}
In addition, from \eqref{eq_zero} and \eqref{eq_expectation} we know that if  $|W_{S_{i}+Y_i}-W_{S_{i}}| \geq \sqrt{\beta}$
\begin{align}
\mathbb{E}[Z_i| Y_i] = 0;\nonumber
\end{align}
otherwise,
\begin{align}
\mathbb{E}[Z_i| Y_i] = \beta - (W_{S_{i}+Y_i}-W_{S_{i}})^2.\nonumber
\end{align}
Hence,
\begin{align}
\mathbb{E}[Z_i| Y_i] = \max[\beta - (W_{S_{i}+Y_i}-W_{S_{i}})^2,0].\nonumber
\end{align}
Using the law of iterated expectations, the strong Markov property of the Wiener process, and  Wald's identity $\mathbb{E}[(W_{S_{i}+Y_i}-W_{S_{i}})^2]=\mathbb{E}[Y_i]$, yields
\begin{align}\label{eq_max_2}
&\mathbb{E}[Z_i+ Y_i] \nonumber\\
=&\mathbb{E}[ \mathbb{E}[Z_i| Y_i] +Y_i]\nonumber\\
=&\mathbb{E}[ \max(\beta - (W_{S_{i}+Y_i}-W_{S_{i}})^2,0) +Y_i] \nonumber\\
=&\mathbb{E}[ \max(\beta - (W_{S_{i}+Y_i}-W_{S_{i}})^2,0) +(W_{S_{i}+Y_i}-W_{S_{i}})^2] \nonumber\\
=&\mathbb{E}[ \max(\beta,(W_{S_{i}+Y_i}-W_{S_{i}})^2)]. 
\end{align}
Finally, because $W_t$ and $W_{S_{i}+t}-W_{S_{i}}$ are of the same distribution,  \eqref{eq_expression0} and \eqref{eq_expression} follow from \eqref{eq_max_2} and \eqref{eq_max_1}, respectively. }

\end{document}